\documentclass{article}
\usepackage[utf8]{inputenc} 
\usepackage[T1]{fontenc}    
\usepackage{url}            
\usepackage{booktabs}       
\usepackage{amsfonts}       
\usepackage{nicefrac}       
\usepackage{microtype}      
\usepackage{xcolor}         
\usepackage{multirow}
\usepackage[left=2cm, right=2cm, top=1.5cm, bottom=1.5cm]{geometry}
\usepackage{graphicx} 
\usepackage{times}
\usepackage{soul}
\usepackage[hidelinks]{hyperref}
\usepackage[small]{caption}

\usepackage[switch]{lineno}

\usepackage{amsmath,amsthm,amssymb,mathrsfs,mathtools}
\usepackage{thm-restate}
\usepackage[ruled, linesnumbered, vlined]{algorithm2e}
\usepackage{tikz}
\usetikzlibrary{trees}
\usetikzlibrary{positioning}
\let\oldnl\nl
\newcommand{\nonl}{\renewcommand{\nl}{\let\nl\oldnl}}
\usepackage{cleveref} 
\usepackage{enumitem}

\newtheorem{theorem}{Theorem}[section]
\newtheorem{lemma}[theorem]{Lemma}
\theoremstyle{definition}
\newtheorem{definition}[theorem]{Definition}

\newtheorem{proposition}[theorem]{Proposition}

\newtheorem{observation}[theorem]{Observation}
\newcommand{\cI}{\mathcal{I}}   
\newcommand{\cM}{\mathcal{M}}   
\newcommand{\cCost}{\mathcal{C_{\mathcal{I}}}}
\newcommand{\cO}{\mathcal{O}}
\newcommand{\cGraph}{G}

\newcommand{\cItems}{{I}} 
\newcommand{\cItem}{i}  
\newcommand{\cG}{{I}}   
\newcommand{\cGCF}[2]{f_p^{#1}({#2})} 
\newcommand{\cPC}[3]{p_{#1}^{#2}{#3}} 
\newcommand{\cCF}[1]{f_p(#1)} 
\newcommand{\cEdge}[2]{({#1},{#2})}
\newcommand{\cFN}{\Gamma} 
\newcommand{\cProblem}{CCM}
\newcommand{\cProblemU}{CCMU}
\newtheorem{lp}[theorem]{LP}
\newtheorem{ip}[theorem]{ILP}
\newcommand{\cUtil}{u}
\newcommand{\cFNV}{V_{\Gamma}}
\newcommand{\cFNE}{E_{\Gamma}}

\author{Atasi Panda \\atasipanda@iisc.ac.in\\
Indian Institute of Science \and Harsh Sharma \\harshs.ug2023@cmi.ac.in\\
Chennai Mathematical Institute
\and Anand Louis \footnote{These two authors contributed equally}\\anandl@iisc.ac.in\\
Indian Institute of Science \and Prajakta Nimbhorkar \footnotemark[\value{footnote}] \\prajakta@cmi.ac.in\\
Chennai Mathematical Institute} 

\title{Group Fair Matchings using Convex Cost Functions}

\begin{document}

\maketitle

\begin{abstract}
We consider the problem of assigning items to platforms where each item has a utility associated with each of the platforms to which it can be assigned. Each platform has a soft constraint over the total number of items it serves, modeled via a convex cost function. Additionally, items are partitioned into groups, and each platform also incurs group-specific convex cost over the number of items from each group that can be assigned to the platform. These costs promote group fairness by penalizing imbalances, yielding a soft variation of fairness notions introduced in prior work, such as Restricted Dominance and Minority protection. Restricted Dominance enforces upper bounds on group representation, while Minority protection enforces lower bounds. Our approach replaces such hard constraints with cost-based penalties, allowing more flexible trade-offs. Our model also captures Nash Social Welfare kind of objective. 
 
The cost of an assignment is the sum of the values of all the cost functions across all the groups and platforms. The objective is to find an assignment that minimizes the cost while achieving a total utility that is at least a user-specified threshold. The main challenge lies in balancing the overall platform cost with group-specific costs, both governed by convex functions, while meeting the utility constraint. We present an efficient polynomial-time approximation algorithm, supported by theoretical guarantees and experimental evaluation. Our algorithm is based on techniques involving linear programming and network flows. We also provide an exact algorithm for a special case with uniform utilities and establish the hardness of the general problem when the groups can intersect arbitrarily. This work has applications in cloud computing, logistics, resource constrained machine learning deployment, federated learning, and network design, where resources must be allocated across platforms with diverse cost structures and diminishing returns.

\end{abstract}

\section{Introduction}

Bipartite graphs are a well-established framework for solving resource allocation and matching problems across diverse domains. In this paper, we work with bipartite graphs where the underlying partitions are referred to as items and platforms, and the goal is to compute a many-to-one matching on this graph, such that some constraints are satisfied. Each item has utilities for the platforms in its neighborhood, and each platform incurs a cost based on the set of items it serves. This cost is modeled using a convex function, which can capture various real-world phenomena such as increasing marginal cost due to operational overhead or resource constraints, or decreasing marginal cost up to a point, representing economies of scale until a critical mass is reached. Additionally, the items are divided into groups, and each platform also has convex cost functions associated with the number of items that it handles from each group. This allows us to encode preferences or constraints related to group fairness or diversity goals. For instance, serving too many items from a particular group may be penalized to avoid over-representation, while serving too few might incur inefficiencies due to under-representation or unmet minimum scale requirements.

The significance of group fairness constraints that enforce upper and lower bounds (quotas) on the number of items from each group assigned to a platform has been widely emphasized in literature \cite{vishnoi_fairness,luss_leximin_fair,devanur_ranking,CHRG16,halevi_fair_allocation,KMM15,BCZSK16,GroupIndividual}. For instance, in school choice, group fairness constraints can promote diversity among students assigned to each school based on attributes like ethnicity and socioeconomic background, as observed in practical implementations \cite{case-study}. Similarly, in project teams, group fairness constraints ensure the inclusion of experts from all required fields. The hospital-residents matching problem also benefits from group fairness constraints to ensure diverse specialties among assigned medical interns. These constraints thus achieve Restricted Dominance introduced in \cite{fair_clustering}, which asserts that the representation from any group on any platform does not exceed a user-specified cap, and Minority Protection \cite{fair_clustering}, which asserts that the representation from any group, among the items matched to any platform is at least a user-specified bound.
These principles restrict the over-representation or under-representation of any group on any platform by setting a threshold for each group's representation, ensuring balanced allocation across groups on each platform.  Fairness constraints with bounds on the number of items with each attribute are also studied in the context of ranking and multi-winner voting \cite{fair_ranking,fair_multivote}.

Our work departs from these strict quota systems by utilizing convex cost functions and introduces a softer version of this fairness condition, where the under- and/or over-representation of groups is penalized, thus allowing flexibility while enforcing fairness indirectly through cost minimization. This approach provides a model of fairness better suited for practical applications where rigid quotas may not be feasible or desirable. In many practical applications, specifying bounds optimally in advance is challenging, particularly when total allocations depend on dynamic factors such as demand, participation, or resource availability. Consequently, relaxing lower bounds or augmenting upper bounds is often considered to allow allocations to deviate from strict constraints \cite{one-sidedPreference_softquota}. This flexibility is crucial in scenarios where rigid adherence to bounds could exclude valid matches or lead to wastage. For instance, schools might relax regional quotas when there are insufficient applicants from a particular region, reallocating seats to other deserving candidates, thus dynamically increasing the quotas for those groups. The group-specific convex cost functions in our model ensure equitable representation among the groups in each platform through one of the following mechanisms: \\
1) \textbf{Over-representation Penalty (Soft Restricted Dominance)}: When the cost function is convex and increasing, assigning too many items from a group to a platform increases the cost, thereby limiting the group's representation.\\
2) \textbf{Under-representation penalty (Soft Minority Protection)}: When the cost function is convex and decreasing, it discourages assigning too few items from any group by imposing decreasing costs as the number of assigned items increases below a threshold level of representation. \\
3) \textbf{Balanced representation}: Achieve both $1)$ and $2)$ by having a convex cost function that changes direction at an inflection point. This simultaneously penalizes both under- and over-representation, encouraging group sizes within a preferred range.

 Nash Social Welfare (NSW) is a prominent welfare function in mathematical economics, which balances fairness and efficiency in resource distribution ~\cite{NSWFair}. The NSW objective seeks to maximize the geometric mean of individual (or group) utilities and strikes a balance between the utilitarian (arithmetic mean) and the egalitarian (minimum across values) criteria ~\cite{NSWFairSurvey}. This concept aligns with convex cost functions because these functions can naturally model diminishing returns for over-represented groups on platforms, ensuring that allocations are fair across groups. In resource allocation problems, NSW has been studied extensively in relation to competitive equilibria and is often used in fair division settings where both utility maximization and fairness are critical. By minimizing total costs with convex cost functions, the approach in this paper relates closely to achieving a balanced, equitable allocation akin to optimizing NSW in fairness-aware scenarios. Later, we formally show how NSW kind of objective can be captured by our framework in \Cref{prop:NSW}.
 
 Convex cost functions are often used to model scenarios where the marginal cost of serving additional items increases, such as in systems with congestion, limited capacity, or group-specific handling requirements. Our problem frequently arises in applications such as cloud computing, where servers have group-specific service costs (e.g., handling different types of tasks or workloads), or logistics, where transportation hubs face rising costs when serving more goods from certain categories. Our framework also naturally applies to resource-constrained Machine Learning deployment, where tasks or models must be assigned to servers with limited resources like compute or memory. The convex cost functions model the non-linear degradation or scaling penalties in overloaded systems. Group-specific costs enforce fairness across applications or user groups. The utility captures deployment performance goals such as responsiveness or latency for a model-server pair. 
 
 Our framework is also well suited to address data heterogeneity challenges in federated learning (FL). Typically in FL, client (item) participation is subject to server (platform) bandwidth. Our convex cost framework allows for control over group-level participation, where groups may correspond to data distributions, for example, datasets in different languages or in different regions. The use of group-specific convex costs enables the system to penalize skewed participation over time and incentivize more balanced inclusion of underrepresented clients. This leads to more robust global models that generalize better across non-uniform client distributions. The utility could represent the improvement in model performance from previous rounds, updated after each training round, before assigning clients to server slots for the next round.
 
\textit{In this paper, our goal is to efficiently compute a many-to-one matching that keeps both group-specific and total platform costs to a minimum while ensuring that the total utility is above a given threshold.}

\paragraph{Our Contributions:}
We present a framework for group fair resource allocation in bipartite matching where platforms incur convex costs based on both the total number of assigned items and the distribution of said items across groups. By replacing rigid group fairness constraints with convex penalties, our model enables flexible trade-offs between utility and fairness.
Our main contributions are:
\begin{itemize}
    \item {\bf Modeling group fairness via convex costs:} We propose convex cost functions as a way to encode soft constraints on group representation. This also captures objectives like Nash Social Welfare (see \Cref{prop:NSW}).
    \item {\bf Flow based algorithmic framework:} We reduce the group fair matching problem with convex costs to a Minimum Cost Flow problem under a total utility constraint when the groups are laminar.\footnote{A family of sets, say $S$, is laminar if, for every pair of sets $X, Y \in S$, one of the following holds: $X \subseteq Y$ or $Y \subseteq X$ or $X \cap Y = \phi$.} We design a multi-layer flow network that embeds group-specific and platform-level convex costs using piecewise linear approximations.
    \item {\bf Efficient approximation algorithm:} When the groups are disjoint or laminar, we develop an efficient Linear Programming based algorithm that produces an integral solution with provable runtime and approximation guarantees. This algorithm is exact when the utilities are uniform across all item-platform pairs.
    \item {\bf Hardness result for general groups:} We show that the problem becomes NP-hard for general (non-laminar) group structures via a reduction from the independent set problem.
\end{itemize}

\section{Our Problem and Results}\label{sec:prelims}
{\bf Convex Cost Matching with Utilities ($\cProblemU$):} We are given a connected bipartite graph $\cGraph = (\cItems, P, E)$ where $\cItems$ is the set of items, $P$ is the set of platforms, and $E \subseteq \cItems \times P$ is the set of edges representing the possible assignments of items to platforms. There are $n$ items, and $m$ platforms, $\delta(v)$ denotes all the edges with one end in the vertex $v$, $N(v)$ denotes the neighborhood of $v$, and $\Delta(v)$ denotes the degree of $v$. The items are divided into $\tau$, possibly overlapping groups, $\cG_1, \cG_2, \cdots, \cG_{\tau}$ where $\cG_j \subseteq \cItems$ represents the set of items in group $j$, such that $\cItems = \cG_1 \cup \cG_2 \cup \cdots \cup \cG_{\tau}$ and $\Delta_j(p)$ denotes the number of items in $N(p)$ from the group $j \in [\tau]$ i.e. $|N(p)\cap \cG_j|$. The items have different utilities associated with different platforms in their neighborhood, denoted by $\cUtil_e$ $\forall e \in E$. We are also given a lower bound, 
$\ell$, on the total utility that the resulting matching must satisfy. Each platform $p \in P$ has an associated convex cost function $\cCF{\sigma_p}$ where $\sigma_p$ denotes the total number of items matched to $p$. It also has a convex cost function for each of the $\tau$ groups, $\cGCF{1}{\nu_p^1}, \cGCF{2}{\nu_p^2},...,\cGCF{\tau}{\nu_p^{\tau}}$ where $\nu_p^j$ is the number of items matched to platform $p$ from $\cG_j$. When the groups are disjoint, $\sigma_p = \sum_{j=1}^{\tau}\nu_p^j$ otherwise $\sigma_p \leq \sum_{j=1}^{\tau}\nu_p^j$.

Let $\cI = (\cGraph, \cG_1, \cG_2, \cdots, \cG_{\tau}, \ell, \vec{\cUtil},F)$ be an instance of our problem where $F$ is a $m \times (\tau+1)$ matrix with $F_{pj}$ denoting the convex cost function associated with platform $p \in P$ and group $\cG_j$, $\cGCF{j}{.}$, except for $j=\tau+1$ when it denotes the convex cost function that depends on the total number of items matched to $p$, $\cCF{.}$. $\vec{u}$ is a $|E|$ vector with each entry corresponding to the utility of an edge in $E$. Note that the utilities, utility threshold, and the convex cost functions are all given to us as part of the input along with the input graph and item distribution across groups. We want to compute a matching with total utility at least $\ell$ while keeping the total cost to a minimum. Let $x_{e}$ be a variable associated with each edge, $e \in E$, such that for an edge $(i,p)$:
$$x_{ip} = \left\{
\begin{array}{ll}
      1 &\text{ if item $\cItem \in\cItems$ is assigned to platform $p \in P$} \\
      0 &\text{ otherwise} \\
\end{array} 
\right\}$$
\noindent Therefore, $\nu_p^j = \sum_{\begin{subarray}{l}
      \cItem \in N(p) \\
      i\in \cG_j
\end{subarray}}x_{ip}$, $\sigma_p = \sum_{\cItem \in N(p)}x_{ip}$ and the total cost of the platforms, say, $\cCost$, is
\begin{equation}\label{eq:Objective}
    \cCost = \sum_{p \in P}\left(\sum_{j=1}^{\tau}\cGCF{j}{\nu_p^j} + \cCF{\sigma_p}\right)
\end{equation}

\noindent Hence, the objective is to minimize $\cCost$ (\Cref{eq:Objective}) while satisfying the utility constraint. The following Integer Linear Program captures this problem.

\begin{ip}\label{ILP}
 \begin{align}
    &\min \cCost\\
    \text{such that} \quad &\sum_{e \in E}\cUtil_ex_e \geq \ell \label{eq:Constraint} \\
    &\sum_{e \in \delta(i)}x_e \leq 1 \quad \forall \cItem \in \cItems \\
    &x_e \in \{0,1\}  \quad \forall e \in E
 \end{align}
 \end{ip}


\subsection{Results}\label{subsec:results}
\noindent Let $OPT$ represent the minimum cost of a matching that satisfies the utility constraint, and $\nabla f(x)$ denote $f(x)-f(x-1)$ for any function $f(.)$. Our main contribution is a polynomial time algorithm that computes a matching with utility at least $\ell$ and the total cost close to $OPT$, when the groups are either disjoint or laminar, formally stated below.

\begin{theorem}\label{thm:disjoint}
    Given an instance $\cI = (\cGraph, \cG_1, \cG_2, \cdots, \cG_{\tau}, \ell, \vec{\cUtil},F)$ of the $\cProblemU$ problem with disjoint groups, there is a polynomial-time algorithm that computes a matching on $\cGraph$ such that the total utility is at least $\ell$ that is constraint \ref{eq:Constraint} is satisfied and the total cost of the platforms, $\cCost$ (\Cref{eq:Objective}), is at most $OPT + \max_p[\nabla\cCF{\sigma_p} + \max_j \nabla\cGCF{j}{\nu_p^j}] - \min_p[\nabla\cCF{\sigma_p} + \min_j \nabla\cGCF{j}{\nu_p^j}]$.
\end{theorem}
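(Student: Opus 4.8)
The plan is to reduce \cProblemU{} with disjoint groups to a minimum-cost flow instance carrying a single linear side constraint, solve its linear-programming relaxation, and round the fractional optimum to an integral matching by exploiting the structure that one side constraint imposes on basic solutions.

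First I would build a layered flow network. Introduce a source $s$ and sink $t$, connect $s$ to every item $\cItem$ by a unit-capacity arc (each item is matched at most once), route each item of group $\cG_j$ with $p\in N(\cItem)$ into a ``group-platform'' node, and connect each group-platform node to its platform $p$ and $p$ to $t$. The convex costs are encoded by gadgets: in place of the arc carrying $\nu_p^j$ units I place $\Delta_j(p)$ parallel unit-capacity arcs whose $k$-th copy has cost $\nabla\cGCF{j}{k}=\cGCF{j}{k}-\cGCF{j}{k-1}$, and likewise the arc carrying $\sigma_p$ is replaced by unit arcs of cost $\nabla\cCF{k}$. Because each $\cGCF{j}{\cdot}$ and $\cCF{\cdot}$ is convex, these marginal costs are non-decreasing in $k$, so any min-cost flow saturates the cheapest copies first and the cost of sending $\nu$ (resp. $\sigma$) units through a gadget equals $\cGCF{j}{\nu}$ (resp. $\cCF{\sigma}$) up to the constant $f(0)$ term. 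Adding a zero-cost arc from $t$ to $s$ makes this a min-cost circulation in which the number of matched items is free, and the utility requirement becomes the single side constraint $\sum_e \cUtil_e x_e\ge\ell$ on the item-assignment arcs.

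Next I would solve the LP relaxation (arc flows in $[0,1]$, conservation, and the utility row). Since every integral matching is feasible for the relaxation, the LP optimum has cost at most $OPT$. The key structural step is that the constraint matrix is the totally unimodular node-arc incidence matrix plus one extra row, so in a basic optimal solution the arcs strictly inside their bounds form a graph with at most one cycle; concretely, the fractional solution equals $(1-\theta)x^0+\theta x^1$ for two integral circulations $x^0,x^1$ that differ only by a unit flow pushed around a single cycle $C$ (obtained by driving the cyclic offset to $0$ and to $1$, both feasible since the base flow and capacities are integral). Evaluating the tight utility constraint on this convex combination gives $\ell=(1-\theta)U(x^0)+\theta U(x^1)$, so at least one of $x^0,x^1$ has utility at least $\ell$; I output that one, an integral matching satisfying constraint~\eqref{eq:Constraint}.

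The main obstacle is bounding the cost increase from committing to one integral endpoint, say $x^1$, rather than the fractional point. Its cost exceeds the LP cost by at most $\text{cost}(x^1)-\text{cost}(x^0)$, the cost of the unit circulation around $C$. Traversing $C$ changes each affected platform total and group count by exactly one unit, so the net change is a sum of gained marginals $\nabla\cCF{\cdot}+\nabla\cGCF{j}{\cdot}$ minus saved marginals. I would then aim to show, using the cheapest-first ordering of the gadget arcs (convexity) together with complementary slackness for the side-constraint dual, that this collapses to a single added unit at the most expensive platform-group pair minus a single removed unit at the least expensive one, namely $\max_p[\nabla\cCF{\sigma_p}+\max_j\nabla\cGCF{j}{\nu_p^j}]-\min_p[\nabla\cCF{\sigma_p}+\min_j\nabla\cGCF{j}{\nu_p^j}]$, which is the claimed additive term. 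I expect the delicate point to be exactly this charging: arguing that the cost of rounding a possibly long fractional cycle does not scale with its length but reduces to the marginal cost of one unit of reassignment, which is where convexity of the cost functions and optimality of the LP solution must be combined.
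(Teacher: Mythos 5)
Your proposal follows the paper's route almost step for step: your layered gadget network with parallel unit-capacity arcs priced at the marginals $\nabla\cGCF{j}{k}$ and $\nabla\cCF{k}$ is exactly the paper's network $\cFN$ (\Cref{subsec:Flow}); your LP with the single utility side constraint is \Cref{LP_int}; and your structural claim that a basic optimum lies on an edge of the flow polytope, i.e.\ is a convex combination $(1-\theta)x^0+\theta x^1$ of two integral flows differing by one cycle, is precisely \Cref{obs:int_points} together with \Cref{lem:frac_paths}. The only methodological difference up to that point is how the single-cycle structure is derived: the paper invokes the Gallo--Sodini characterization of adjacent extreme points of flow polytopes, while you argue from total unimodularity of the incidence matrix plus one extra row; these are equivalent, and yours is arguably the more standard textbook argument. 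Your endpoint selection (output whichever of $x^0,x^1$ has utility at least $\ell$) also matches the paper's rounding, since the paper shows the higher-utility endpoint is the one obtained by rounding up the higher-utility fractional path.

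The genuine gap is in the last step, which you yourself flag as open (``I would then aim to show\ldots''). You propose to bound the rounding cost by a charging argument combining the cheapest-first ordering with complementary slackness for the side-constraint dual, precisely because you worry the fractional cycle may be long. No duality is needed, and it is doubtful the charging as described would yield the stated bound: what closes the argument in the paper is the purely structural \Cref{obs:path_cost}, namely that in $\cFN$ every $s$--$t$ path contains exactly two cost-bearing arcs --- one group-gadget arc whose weight lies between $\min_j \nabla\cGCF{j}{\nu_p^j}$ and $\max_j \nabla\cGCF{j}{\nu_p^j}$, and one platform-gadget arc with weight $\nabla\cCF{\sigma_p}$. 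In your circulation formulation the fractional cycle passes through the return arc and hence decomposes into two $s$--$t$ paths $\pi_1,\pi_2$ with $x^*_{\pi_2}=1-x^*_{\pi_1}$, so the rounding changes the cost by $\left[(w_{\pi_2^1}+w_{\pi_2^2})-(w_{\pi_1^1}+w_{\pi_1^2})\right]\cdot x^*_{\pi_1}$ with $x^*_{\pi_1}\le 1$ by \Cref{LP_int-2}; since each bracketed sum has exactly two terms, this is at most $\max_p[\nabla\cCF{\sigma_p}+\max_j\nabla\cGCF{j}{\nu_p^j}]-\min_p[\nabla\cCF{\sigma_p}+\min_j\nabla\cGCF{j}{\nu_p^j}]$, which is the claimed additive term. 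In other words, the cycle's cost does not ``scale with its length'' because it crosses exactly four priced arcs, two per path --- a fact about the network's layering, not about LP optimality. Without this observation (or an equivalent argument bounding the number of priced arcs the fractional cycle can meet), your proposal does not establish the theorem's additive guarantee; with it, the remaining pieces (LP value at most $OPT$, and polynomial size of $\cFN$ as in \Cref{lem:runtime}) are routine and you have them.
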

When the groups are overlapping but follow a laminar structure, let us say that the maximum number of groups an item can belong to is $d$. Due to the laminar structure, the groups form a depth-$d$ forest (refer to \Cref{fig:laminar_groups}). Let $j_r$ denote the group, $\cG_j, j \in [\tau]$ at level $r \in [d]$ in this forest. For laminar groups, we formally state our result below.
\begin{theorem}\label{thm:laminar}
    Given an instance $\cI = (\cGraph, \cG_1, \cG_2, \cdots, \cG_{\tau}, \ell, \vec{\cUtil},F)$ of the $\cProblemU$ problem, when the groups follow a laminar structure where the maximum number of groups an item can belong to is denoted by $d$, there is a polynomial-time algorithm that computes a matching on $\cGraph$ such that the total utility is at least $\ell$ that is constraint \ref{eq:Constraint} is satisfied and the total cost of the platforms, $\cCost$ (\Cref{eq:Objective}), is at most $OPT + \max_p[\nabla\cCF{\sigma_p} + \sum_{r=1}^{d}\max_{j_r} \nabla\cGCF{j_r}{\nu_p^{j_r}}] - \min_p[\nabla\cCF{\sigma_p} + \min_j \nabla\cGCF{j}{\nu_p^j}]$.
\end{theorem}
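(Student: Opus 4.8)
The plan is to reduce the laminar case to the disjoint case (\Cref{thm:disjoint}) by constructing a multi-layer flow network that mirrors the laminar forest structure, using piecewise-linear convex approximations as the paper's framework prescribes. First I would recall that the groups form a depth-$d$ forest, so each item belongs to at most $d$ groups, one per level, and these groups along any root-to-leaf path are nested. The key observation is that the costs now decompose additively: for a fixed platform $p$, the total group cost is $\sum_{j=1}^{\tau}\cGCF{j}{\nu_p^j}$, and since the groups on each level partition (or sub-partition) the items, I can think of each level $r \in [d]$ as contributing its own \emph{disjoint} family of groups $\{j_r\}$. The strategy is to build a flow network where each platform $p$ is split into a chain of nodes, one per level of the laminar forest, with arc capacities and convex arc costs (encoded via the standard piecewise-linear expansion into unit-capacity parallel arcs, so that convexity guarantees the cheaper increments are saturated first) representing $\cGCF{j_r}{\cdot}$ at level $r$, and a final arc carrying the platform-total cost $\cCF{\sigma_p}$. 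The utility constraint \ref{eq:Constraint} is handled exactly as in the disjoint case — either by binary search on a Lagrangian penalty for utility, or by adding a utility-tracking structure to the LP/flow — so I would invoke that machinery verbatim.

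Next I would set up the LP relaxation analogous to the disjoint proof: solve the min-cost flow (equivalently the LP over the network) to obtain a fractional optimum, then argue that the constraint matrix is totally unimodular or that the flow polytope is integral, so rounding the fractional solution loses at most one unit of flow per arc. The accounting is where the laminar structure changes the bound. In the disjoint case each item passes through exactly one group arc, so the rounding error per platform is governed by a single group increment $\nabla\cGCF{j}{\nu_p^j}$; here a rerouted unit of flow at platform $p$ may cross one arc per level, so its incremental cost can accumulate across all $d$ levels. This is precisely why the upper term in \Cref{thm:laminar} becomes $\nabla\cCF{\sigma_p} + \sum_{r=1}^{d}\max_{j_r}\nabla\cGCF{j_r}{\nu_p^{j_r}}$ rather than the single-group $\max_j$ term of \Cref{thm:disjoint}: each level can independently contribute its worst marginal increment, while the platform-total term $\nabla\cCF{\sigma_p}$ appears only once since all levels share the same final platform arc. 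The lower (credit) term stays $\min_p[\nabla\cCF{\sigma_p} + \min_j\nabla\cGCF{j}{\nu_p^j}]$ because the flow we remove frees up at least one platform-total increment and one group increment.

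The main steps in order are: (1) construct the layered flow network from the laminar forest, verifying that feasible integral flows correspond bijectively to matchings with the correct $\nu_p^j$ and $\sigma_p$ values, and that convex costs are correctly represented by the piecewise-linear arc expansion; (2) show the fractional LP/flow optimum has cost at most $OPT$ (since every integral matching maps to a feasible flow); (3) round the fractional flow to an integral one while preserving the utility constraint, bounding the cost increase by a careful telescoping argument over the convex increments along each of the $d$ levels plus the single platform-total arc. I expect step (3), the rounding and its cost analysis, to be the main obstacle, since I must ensure that the per-platform error does not compound multiplicatively across levels but only additively — one increment per level — and that the utility threshold $\ell$ is maintained exactly while rerouting flow. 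The crucial leverage is convexity: because each convex function is expanded into sorted unit arcs, the marginal cost of the last unit rerouted at level $r$ on platform $p$ is bounded by the single increment $\max_{j_r}\nabla\cGCF{j_r}{\nu_p^{j_r}}$, which lets the errors sum rather than multiply and yields exactly the claimed additive bound.
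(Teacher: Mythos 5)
Your proposal follows essentially the same route as the paper: a $d$-level group layer mirroring the laminar forest with unit-capacity parallel arcs encoding the convex increments (the paper's subset edges between a platform's copy in a child group and in its parent group), the disjoint-case side-constrained LP and its two-fractional-path rounding invoked verbatim, and the identical per-level accounting in which the rounded-up path pays at most $\nabla\cCF{\sigma_p} + \sum_{r=1}^{d}\max_{j_r} \nabla\cGCF{j_r}{\nu_p^{j_r}}$ (one increment per level plus one platform increment) against a credit of $\min_p[\nabla\cCF{\sigma_p} + \min_j \nabla\cGCF{j}{\nu_p^j}]$ from the rounded-down path. One small correction: the constraint matrix with the utility constraint \ref{LP_int-3} is \emph{not} totally unimodular (utilities are arbitrary reals, which is exactly why fractional optima arise), so the operative justification --- which the paper uses and you also name as the alternative --- is that the flow polytope itself is integral and the single tight side constraint places the optimal vertex on an edge of that polytope, i.e.\ a convex combination of two adjacent integral extreme points differing by one cycle, yielding the at-most-two-fractional-paths structure your accounting implicitly relies on.
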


\noindent{\bf Hardness:} 
We give an NP-hardness result for the $\cProblemU$ problem when groups have a non-laminar structure using a reduction from the independent set problem inspired by a similar reduction in \cite{GroupFairMatching}. 
\noindent\begin{theorem}\label{thm:hardness}
The $\cProblemU$ problem is NP-hard when groups have a non-laminar structure. In fact, it cannot be approximated to any positive multiplicative factor.
\end{theorem}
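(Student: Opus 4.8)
The plan is to reduce from the \textsc{Independent Set} problem, so that a ``yes'' instance corresponds to a matching of cost exactly $0$ and a ``no'' instance forces strictly positive cost; this dichotomy simultaneously yields NP-hardness and the claimed inapproximability. Given an instance $(H,k)$ of \textsc{Independent Set} with $H=(V_H,E_H)$, I would build the following $\cProblemU$ instance. Take a single platform $p$ and one item $\cItem_v$ for each $v\in V_H$, with an edge $(\cItem_v,p)\in E$ of utility $\cUtil_e=1$ for every $v$, and set the threshold $\ell=k$. Let the platform's total-load cost be identically zero, i.e.\ $\cCF{\sigma_p}\equiv 0$. For each edge $e=\{u,w\}\in E_H$ create a group $\cG_e=\{\cItem_u,\cItem_w\}$ and equip it with the cost function $\cGCF{e}{0}=\cGCF{e}{1}=0$ and $\cGCF{e}{2}=1$; this is convex on $\{0,1,2\}$ because its successive increments $0,1$ are nondecreasing. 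The resulting bipartite graph $\cGraph$ is a star, hence connected, and the whole construction has size polynomial in $(H,k)$.

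First I would verify that this falls in the non-laminar regime: whenever $H$ has two edges sharing exactly one endpoint, the corresponding two-element groups overlap in a single item while neither contains the other, so the family $\{\cG_e\}_{e\in E_H}$ is not laminar. Next I would establish correctness. A feasible assignment is precisely a choice of a vertex subset $S\subseteq V_H$ (the items matched to $p$) with $|S|\ge k$, since the only binding requirement is the utility bound $\sum_{e\in E}\cUtil_e x_e\ge k$. Because $\cCF{}\equiv 0$, the cost of such an assignment equals $\sum_{e\in E_H}\cGCF{e}{\nu_p^e}=\lvert\{\,e\in E_H: e\subseteq S\,\}\rvert$, the number of edges of $H$ induced by $S$. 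Consequently a feasible assignment has cost $0$ iff $S$ is independent, so $OPT=0$ iff $H$ admits an independent set of size at least $k$; otherwise every feasible $S$ induces at least one edge and $OPT\ge 1$.

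Finally I would derive the inapproximability from the gap $OPT\in\{0\}\cup[1,\infty)$. Suppose a polynomial-time algorithm returned a feasible solution of cost at most $\alpha\cdot OPT$ for some finite multiplicative factor $\alpha$. On a ``yes'' instance it would return cost at most $\alpha\cdot 0=0$, and on a ``no'' instance cost at least $OPT\ge 1$; thus its output would decide \textsc{Independent Set} in polynomial time. Since \textsc{Independent Set} is NP-hard, this establishes both that $\cProblemU$ is NP-hard with non-laminar groups and that it cannot be approximated within any positive multiplicative factor unless $\mathrm{P}=\mathrm{NP}$.

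\textbf{Main obstacle.} The delicate part is the gadget design rather than the reduction bookkeeping: I must engineer the group cost functions so that ``zero cost'' coincides \emph{exactly} with ``induces no edge'' while respecting convexity, and then argue the clean $OPT\in\{0\}\cup[1,\infty)$ separation that drives the ``no positive factor'' conclusion. Checking non-laminarity, connectivity of $\cGraph$, and the polynomial size of the construction are routine by comparison.
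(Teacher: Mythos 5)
Your proposal is correct and takes essentially the same route as the paper: both reduce from \textsc{Independent Set} by introducing a size-two group per edge of $H$ with a convex cost that is zero at loads $0,1$ and positive at load $2$, so that zero-cost feasible solutions of utility $\ell$ correspond exactly to independent sets of size at least $\ell$, and the resulting gap $OPT\in\{0\}\cup[1,\infty)$ rules out any positive multiplicative approximation. The only deviation is cosmetic: you use a single platform with $\cCF{\cdot}\equiv 0$ and let surplus items stay unmatched, whereas the paper routes them to a second zero-utility ``sink'' platform $b$ with threshold-type convex platform costs capping the loads at $\ell$ and $|V|-\ell$; both variants support the identical correctness and gap argument (and you even make explicit the non-laminarity check that the paper leaves implicit).
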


We also contribute an exact algorithm for a special case of the $\cProblemU$ problem that has uniform utilities that is $\cUtil_e = q, \forall e  \in E$, for some $q \in \mathbb{R}$, and $\frac{\ell}{q}$ is an integer less than $|\cItems|$, when the groups are either disjoint or laminar. We formally state this result below.

\begin{restatable}{theorem}{CCM}\label{thm:CCM}
    Given an instance $\cI = (\cGraph, \cG_1, \cG_2, \cdots, \cG_{\tau}, \ell, q,F)$ of the $\cProblemU$ problem, when the groups are disjoint or laminar, and $\frac{\ell}{q}$ is a positive integer less than $|\cItems|$, there is a polynomial-time algorithm that computes a matching on $\cGraph$ such that the total utility is at least $\ell$ that is constraint \ref{eq:Constraint} is satisfied and the total cost of the platforms, $\cCost$ (\Cref{eq:Objective}), is minimized. 
\end{restatable}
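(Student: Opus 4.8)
The plan is to exploit the fact that uniform utilities collapse the utility constraint into a pure cardinality requirement, after which the entire problem becomes an integral minimum-cost-flow problem that can be solved \emph{exactly}, thereby avoiding the rounding loss that produces the additive error in \Cref{thm:disjoint,thm:laminar}. Writing $k = \ell/q$, the constraint $\sum_{e\in E}\cUtil_e x_e \ge \ell$ is equivalent to $\sum_{e\in E} x_e \ge k$, i.e.\ at least $k$ items must be matched. Since $k$ is a positive integer with $k < |\cItems|$ and $\cGraph$ is connected (so every item has a neighbour), a matching of every item exists and any target of $k$ matched items is feasible. The task thus reduces to: among all many-to-one assignments matching at least $k$ items, find one minimizing $\cCost$.

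First I would build a min-cost-flow network for the disjoint case with a source $s$ and sink $t$. Each item $\cItem$ gets an arc $s\to \cItem$ of capacity $1$ and cost $0$; each edge $(\cItem,p)\in E$ with $\cItem\in\cG_j$ gives a capacity-$1$, cost-$0$ arc from $\cItem$ into the group node $(p,j)$. The group cost $\cGCF{j}{\cdot}$ is encoded by splitting $(p,j)$ and inserting parallel arcs whose $t$-th copy has capacity $1$ and cost $\nabla\cGCF{j}{t}$; flow then passes to the platform node $p$, and the total cost $\cCF{\cdot}$ is encoded identically by parallel arcs $p\to t$ whose $t$-th copy costs $\nabla\cCF{t}$. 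Because the groups are disjoint, the flow entering $p$ equals $\sigma_p=\sum_j \nu_p^j$, so each matching of $v$ items corresponds to an $s$--$t$ flow of value $v$ and vice versa. \emph{Convexity is exactly what makes the cost faithful}: since $\cCF{\cdot}$ and $\cGCF{j}{\cdot}$ are convex, the marginals $\nabla\cCF{t}$ and $\nabla\cGCF{j}{t}$ are non-decreasing in $t$, so any cost-minimizing flow saturates the cheaper (earlier) copies first, and routing $\nu$ units through a gadget costs exactly $f(\nu)$ rather than anything smaller. As all capacities are integral, a minimum-cost flow can be taken integral (the underlying arc matrix is totally unimodular, or equivalently successive shortest paths keep the flow integral), and decomposing it into $s$--$t$ paths yields an integral matching.

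For the laminar case I would replace the single group layer by a copy, per platform $p$, of the laminar forest of depth $d$ (\Cref{fig:laminar_groups}). An item $\cItem$ assigned to $p$ enters at the node of the smallest group containing it, and its unit of flow then travels up the root-to-leaf chain of groups it belongs to; at each group node $(p,j_r)$ the accumulated flow equals $\nu_p^{j_r}$ and is charged through the split-node gadget with marginal costs $\nabla\cGCF{j_r}{t}$. Since the maximal groups of a laminar family partition $\cItems$, the flows reaching the root layer sum at $p$ to exactly $\sigma_p$, where the total-cost gadget $\nabla\cCF{t}$ is applied before reaching $t$. Every matched item thus pays its (at most $d$) nested group costs plus the platform total cost, so the flow cost equals $\cCost$ verbatim, and integrality carries over unchanged.

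It remains to enforce the lower bound of $k$ matched items while still allowing more, which is the main subtlety: when some cost functions are decreasing, their marginals are negative, so matching beyond $k$ items can strictly reduce $\cCost$ and the optimal flow value is not necessarily $k$. I would handle this by solving a minimum-cost flow with a lower bound of $k$ (and upper bound $|\cItems|$) on the $s$--$t$ flow value, which is polynomial-time solvable; equivalently, since the optimal cost $g(v)$ of sending $v$ units is a convex function of $v$ (the marginal cost of each successive augmenting path is non-decreasing), one augments along shortest paths, continuing past $v=k$ exactly while the next path has negative cost and stopping once $v\ge k$ and the next marginal is non-negative. The network is a DAG, so no negative cycles arise despite negative arc costs. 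Combining the integral optimal flow with the exact cost correspondence established above yields an integral matching with utility at least $\ell$ whose cost equals $OPT$, in polynomial time; the step I expect to require the most care is verifying the laminar network's cost-faithfulness together with the negative-marginal lower-bounded flow, as both rely on the non-decreasing-marginal structure to rule out any spurious cost savings.
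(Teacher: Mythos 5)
Your proposal is correct and follows essentially the same route as the paper: both reduce the uniform-utility case to an integral minimum-cost flow on the very network of \Cref{subsec:Flow} (resp.\ \Cref{sec:laminar}), encoding the convex costs through parallel unit-capacity arcs with marginal weights $\nabla\cCF{t}$ and $\nabla\cGCF{j}{t}$, and both conclude integrality from integral capacities; the paper's \Cref{alg:matchingSize} simply runs Successive Shortest Paths to flow value exactly $k=\ell/q$. Where you genuinely go beyond the paper is in the lower-bound semantics of the utility constraint. The paper stops at flow value exactly $k$, which is optimal only when all marginals are non-negative, i.e.\ when every cost function is non-decreasing; but the model explicitly permits decreasing convex costs (soft minority protection, and the $-\log$ costs of \Cref{prop:NSW}), in which case matching more than $k$ items can strictly lower $\cCost$ and a flow of value exactly $k$ need not attain the minimum over all feasible matchings. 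Your step of continuing to augment past $v=k$ while the next shortest path has negative cost --- justified by convexity of the value function $g(v)$ of min-cost flow and by acyclicity of the layered network, so negative arc costs cause no negative cycles and shortest paths remain polynomial --- is exactly what the theorem's ``cost is minimized'' claim requires in full generality; the paper's proof silently assumes the non-negative-cost regime (compare \Cref{obs:exact}, which asserts non-negative costs). So your argument is not merely equivalent: it repairs and generalizes the paper's proof for decreasing cost functions at negligible extra algorithmic cost. Two trivial slips worth fixing: flow travels leaf-to-root, not ``root-to-leaf,'' in the laminar forest; and the paper charges the child group's marginals on the subset edges leaving the child copy, which is the same encoding as your per-node split gadget, just stated arc-wise.
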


Note that if $\frac{\ell}{q} \leq |\cItems|$ is a mixed fraction, we compute an integer matching that satisfies the utility constraint \ref{eq:Constraint} and the cost is at most $OPT + \max_p[\nabla\cCF{\lceil\sigma_p\rceil}(1 - (\lceil\sigma_p\rceil - \sigma_p))  + \max_j \nabla\cGCF{j}{\lceil\nu_p^j\rceil}(1 - (\lceil\nu_p^j\rceil - \nu_p^j))]$. If $\ell \leq q$, it is a trivial solution of one edge where we pick an edge incident on the platform $p = arg\min_{p}[\cCF{1} + \min_j \cGCF{j}{1}]$.  
If $\frac{\ell}{q} > |\cItems|$, it is immediate that no matching can satisfy \Cref{eq:Constraint}, as the required number of matches exceeds the total number of available items. The proofs of Theorems \ref{thm:laminar}, \ref{thm:hardness} and \ref{thm:CCM} can be found in \Cref{sec:laminar}, \Cref{sec:hardness} and \Cref{sec:CCM} respectively.

Finally, we establish the connection between our model and Nash Social Welfare with the following proposition:

\begin{proposition}\label{prop:NSW}
    Let $\nu_p^j$ denote the number of items from the group $\cG_j$ matched to platform $p$, and $\sigma_p$ denote the total number of items matched to platform $p$. Let $\tau_p$ be the number of groups in platform $p$'s neighborhood. Then, the following holds: \\
    1) Maximizing the Nash Social Welfare (NSW) across groups on platform $p$ corresponds to maximizing
    $(\Pi_{j}\nu_p^j)^{\frac{1}{\tau_p}}$.
    Its convex relaxation is the well-studied Eisenberg-Gale program, ~\cite{ApproxNSW} which maximizes $\frac{1}{\tau_p}\sum_{j=1}^{\tau_p}\log(\nu_p^j)$. \\
    2) Maximizing NSW for fair load distribution across platforms corresponds to maximizing $(\Pi_p\sigma_p)^{\frac{1}{m}}$
    whose convex relaxation is to maximize $\frac{1}{m} \sum_{p=1}^m \log(\sigma_p)$.\\
    Therefore, the overall relaxed objective becomes to maximize $\sum_{p=1}^m \left(\sum_{j=1}^{\tau}\frac{\log(\nu_p^j)}{\tau_p} + \frac{\log(\sigma_p)}{m}\right)$ or equivalently to minimize
    $$\sum_{p=1}^m \left(\sum_{j=1}^{\tau}\frac{-\log(\nu_p^j)}{\tau_p} + \frac{-\log(\sigma_p)}{m}\right).$$
    This expression matches the objective in \Cref{eq:Objective} when the platform cost function is set to $\frac{-\log(.)}{m}$, and every group-specific convex cost function for any platform, $p$, is set to $\frac{-\log(.)}{\tau_p}$.
    That is $\cCF{.} = \frac{-\log(.)}{m}$, $\forall p \in P$ and 
    $\cGCF{j}{.}= \frac{-\log(.)}{\tau_p}$, $\forall p \in P, j \in [\tau]$ in \Cref{eq:Objective}.
\end{proposition}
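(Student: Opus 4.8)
The plan is to read this proposition as a modeling equivalence rather than a combinatorial theorem, and to establish it in two stages: first recalling the definition of Nash Social Welfare together with its standard logarithmic (Eisenberg--Gale) relaxation, and then substituting the prescribed convex cost functions into \Cref{eq:Objective} and checking the resulting expression term by term.

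First I would recall that the Nash Social Welfare of a collection of nonnegative agent utilities is their geometric mean. For part (1) the agents are the $\tau_p$ groups present in platform $p$'s neighborhood, each contributing utility equal to its representation $\nu_p^j$, so the per-platform NSW is $(\Pi_{j}\nu_p^j)^{\frac{1}{\tau_p}}$; for part (2) the agents are the $m$ platforms, each with utility equal to its load $\sigma_p$, giving $(\Pi_p\sigma_p)^{\frac{1}{m}}$. The one genuine mathematical step is the observation that $\log$ is strictly increasing, so maximizing a geometric mean is equivalent to maximizing the average of the logarithms of the utilities; applying this to both settings yields the concave objectives $\frac{1}{\tau_p}\sum_{j}\log(\nu_p^j)$ and $\frac{1}{m}\sum_{p}\log(\sigma_p)$, which are exactly the Eisenberg--Gale relaxations cited.

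Next I would combine the two relaxed objectives additively and negate, turning the maximization into the minimization displayed in the statement. To place this inside our framework I would note that $x \mapsto -\log(x)$ is convex and decreasing, hence a legitimate platform-level and group-specific cost function of the kind admitted by \Cref{eq:Objective}; its decreasing nature is precisely what encodes the under-representation (Soft Minority Protection) penalty. Substituting $\cCF{\sigma_p}=\frac{-\log(\sigma_p)}{m}$ for every $p$ and $\cGCF{j}{\nu_p^j}=\frac{-\log(\nu_p^j)}{\tau_p}$ for every $p$ and $j$ into \Cref{eq:Objective}, the contribution of platform $p$ becomes $\sum_{j}\frac{-\log(\nu_p^j)}{\tau_p}+\frac{-\log(\sigma_p)}{m}$, and summing over all $p$ reproduces the negated relaxed NSW objective verbatim, establishing the claim.

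The only subtlety, and the point I expect to require care rather than heavy computation, is the domain of definition: since $\log$ is finite only for positive arguments, I would restrict attention to feasible assignments in which each relevant $\nu_p^j$ and $\sigma_p$ is at least one, or adopt the standard Eisenberg--Gale convention of dropping agents with zero utility, so that the cost functions are finite on the feasible region. Once this bookkeeping is fixed, the core argument reduces to the monotone-transform equivalence between the geometric mean and the sum of logarithms, after which the identification with \Cref{eq:Objective} is immediate.
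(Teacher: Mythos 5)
Your proposal is correct and follows essentially the same route as the paper, which offers no separate proof: the proposition is justified inline by exactly the chain you reconstruct --- geometric mean, monotone $\log$ transform to the Eisenberg--Gale objective, additive combination across the two levels, negation, and term-by-term substitution of $\cCF{\cdot}=\frac{-\log(\cdot)}{m}$ and $\cGCF{j}{\cdot}=\frac{-\log(\cdot)}{\tau_p}$ into \Cref{eq:Objective}. Your added remark on the domain (requiring $\nu_p^j,\sigma_p\geq 1$ so the costs are finite, consistent with the paper's note that convexity need not place the minimum at the origin) is a careful touch the paper leaves implicit, but it does not change the argument.
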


\section{Related Work}
Allocation problems are foundational in operations research and find applications in diverse fields such as resource allocation \cite{halabian_resourceallocation}, kidney exchange programs \cite{KidneyExchange}, school choice \cite{abdulkadiroglu2003}, candidate selection \cite{CandidateSelection}, summer internship programs \cite{SummerInternship}, and matching residents to hospitals \cite{Hospital-Resident}. These problems are often modeled as matching problems, where entities to be matched frequently belong to different groups. This has spurred significant research into bipartite matching under group fairness constraints. A comprehensive survey of developments in matching with constraints, including those based on regions, diversity, multi-dimensional capacities, and matroids, is provided in \cite{AzizBiroYokoo2022}.

Fairness constraints are integral to ensuring equitable outcomes in matching problems, and are captured in various forms like justified envy-freeness \cite{abdulkadiroglu2003}, proportional matching \cite{CandidateSelection}, and upper and lower bounds on the total representation of each group \cite{ClassifiedStableMatching,IsraeliGapYear}. In contexts addressing historical discrimination, vertical reservations (implemented as set-asides) and horizontal reservations (implemented as minimum guarantees or lower bounds) are prominent mechanisms used in India to safeguard disadvantaged groups, as detailed in \cite{IndiaReservation}. Moreover, groups in matching problems can be overlapping, that is one item could belong to multiple groups. \cite{GroupFairMatching} presents a polynomial-time algorithm with an approximation ratio of $\frac{1}{\Delta+1}$ where each item belongs to at most $\Delta$ laminar families of groups per platform, and \cite{Rank-MaximalAndPOpular} shows the NP-hardness of the problem without a laminar structure.
Both papers focus only on upper bounds, and \cite{GroupFairMatching} shows that their problem is NP-hard to approximate within a factor of $\cO\left(\frac{\log^2\Delta}{\Delta}\right)$. We show that this hardness result holds for our problem as well when the groups overlap arbitrarily (see \Cref{subsec:results} for details). \cite{DiversityMatching} primarily focuses on proportional diversity constraints, and \cite{GroupIndividual} studies individual fairness along with upper bounds and both upper and lower bounds for disjoint groups. All these works focus on satisfying the constraints while maximizing the matching size. In contrast, our work looks at matching in a way that the utility is above a certain threshold while minimizing the total cost imposed using convex cost functions. 

Traditionally, bounds or quotas, used to enforce fairness constraints, are fixed. However, recent studies have explored capacity expansion in applications such as school choice \cite{school_choice_expansion}, matching residents to hospitals \cite{hopital-resident_expansion}, and college admissions \cite{college_admissions_expansion}. \cite{Hospital-resident_stable} looks at optimally increasing the quotas of hospitals to ensure that a strongly stable matching exists in the hospital resident matching. They show that minimizing the maximum capacity increase for any hospital is NP-hard in general and provide a polynomial-time algorithm for minimizing the total capacity increase across all hospitals. However, even that becomes NP-hard when each hospital incurs a cost for each capacity increase, even if the costs are $0$ or $1$. 
\cite{EHLERS2014648}  studies the impact of
soft and hard diversity constraints (quotas) on the controlled school choice problem. \cite{Yenmez,ijcai2020softconstraints,overlappingSchoolChoice} are other works that look at soft diversity constraints in this problem, and \cite{overlappingSchoolChoice} considers the setting
in which each student has multiple types. For more details, we refer the reader to the survey paper \cite{Aziz_Biró_Yokoo_2022}. \cite{one-sidedPreference_softquota} looks at matching applicants to posts with one-sided preferences where the assignments can deviate from the range defined by the quotas. They study the computing of an assignment optimal w.r.t. preferences of
applicants with minimum deviation.

In \cite{one-sided_cost-based} and \cite{two-sided_cost-based}, cost-controlled quotas replace fixed quotas in one-sided and two-sided preference-based matching, respectively.  However, the problems considered in both these works are very different from ours. While these works also minimize costs, their notions of optimality (envy-freeness, rank-maximality) are preference-based and more aligned with individual fairness,
rather than group fairness. Additionally, these models do not consider group structures, and matching costs are linear rather than convex.
In contrast, we address group fairness in a cost-minimization framework by employing convex cost functions to enforce soft diversity constraints.

\section{Cost-Approximate Matching Algorithm for Disjoint Groups}\label{sec:algorithm}
We prove \Cref{thm:disjoint} in this section. Given an instance of the $\cProblemU$ problem, $\cI = (\cGraph, \cG_1, \cG_2, \cdots, \cG_{\tau}, \ell, \vec{\cUtil},F)$ with disjoint groups, the goal is to compute a matching that minimizes the total cost while ensuring that the total utility is at least $\ell$. To this end, we first construct a flow network from the given input instance (step \ref{step:construct} of \Cref{alg:main} and \Cref{fig:flow_network}) such that solving a Minimum Cost Flow problem on this network, subject to the utility constraint (Constraint~\ref{eq:Constraint}), yields a solution that can be directly translated into a valid matching in the original graph. We use \Cref{LP_int} to compute such a flow (step \ref{step:runLP}),  and if the solution is fractional, we invoke \Cref{alg:round} to round it to an integral flow. Finally, we map this flow to a bipartite matching that satisfies the utility constraint while possibly incurring a slightly higher cost than $OPT$. We show in \Cref{lem:frac_paths} that if an optimal solution of \Cref{LP_int} is fractional, it consists of at most two fractional paths. \Cref{alg:round} resolves this by rounding up the edges in the higher-utility path and rounding down those in the lower-utility path. We now delve into the details of our algorithm, starting with the network construction.

\subsection{Network Construction}\label{subsec:Flow}
We introduce two dummy nodes, $s$ and $t$, as source and sink, respectively, and construct a multi-layer directed bipartite multigraph, denoted by $\cFN$, from the input graph, $\cGraph$ (see \Cref{fig:flow_network}). Let the set of vertices and edges in $\cFN$ be denoted by $\cFNV$ and $\cFNE$ respectively. For any platform, $p \in P$, let $g(p)$ denote the set of groups with at least one item in the neighborhood of platform $p$, $N(p)$.
\newline
\noindent{\bf Nodes:} We create the following layers from the vertex sets $\cItems$ and $P$ ( \Cref{fig:flow_network}). \\
    1. {\bf Item} - This set is precisely $\cItems$, the collection of all items in the instance. \\
    2. {\bf Group} - For each platform, $p \in P$, we create a separate copy for every group $j \in g(p)$. 
    We denote the copy of a platform $p \in P$ corresponding to group $j \in g(p)$, by $\cPC{}{j}{}$. \\ 
     3. {\bf Platform} - In this layer, we introduce two levels, each containing a copy of every platform. For any platform $p\in P$, we denote its copies in the first and second levels as $\cPC{}{}{(l_1)}$ and $\cPC{}{}{(l_2)}$ respectively.

\begin{figure}[t]
\centering
\includegraphics[scale=0.2]{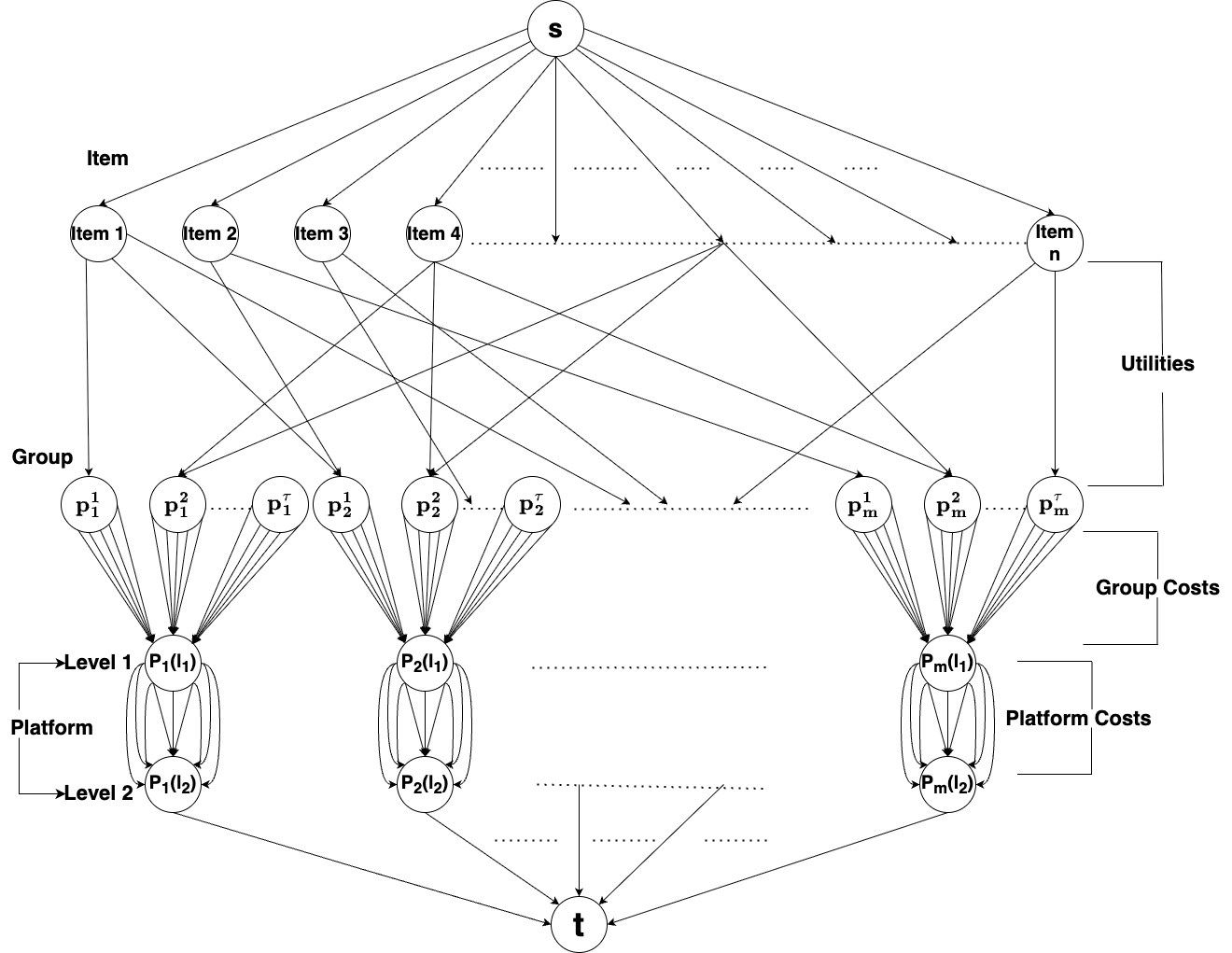}
\caption{Flow network when the input graph $\cGraph$ has $n$ items, $m$ platforms and $\tau$ groups with the following edges
$\cEdge{\text{Item }1}{\mathbf{p_1}}$, $\cEdge{\text{Item }1}{\mathbf{p_2}}$, $\cEdge{\text{Item }2}{\mathbf{p_2}}$, $\cEdge{\text{Item }4}{\mathbf{p_1}}$, $\cEdge{\text{Item }4}{\mathbf{p_2}}$, $\cEdge{\text{Item }4}{\mathbf{p_m}}$, $\cEdge{\text{Item }n}{\mathbf{p_m}}$. In $\cGraph$, Items $1$ and $2$ are in group $1$, Item $4$ is in group $2$ and Item $n$ is in group ${\tau}$}
\label{fig:flow_network}
\end{figure}

\noindent{\bf Edges:} Let the weight (cost) and capacity of any edge $\cEdge{v}{w}$ be denoted by $w_{vw}$ and $c_{vw}$ respectively. We add the following edges between the layers in the flow network. \\
    1. $\cEdge{s}{\cItem}$ with $w_{s\cItem} = 0$ and $c_{s\cItem} = 1$, $\forall \cItem \in \cItems$. These edges connect the source to all items, with a weight of $0$ and a capacity of $1$. \\
    2. $\forall \cItem \in \cItems, p \in N(i)$, we add the edges $\cEdge{\cItem}{\cPC{}{j}{}}$, $\forall j \in [\tau] : \cItem \in \cG_j$ with $c_{i\cPC{}{j}{}} = 1$ and $w_{i\cPC{}{j}{}} = 0$. Each of these edges with capacity $1$ and weight $0$ connects an item to the group-specific copy of every platform in its neighborhood, within the group layer. Additionally, these edges have a utility associated with them, $\cUtil_{\cItem\cPC{}{j}{}} = \cUtil_{\cItem p}$, $\forall \cEdge{\cItem}{\cPC{}{j}{}} \in \cFNE, \cEdge{\cItem}{p} \in E$. When the groups are disjoint, it is straightforward to verify that for every edge $(i,p) \in E$, there exists exactly one corresponding $(i,\cPC{}{j}{}) \in \cFNE$. We will later show that this one-to-one correspondence also holds in the case of laminar groups, as discussed in the network construction for that setting in \Cref{sec:laminar}. \\
    3. $\mathbf{L_1}$: This is the set of all edges between the group layer and the first level of the platform layer. $\forall p \in P$, $j \in g(p)$, we add an edge from $\cPC{}{j}{}$ to $\cPC{}{}{(l_1)}$ and make $\Delta_j(p)-1$ extra copies of this edge. Let $\cEdge{\cPC{}{j}{}}{\cPC{}{}{(l_1)}}^k$ denote the $k^{th}$ copy where $k \in [\Delta_j(p)]$. $\cEdge{\cPC{}{j}{}}{\cPC{}{}{(l_1)}}^k$ is an edge with capacity $1$ and weight $f_p^j(k) - f_p^j(k-1)$ $\forall k \in [\Delta_j(p)]$. These are parallel edges with the same capacity but different weights connecting all group-layer copies of a platform, $p \in P$, to $p(l_1)$ in the first level of the platform layer. \label{edge:L1} \\
    4. $\mathbf{L_2}$: This is the set of all edges between the first level and the second level of the platform layer. We add an edge $\cEdge{\cPC{}{}{(l_1)}}{\cPC{}{}{(l_2)}}$ $\forall p \in P$,  and make $\Delta(p)-1$ extra copies of this edge. Let $\cEdge{\cPC{}{}{(l_1)}}{\cPC{}{}{(l_2)}}^k$ denote the $k^{th}$ copy where $k \in [\Delta(p)]$. $\cEdge{\cPC{}{}{(l_1)}}{\cPC{}{}{(l_2)}}^k$ is an edge with capacity $1$ and weight $f_p(k) - f_p(k-1)$. These are parallel edges with the same capacity but different weights connecting the two copies of a platform in the first and second levels of the platform layer. \label{edge:L2} \\ 
    5. We add an edge $\cEdge{p}{t}$ with $w_{pt} = 0$ and $c_{pt} = \Delta(p)$, $\forall p \in P$. These edges connect each platform's copy in the second level of the platform layer to the target node $t$, with a weight of $0$ and a capacity of $\Delta(p)$.


Before presenting the Linear Programming formulation of the problem and our algorithm, 
we claim that there is a matching in $\cGraph$ where the minimum cost is $\hat{\mathcal{C}}_{\cI}$ to obtain utility at least $\hat{\ell}$, if and only if there is an integral flow on $\cFN$ with utility at least $\hat{\ell}$ and the minimum cost for such a flow is $\hat{\mathcal{C}}_{\cI}$. By construction, any such flow, when passed through steps \ref{step:begin_reduction}–\ref{step:end_reduction} of \Cref{alg:main}, yields a matching in $\cGraph$ with identical utility and cost. We formally state this below in \Cref{lem:mapping} and \Cref{claim:mapping}.

\begin{restatable}{lemma}{mapping}\label{lem:mapping}
    Let $\hat{\ell} \in \mathbb{R}_{\geq 0}$.
    There exists a matching in $\cGraph$ with utility at least $\hat{\ell}$ and minimum cost $\hat{\mathcal{C}}_{\cI}$ iff there exists a flow in the constructed network $\cFN$ 
    with utility at least $\hat{\ell}$ such that the minimum cost to achieve such a flow is $\hat{\mathcal{C}}_{\cI}$. 
\end{restatable}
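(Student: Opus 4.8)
The plan is to establish a cost- and utility-preserving correspondence between integral flows in $\cFN$ and matchings in $\cGraph$, and then use this correspondence in both directions to pin down the common minimum value $\hat{\mathcal{C}}_{\cI}$. I would prove the two directions of the ``iff'' separately and conclude with a sandwich argument on the optimal costs.

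For the forward direction (matching $\Rightarrow$ flow), given a matching $M$ in $\cGraph$ with utility at least $\hat{\ell}$, I would route, for each matched edge $\cEdge{\cItem}{p} \in M$, one unit of flow along the path $s \to \cItem \to \cPC{}{j}{} \to \cPC{}{}{(l_1)} \to \cPC{}{}{(l_2)} \to t$, where $\cG_j$ is the (unique, in the disjoint case) group containing $\cItem$. I would check feasibility: each item is matched at most once, so the capacity-$1$ edges $\cEdge{s}{\cItem}$ and $\cEdge{\cItem}{\cPC{}{j}{}}$ are respected; the number of units entering $\cPC{}{j}{}$ equals $\nu_p^j \le \Delta_j(p)$ and the number entering $\cPC{}{}{(l_1)}$ equals $\sigma_p \le \Delta(p)$, matching the number of available parallel $L_1$ and $L_2$ edges. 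Since the edge $\cEdge{\cItem}{\cPC{}{j}{}}$ carries utility $\cUtil_{\cItem p}$, the flow has utility exactly that of $M$, hence at least $\hat{\ell}$.

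The crux is the cost equality, and here I would invoke convexity. For a fixed routing carrying $\nu_p^j$ units through $\cPC{}{j}{}$, the cheapest way to send them across the parallel $L_1$ edges is to use the $\nu_p^j$ least costly of them; because $\cGCF{j}{\cdot}$ is convex, the weights $\cGCF{j}{k}-\cGCF{j}{k-1}=\nabla\cGCF{j}{k}$ are non-decreasing in $k$, so the cheapest edges are exactly those indexed $k=1,\dots,\nu_p^j$, and their total weight telescopes to $\cGCF{j}{\nu_p^j}-\cGCF{j}{0}$. The identical argument on the $L_2$ edges gives $\cCF{\sigma_p}-\cCF{0}$. Summing over all platforms and groups and absorbing the constant terms $\cGCF{j}{0},\cCF{0}$ (which vanish under the normalization $\cGCF{j}{0}=\cCF{0}=0$) yields exactly $\cCost$ as in \Cref{eq:Objective}, so the minimum cost of a flow realizing these loads equals $\cCost(M)$.

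For the reverse direction (flow $\Rightarrow$ matching), given an integral flow of utility at least $\hat{\ell}$, I would apply flow decomposition to write it as a union of unit $s$--$t$ paths. Each such path begins with a distinct edge $\cEdge{s}{\cItem}$ (capacity $1$), and its second edge $\cEdge{\cItem}{\cPC{}{j}{}}$ identifies a platform $p$, so the path corresponds to matching $\cItem$ to $p$; flow conservation forces every path into this canonical shape. The resulting edge set is a valid matching (each item used at most once) of total utility $\ge \hat{\ell}$, and by the same telescoping identity its cost is at most the cost of the flow, with equality precisely when the flow is minimum-cost. The main obstacle, and the step I expect to demand the most care, is exactly this convexity-driven cost accounting: I must argue that an optimal flow never skips a cheaper parallel edge in favor of a costlier one, which is what monotonicity of $\nabla\cGCF{j}{\cdot}$ and $\nabla\cCF{\cdot}$ guarantees, and I must verify that the additive constants cancel consistently between the two models. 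Finally, applying the forward direction to a minimum-cost matching and the reverse direction to a minimum-cost flow yields both inequalities between the optimal costs, so the two minima coincide at $\hat{\mathcal{C}}_{\cI}$.
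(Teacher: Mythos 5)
Your proposal is correct and follows essentially the same route as the paper: the one-to-one correspondence between matched edges and item-to-group-layer edges preserves utility, and your convexity argument that an optimal flow uses the cheapest (prefix) parallel copies, whose weights telescope to $\cGCF{j}{\nu_p^j}$ and $\cCF{\sigma_p}$, is exactly the content of the paper's \Cref{lem:sequential} and its use in the cost-equality argument, including the handling of the additive constants $\cGCF{j}{0}$ and $\cCF{0}$. Your explicit flow decomposition and final sandwich on the two minima are just slightly more spelled-out versions of steps the paper performs implicitly.
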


\begin{restatable}{corollary}{corollaryMapping}\label{claim:mapping}
    Applying the reduction steps \ref{step:begin_reduction}–\ref{step:end_reduction} of \Cref{alg:main} to any flow on $\cFN$ with utility $\hat{\ell}$ and cost $\hat{\mathcal{C}}_{\cI}$ yields a matching in $\cGraph$ with utility $\hat{\ell}$ and cost $\hat{\mathcal{C}}_{\cI}$.
\end{restatable}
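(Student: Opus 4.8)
The plan is to read \Cref{claim:mapping} as the explicit, constructive ``only if'' direction of \Cref{lem:mapping}, but carried out on one fixed integral flow rather than on an optimum, so that nothing about the biconditional needs to be re-proved. Since steps \ref{step:begin_reduction}--\ref{step:end_reduction} are applied only after the flow has been made integral, I fix an arbitrary integral feasible flow $\phi$ on $\cFN$ of utility $\hat{\ell}$ and cost $\hat{\mathcal{C}}_{\cI}$. First I would invoke the flow-decomposition theorem: since every capacity and $\phi$ itself is integral, $\phi$ is a sum of unit $s$--$t$ paths. Because $\cFN$ is layered exactly as in \Cref{fig:flow_network}, the only available $s$--$t$ topology is $s \to \cItem \to \cPC{}{j}{} \to \cPC{}{}{(l_1)} \to \cPC{}{}{(l_2)} \to t$, so each unit path carries a single item $\cItem$ through one group copy $\cPC{}{j}{}$ of one platform $p$. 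The reduction in steps \ref{step:begin_reduction}--\ref{step:end_reduction} reads off the edge $\cEdge{\cItem}{p}$ from each such path, and I take the induced edge set to be the output matching $M$.

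Feasibility and utility are then immediate. Since $c_{s\cItem}=1$, at most one unit of $\phi$ leaves any item, so each item lies on at most one decomposition path; hence every item is matched at most once and $M$ is a genuine matching in $\cGraph$. For utility, the construction places the whole path utility on the item--group edge, with $\cUtil_{\cItem\cPC{}{j}{}}=\cUtil_{\cItem p}$, and by the one-to-one correspondence established in the edge construction (exactly one $\cPC{}{j}{}$ per original edge $\cEdge{\cItem}{p}$, in both the disjoint and the laminar setting) summing over the paths gives $\sum_{\cEdge{\cItem}{p}\in M}\cUtil_{\cItem p}=\hat{\ell}$, so $M$ has utility exactly $\hat{\ell}$.

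The cost equality is the real obstacle, and it is exactly what the weighted parallel edges are designed to handle. Only the $\mathbf{L_1}$ and $\mathbf{L_2}$ bundles carry weight, and flow conservation forces precisely $\nu_p^j$ units across the $\mathbf{L_1}$ bundle of $(p,j)$ and $\sigma_p$ units across the $\mathbf{L_2}$ bundle of $p$; these are exactly the loads induced by $M$, so the counts line up automatically. The delicate point is that each bundle offers $\Delta_j(p)$ (resp.\ $\Delta(p)$) unit-capacity copies of weights $\nabla\cGCF{j}{1},\nabla\cGCF{j}{2},\dots$ (resp.\ $\nabla\cCF{1},\nabla\cCF{2},\dots$), so the cost of pushing $\nu_p^j$ units depends on \emph{which} copies are used. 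Here I invoke convexity of $\cGCF{j}{\cdot}$ and $\cCF{\cdot}$: the increments are nondecreasing in $k$, so the cheapest $\nu_p^j$ copies are $k=1,\dots,\nu_p^j$, whose weights telescope to $\cGCF{j}{\nu_p^j}-\cGCF{j}{0}$, and likewise each $\mathbf{L_2}$ bundle telescopes to $\cCF{\sigma_p}-\cCF{0}$; summing over all $p,j$ reproduces \Cref{eq:Objective}, giving cost exactly $\hat{\mathcal{C}}_{\cI}$ whenever $\phi$ routes every bundle on its cheapest copies. To cover an arbitrary $\phi$, I note that rerouting a unit off a used copy $k$ onto a cheaper unused copy $k'<k$ of the same bundle leaves source, item, and sink usage unchanged---hence leaves $\nu_p^j,\sigma_p$ and the matching $M$ read off by the reduction unchanged---while not increasing cost; an exchange argument iterates this until every bundle is an initial segment $\{1,\dots,\nu\}$, without ever altering $M$. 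Thus the objective cost of $M$ equals the cheapest-copy (canonical) flow cost, which coincides with $\hat{\mathcal{C}}_{\cI}$ precisely when $\phi$ is itself cheapest-copy routed, as the minimum-cost flows produced by \Cref{LP_int} and \Cref{alg:round} always are. The single thing to nail down carefully is this telescoping/cheapest-copy identification; the rest is bookkeeping on the decomposition.
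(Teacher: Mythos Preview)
Your argument is correct and considerably more careful than the paper's own proof, which is a single sentence invoking \Cref{lem:mapping} and the one-to-one correspondence between item--group edges in $\cFN$ and edges of $\cGraph$. You unpack that correspondence explicitly via integral flow decomposition, and in doing so you surface a genuine subtlety the paper glosses over: the literal claim ``any flow with cost $\hat{\mathcal{C}}_{\cI}$ yields a matching of cost $\hat{\mathcal{C}}_{\cI}$'' fails if the flow routes a bundle through non-initial copies, since then the matching cost (which depends only on $\nu_p^j,\sigma_p$ and telescopes over copies $1,\dots,\nu$) is strictly smaller than the flow cost. Your exchange argument shows the matching cost always equals the \emph{canonical} flow cost, hence is at most $\hat{\mathcal{C}}_{\cI}$, with equality exactly when $\phi$ is cheapest-copy routed. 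This is the right statement, and the inequality direction is all that is actually used downstream in \Cref{lem:cost}.

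One loose end: you assert that the rounded flow $x_1$ produced by \Cref{alg:round} is cheapest-copy routed, but do not argue it. It is true that the LP optimum $x^*$ is (by the swap argument you give), and since $x_1$ differs from $x^*$ only along one cycle this can be pushed through, but it takes a line or two. Alternatively, simply state the conclusion as ``matching cost $\leq \hat{\mathcal{C}}_{\cI}$'' and note this suffices everywhere the corollary is invoked; that is cleaner and avoids the issue entirely.
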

Before we provide a formal proof of \Cref{lem:mapping} and \Cref{claim:mapping} in \Cref{sec:Proofs}, we sketch the proof of \Cref{lem:mapping} here. 
\paragraph{Proof Sketch:} We establish a one-to-one correspondence between matchings in the bipartite graph, $\cGraph$ and flows in the constructed network $\cFN$, under the assumption that groups are disjoint. Each edge $\cEdge{i}{p} \in E$, where item $i$ belongs to group, $j$, maps uniquely to an edge from $i$ to the corresponding group-platform copy, $\cPC{}{j}{} \in \cFNV$, preserving utility. Hence, any matching in 
$\cGraph$ with utility at least $\hat{\ell}$
corresponds directly to a flow of the same utility in 
$\cFN$, and vice versa. We then show that the minimum cost of such a matching equals the minimum cost of the corresponding flow by accounting for the cost bearing edges in $\cFN$ to compute the total flow cost which exactly matches the cost expression of our objective in \Cref{eq:Objective}. \Cref{claim:mapping} directly follows from \Cref{lem:mapping}.

By \Cref{lem:mapping} and \Cref{claim:mapping}, it suffices to compute a flow on the constructed network, $\cFN$, that minimizes the total flow cost while ensuring that the flow utility is at least $\ell$ (\Cref{eq:Constraint}). The Integer Programming version of the Linear Program (LP) defined in  \Cref{LP_int} over $\cFN$ does exactly this. For any vertex $v \in \cFNV$, let $\delta^-(v)$ and $\delta^+(v)$ denote the sets of incoming and outgoing edges, respectively. It is easy to verify that the constraints \ref{LP_int-1}, \ref{LP_int-3}, \ref{LP_int-4} and \ref{LP_int-5} of \Cref{LP_int} describe the flow polytope corresponding to $\cFN$. We denote this flow polytope by $Q$ and refer the reader to standard references such as \cite{AhujaMO93} for background. 

\begin{lp}\label{LP_int}
 \begin{align}
     &\min \sum_{e \in \cFNE}w_ex_e  \label{LP_int-1} \\
     \text{such that} \quad &\sum_{e \in \delta^+(i)} x_e \leq 1 &\forall i \in I \label{LP_int-2}\\
     &\sum_{e \in \cFNE}\cUtil_ex_e \geq \ell \label{LP_int-3}\\
     &\sum_{e \in \delta^-(v)} x_e = \sum_{e \in \delta^+(v)} x_e &\forall v \in V \setminus \{s,t\} \label{LP_int-4}\\
     & x_e \geq 0 &\forall e \in \cFNE \label{LP_int-5}
 \end{align}
\end{lp}

Since \Cref{LP_int} is a linear programming relaxation, its optimal solution may be fractional. Consequently, the resulting flow may not correspond to a valid matching in the original graph. To address this, we apply a simple rounding procedure described in \Cref{alg:round} (see \Cref{sec:Proofs}) within \Cref{alg:main}, which converts the fractional flow into an integral one. This integral flow corresponds to a valid matching that satisfies the utility constraint while incurring only a bounded increase in cost. This is possible because we show that any optimal solution to \Cref{LP_int} contains at most two fractional paths. By rounding up the path with higher utility and rounding down the one with lower utility (if it exists), we obtain an integral solution that maintains feasibility and closely preserves the cost. Before we formally state these in \Cref{lem:frac_paths} and \Cref{lem:cost} and provide a combined proof sketch for both, we start with the following definition.

\begin{definition}\label{def:ext_point}
    An {\bf Extreme Point} of a polytope is a point that cannot be expressed as a convex combination of two distinct points in the polytope. Formally, let $Q \subseteq \mathbb{R}^n$ be a polytope. A point $x \in Q$ is an extreme point if there do not exist distinct $y,z \in Q, y \neq z$, and $\lambda \in (0,1)$, such that 
    $x = \lambda y + (1-\lambda)x$.
\end{definition}

In other words, extreme points are vertices of the polytope, and every bounded linear program attains its optimum at an extreme point of the feasible region.

\begin{algorithm}[t]
\caption{Min Cost $\ell$-util Flow($\cI = (\cGraph, \cG_1, \cG_2, \cdots, \cG_{\tau},F)$)}
\label{alg:main}
\nonl \textbf{Input} :  $\cI$ \\
\nonl \textbf{Output} : A matching, $\cM$, with utility at least $\ell$\\
$\cM = \emptyset$\\
Follow the steps in \Cref{subsec:Flow} to construct the flow network $\cFN$ \label{step:construct}\\
Solve \Cref{LP_int} on the network, $\cFN$, and store the result in $x^*$ \label{step:runLP} \\
\If{$x^*$ is fractional}{$FL =$ \hyperref[alg:round]{Rounding}($x^*$)} \label{step:round}
\Else{$FL=x^*$}
$FE = \{\cEdge{\cItem}{v}: \cItem \in \cItems \text{ and the edge } \cEdge{\cItem}{v} \text{ is part of the flow, } FL\}$ \\
\For{$\cEdge{\cItem}{v} \in FE$}{ \label{step:begin_reduction}
    Let $p$ be the platform such that $v$ is a group layer 
    copy of $p$.\\
    $\cM = \cM \cup \cEdge{\cItem}{p}$ \label{step:end_reduction}
}
Return $\cM$
\end{algorithm}

\begin{restatable}{lemma}{fracPaths}\label{lem:frac_paths}
    If an optimal solution of \Cref{LP_int} has fractional edges, it has at most two fractional paths.
\end{restatable}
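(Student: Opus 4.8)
The plan is to exploit the fact that, once the utility constraint \ref{LP_int-3} is removed, \Cref{LP_int} is an ordinary minimum-cost flow program whose feasible region is the flow polytope $Q$ (defined by the conservation constraints \ref{LP_int-4}, the capacity/degree constraints \ref{LP_int-2}, and nonnegativity \ref{LP_int-5}). Because $Q$ is described by the node–arc incidence structure of $\cFN$ together with integral capacities (all capacities are $1$ except the $p\to t$ arcs, which have the integral capacity $\Delta(p)$), it is an integral polytope: every vertex of $Q$ is an integral flow. Hence the single side constraint \ref{LP_int-3} is the only possible source of fractionality, and the whole argument is about localizing its effect.

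First I would show that if an optimal extreme point $x^*$ of \Cref{LP_int} is fractional, the utility constraint \ref{LP_int-3} must be tight, i.e. $\sum_{e\in\cFNE} u_e x^*_e = \ell$. Indeed, were it slack, then in a neighbourhood of $x^*$ the only active constraints would be conservation and capacity constraints coming from the integral system defining $Q$, so $x^*$ would be a vertex of $Q$ and therefore integral, a contradiction. Thus $x^*$ lies on the hyperplane $H=\{x:\sum_e u_e x_e=\ell\}$ and is a vertex of $Q\cap H$. I would then invoke the standard polyhedral fact that a vertex of $Q\cap H$ lies on a face of $Q$ of dimension at most one: the smallest face $f$ of $Q$ containing $x^*$ must satisfy $\dim(f\cap H)=0$, forcing $\dim f\le 1$. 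Since $x^*$ is not itself a vertex of $Q$ (those are integral), $f$ is an edge of $Q$, so $x^*=\lambda x^A+(1-\lambda)x^B$ for two integral vertices $x^A,x^B$ of $Q$ and some $\lambda\in(0,1)$. Consequently every fractional edge of $x^*$ lies in the support of $x^A-x^B$.

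It then remains to identify this support with the claimed fractional paths. Since $x^A,x^B$ are adjacent vertices of a network-flow polytope, their difference is an elementary integral vector of the flow space of $\cFN$, i.e. it is supported on a single simple structure: either a simple directed $s$–$t$ path or a single alternating cycle in $\cFN$. Separating the arcs along which flow increases ($x^A>x^B$) from those along which it decreases yields at most two paths, which are exactly the two fractional paths of the statement; rounding flow up along the higher-utility one and down along the lower-utility one (as in \Cref{alg:round}) recovers an integral flow while keeping the total utility at least $\ell$.

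The step I expect to be the main obstacle is this last one: making precise that adjacent vertices of the capacitated polytope $Q$ differ by a single elementary flow and that, in the layered DAG $\cFN$, this difference splits into at most two paths rather than a long alternating walk. I would secure it in one of two ways. The clean polyhedral route augments $\cFN$ with a return arc $t\to s$ so that $Q$ becomes a circulation polytope, whose edge directions are precisely the elementary vectors, i.e. single simple residual cycles; projecting such a cycle back gives the two paths. The alternative, more hands-on route characterizes extremality directly: the subgraph of strictly interior (hence fractional) edges must support a flow space of dimension at most one, since two linearly independent circulations on these edges could be combined so as to annihilate the single tight utility functional $\sum_e u_e d_e$ and perturb $x^*$ in both directions, contradicting that it is an extreme point; a one-dimensional flow space on a DAG is spanned by a single simple cycle or simple $s$–$t$ path, which again gives at most two fractional paths. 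I would verify the dimension bound is exactly one (not zero) whenever fractional edges are present, by noting that an isolated fractional bridge would be forced to an integral value by conservation across the cut it induces.
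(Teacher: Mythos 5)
Your proposal is correct and takes essentially the same route as the paper: both arguments first place the fractional optimum $x^*$ on an edge of the integral flow polytope $Q$ between two adjacent integral vertices (the paper's \Cref{obs:int_points}, which you recover via the tight utility hyperplane and the smallest-face dimension argument), and both then reduce the claim to the fact that adjacent vertices of $Q$ differ on a single cycle, yielding the two fractional paths. The only divergence is in how that adjacency step is discharged --- the paper cites Theorem 3.2 of \cite{Gallo_Flow} together with the observation that unit capacities rule out floating arcs, whereas you supply self-contained substitutes (the $t \to s$ return-arc circulation polytope, or the one-dimensional-circulation-space argument on the strictly interior fractional arcs), both of which are standard and sound.
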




\begin{restatable}{lemma}{cost}\label{lem:cost}
    \Cref{alg:main} returns a matching with total utility at least $\ell$ and cost at most $OPT + \max_p[\nabla\cCF{\sigma_p} + \max_j \nabla\cGCF{j}{\nu_p^j}] - \min_p[\nabla\cCF{\sigma_p} + \min_j \nabla\cGCF{j}{\nu_p^j}]$ when the groups are disjoint.
\end{restatable}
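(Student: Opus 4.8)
The plan is to combine the structural result of \Cref{lem:frac_paths} with the cost accounting of \Cref{lem:mapping}. By \Cref{lem:mapping} the cost of any flow on $\cFN$ equals the cost of the corresponding matching, and the only cost-bearing edges are the parallel $L_1$ and $L_2$ edges, whose weights are the marginals $\nabla\cGCF{j}{k}$ and $\nabla\cCF{k}$. Since each $\cGCF{j}{\cdot}$ and $\cCF{\cdot}$ is convex, these marginals are nondecreasing in $k$, so a minimum-cost flow saturates the cheapest parallel copies first, and the cost of a flow sending $\nu_p^j$ (resp.\ $\sigma_p$) units reproduces the objective term $\cGCF{j}{\nu_p^j}$ (resp.\ $\cCF{\sigma_p}$). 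I will use throughout that the optimum of \Cref{LP_int} lower-bounds $OPT$, since \Cref{LP_int} relaxes the integral flow formulation whose integral optimum equals $OPT$ by \Cref{lem:mapping}.

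First I would pin down the fractional structure. By \Cref{lem:frac_paths} an optimal extreme point $x^*$ has at most two fractional $s$–$t$ paths. I would argue that when two are present they diverge at a single fully matched item: the item's source edge carries one integral unit which then splits, so by flow conservation their values $\alpha_{\text{high}},\alpha_{\text{low}}$ on the higher- and lower-utility paths satisfy $\alpha_{\text{high}}+\alpha_{\text{low}}=1$. (When only one fractional path exists, its item is matched to the fractional extent $\alpha\in(0,1)$.) \Cref{alg:round} rounds the higher-utility path up and the lower-utility path down, giving an integral flow $\hat x$ that maps, via \Cref{claim:mapping}, to a valid matching.

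Next I would verify feasibility and then the cost bound. For feasibility, in the two-path case the utility changes by $(1-\alpha_{\text{high}})u_{\text{high}}-\alpha_{\text{low}}u_{\text{low}}=\alpha_{\text{low}}(u_{\text{high}}-u_{\text{low}})\ge 0$, using $\alpha_{\text{high}}+\alpha_{\text{low}}=1$ and $u_{\text{high}}\ge u_{\text{low}}$, so the utility stays at least $\ell$; in the one-path case rounding up only increases it. For the cost, the two-path case is clean: rounding up the high path through $p_1$ adds $(1-\alpha_{\text{high}})$ units on its $L_1,L_2$ edges while rounding down the low path through $p_2$ removes $\alpha_{\text{low}}$ units, so the net change is $\alpha_{\text{low}}\big([\nabla\cCF{\sigma_{p_1}}+\nabla\cGCF{j}{\nu_{p_1}^{j}}]-[\nabla\cCF{\sigma_{p_2}}+\nabla\cGCF{j}{\nu_{p_2}^{j}}]\big)\le \max_p[\nabla\cCF{\sigma_p}+\max_j\nabla\cGCF{j}{\nu_p^j}]-\min_p[\nabla\cCF{\sigma_p}+\min_j\nabla\cGCF{j}{\nu_p^j}]$, since $\alpha_{\text{low}}\le 1$; adding this to $c(x^*)\le OPT$ yields the claim.

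The main obstacle is the single fractional path case together with the comparison against the integral $OPT$ rather than the fractional optimum. There, rounding up the lone path raises the cost by $(1-\alpha)$ times its marginal, which on its own can exceed $\max_p[\cdots]-\min_p[\cdots]$; the excess is absorbed only by the strict integrality gap $OPT-c(x^*)$, so the naive estimate $c(x^*)+\text{(rounding cost)}$ is too weak. I therefore expect to need a direct exchange argument: taking an optimal integral flow $x^{OPT}$, decompose $\hat x-x^{OPT}$ into paths and cycles in the residual network and charge each unit of disagreement to a rounded-up marginal (at most $\max_p[\nabla\cCF{\sigma_p}+\max_j\nabla\cGCF{j}{\nu_p^j}]$) balanced against a rounded-down marginal (at least $\min_p[\nabla\cCF{\sigma_p}+\min_j\nabla\cGCF{j}{\nu_p^j}]$), so the net disagreement cost telescopes to the stated spread. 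Extra care is needed because the group cost functions may be decreasing (negative marginals), so every inequality must be driven by monotonicity of the marginals and not by their sign.
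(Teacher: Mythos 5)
Your second and third paragraphs are, in substance, the paper's own proof. The paper likewise takes the optimal vertex $x^*$, combines \Cref{lem:frac_paths} with \Cref{obs:int_points} to conclude that the fractional support is a single cycle formed by two paths $\pi_1,\pi_2$ with $x^*_{\pi_2}=1-x^*_{\pi_1}$, rounds the higher-utility path up and the other down, writes the cost change as $\left[(w_{\pi_2^1}+w_{\pi_2^2})-(w_{\pi_1^1}+w_{\pi_1^2})\right]\cdot x^*_{\pi_1}$, bounds it by $\max_{\pi}(w_{\pi^1}+w_{\pi^2})-\min_{\pi}(w_{\pi^1}+w_{\pi^2})$ using $x^*_{\pi_1}\le 1$, and adds this to $\sum_e w_e x^*_e\le OPT$; the path weights are then bounded by the marginals $\nabla\cGCF{j}{\nu_p^j}$ and $\nabla\cCF{\sigma_p}$ exactly as you do. Your feasibility check via $\alpha_{\text{low}}(u_{\text{high}}-u_{\text{low}})\ge 0$ is an equivalent, more computational version of the paper's geometric argument that the rounded point is precisely the integral endpoint $x_1$ of the polytope edge through $x^*$, which \Cref{obs:int_points} places on the feasible side of the utility hyperplane.

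The divergence is your last paragraph, and it cuts both ways. The paper never performs an exchange argument against an optimal integral flow: it disposes of the single-fractional-path case structurally, by asserting that the two integral endpoints sandwiching $x^*$ differ by one cycle, so that a complementary lower-utility fractional path with value $1-x^*_{\pi_2}$ always accompanies any fractional path and the rounded output is simply $x_1$. Your proposed residual decomposition of $\hat x - x^{OPT}$ is not carried out, and, more importantly, it cannot succeed as stated: if a lone fractional $s$--$t$ path does occur, the claimed additive bound itself can fail, so no charging scheme absorbs the excess. Concretely, take two items $B,C$ on two separate platforms with linear (hence convex) platform costs $3x$ and $5x$, zero group costs, utilities $u_B=3$, $u_C=4$, and $\ell=4$. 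The unique optimum of \Cref{LP_int} is $x_B=1$, $x_C=1/4$, which has exactly one fractional path; \Cref{alg:round} rounds it up, giving cost $8$, while $OPT=5$ (match $C$ alone) and the spread term equals $5-3=2$, so $8>OPT+2$. Thus the burden of the lemma sits entirely on ruling out the one-path case, which is what the paper's cycle claim is doing implicitly --- and note that this claim is itself delicate, since adjacent vertices of a flow polytope with free flow value can differ by an $s$--$t$ path, as this very instance shows. To align with the paper, replace your exchange plan by the sandwich argument of \Cref{obs:int_points} and the cycle structure of \Cref{lem:frac_paths}; a fully rigorous write-up would additionally have to justify why the edge of the polytope containing $x^*$ corresponds to a genuine two-path cycle rather than a single augmenting path, a point your instincts correctly flagged as the crux but which neither your sketch nor the paper's proof settles.
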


\paragraph{Proof Sketch:} We build on the framework of \cite{Gallo_Flow}, which characterizes the structure of extreme points of flow polytopes. We begin by recalling some definitions from \cite{Gallo_Flow}. A cycle in the flow network $\cFN$ is a sequence of nodes $(v_1, \ldots, v_r)$ such that for each $k$, either $(v_k, v_{k+1}) \in \cFNE$ or its reverse appears, with $v_1 = v_r$. $A(\gamma)$ denotes the set of arcs in a cycle, which is further partitioned into forward and reverse arcs based on direction. For a flow $x$, the set of \emph{floating arcs} is denoted by $A_1(x) = \{e \in \cFNE : 0 < x_e < \cUtil_e\}$. By Theorem 3.2 of \cite{Gallo_Flow}, if a cycle $\gamma$ is the only cycle in $A(\gamma) \cup A_1(x)$, then the flow can be perturbed along $\gamma$ to obtain an adjacent extreme point. Using properties of the flow network and the flow polytope, $Q$, we show that, any adjacent extreme points differ by at most one cycle.

Let $x^*$ be the optimal (fractional) solution to \Cref{LP_int}, then again using properties of the polytope $Q$ we show that $x^*$ is sandwiched between two adjacent integral extreme points, say $x_0, x_1$ , such that $x^* = \lambda x_0 + (1 - \lambda) x_1$. By the above, $x_0$ and $x_1$ differ on a single cycle, so $x^*$ has fractional flow only along that cycle, that is on at most two paths in $\cFN$. Therefore, we round up the path with higher utility and round down the other, preserving feasibility. The increase in cost is bounded by the difference between the maximum and minimum total cost across all such two-edge paths, leading to the claimed additive bound in \Cref{lem:cost}.

Since \Cref{lem:cost} establishes that \Cref{alg:main} returns a matching with total utility at least $\ell$ and cost at most $OPT + \max_p[\nabla\cCF{\sigma_p} + \max_j \nabla\cGCF{j}{\nu_p^j}] - \min_p[\nabla\cCF{\sigma_p} + \min_j \nabla\cGCF{j}{\nu_p^j}]$, the only thing required to complete the proof of \Cref{thm:disjoint} is to show is that \Cref{alg:main} runs in polynomial time. The total number of variables in \Cref{LP_int} is $|\cFNE|$ and the total number of constraints is $n + |\cFNV| + |\cFNE| - 1$.  Therefore, \Cref{LP_int} can be solved in time polynomial in the number of vertices and edges in the constructed network, $\cFN$. The runtime for both the rounding step (\Cref{alg:round}) and the loop from step \ref{step:begin_reduction} to \ref{step:end_reduction} in \Cref{alg:main} is $\mathcal{O}(|\cFNE|)$. Hence, if we can show that $\cFNE$ and $\cFNV$ are polynomial in the number of items, platforms, and edges in the input graph, $\cGraph$, we are done. We show this in \Cref{lem:runtime} stated below.

\begin{restatable}{lemma}{runtime}\label{lem:runtime}
    The flow network $\cFN$ has at most $n+2m+|E|+2$ nodes and exactly
    $n + m + 3|E|$ edges.
\end{restatable}

In \Cref{sec:Proofs}, we prove \Cref{lem:runtime} by accounting for the item, platform, group, and dummy layers and combining edges from the source, and the three internal layers in $\cFN$ (see \Cref{fig:flow_network}).

\subsection{Detailed Proof of Correctness}\label{sec:Proofs}
\begin{algorithm}[t]
\caption{Rounding Step($x^*$)}\label{alg:round}
\nonl \textbf{Input} :  An optimal solution of \Cref{LP_int} \\
\nonl \textbf{Output} : An integral flow on $\cFN$, $FL$ \\
\tcc{$\pi_1$ is the first fractional path and $\pi_2$ the second if any}
\tcc{$\pi_1$ is the path with higher utility w.l.o.g if $\pi_2$ exists} 
\For{$e \in \cFNE$}{
    \If{$e \in \pi_1$}{$x^*_e=1$} 
    \If{$e \in \pi_2$}{$x^*_e=0$}
    $FL_e = x^*_e$ 
}
Return $FL$
\end{algorithm}

In this section, we prove \Cref{lem:mapping}, \Cref{claim:mapping}, \Cref{lem:frac_paths}, \Cref{lem:cost}, and \Cref{lem:runtime} which together establish the proof of \Cref{thm:disjoint}. We begin with \Cref{lem:mapping} and \Cref{claim:mapping}, but first present an observation and a supporting lemma that will be used in their proofs.

\begin{restatable}{observation}{utilityandcost}\label{obs:path_cost}
In the flow network $\cFN$, constructed on some input bipartite graph, $\cGraph$, with disjoint groups, every $s$-$t$ path contains exactly one edge that contributes to the total utility, namely, the edge from the item layer to the group layer. Furthermore, each such path includes exactly two edges that contribute to the total cost: one edge from the group layer to the first level of the platform layer, and one edge from the first level to the second level of the platform layer.
\end{restatable}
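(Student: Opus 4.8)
The plan is to exploit the fact that $\cFN$ is a layered directed acyclic multigraph: by the construction in \Cref{subsec:Flow}, its nodes partition into the source $s$, the item layer, the group layer, the first platform level, the second platform level, and the sink $t$, and every arc runs strictly forward from one layer to the next. Concretely, the only arc classes are $\cEdge{s}{\cItem}$, the item-to-group arcs $\cEdge{\cItem}{\cPC{}{j}{}}$, the $L_1$ arcs $\cEdge{\cPC{}{j}{}}{\cPC{}{}{(l_1)}}$, the $L_2$ arcs $\cEdge{\cPC{}{}{(l_1)}}{\cPC{}{}{(l_2)}}$, and the sink arcs $\cEdge{\cPC{}{}{(l_2)}}{t}$. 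First I would record that there are no backward arcs and no layer-skipping arcs, so the layers admit a topological order and any $s$-$t$ path must traverse them in that order.

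From this I would argue that every $s$-$t$ path has exactly the canonical form $s \to \cItem \to \cPC{}{j}{} \to \cPC{}{}{(l_1)} \to \cPC{}{}{(l_2)} \to t$, using exactly one arc between each consecutive pair of layers and therefore visiting exactly one node in each intermediate layer; in particular each path consists of exactly five arcs. The disjoint-groups hypothesis enters precisely here: since each item $\cItem$ belongs to a unique group $j$, the group-layer node reachable from $\cItem$ toward a fixed platform $p$ is unambiguous, so a path cannot split its utility across several group copies of the same platform.

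It then remains to read off which arcs carry utility and which carry cost. By construction the only arcs with nonzero utility are the item-to-group arcs $\cEdge{\cItem}{\cPC{}{j}{}}$ (with $\cUtil_{\cItem\cPC{}{j}{}} = \cUtil_{\cItem p}$), whereas $\cEdge{s}{\cItem}$, the $L_1$, $L_2$, and sink arcs carry none; hence each path contains exactly one utility-bearing arc, the item-to-group arc. Likewise the only arcs of nonzero weight are the $L_1$ arcs (weight $\cGCF{j}{k}-\cGCF{j}{k-1}$) and the $L_2$ arcs (weight $\cCF{k}-\cCF{k-1}$), since $\cEdge{s}{\cItem}$, the item-to-group arcs, and the sink arcs all have weight $0$; hence each path contains exactly two cost-bearing arcs, one from $L_1$ and one from $L_2$. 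Combining these yields the claim. I do not expect a genuine obstacle here: the whole statement follows from the layered topology, and the only point that needs care is the exhaustive verification that the five arc classes above are the \emph{only} arcs of $\cFN$ and that their assigned utilities and weights are exactly as listed, so that no further utility- or cost-bearing arc can appear on any path.
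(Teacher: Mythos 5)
Your proposal is correct and takes essentially the same route as the paper's proof: a direct inspection of the construction showing that utilities are assigned only to the item-to-group arcs and weights only to the $L_1$ and $L_2$ arcs, with all other arcs structural and zero-cost. Your explicit layered-DAG argument that every $s$-$t$ path has the canonical five-arc form $s \to \cItem \to \cPC{}{j}{} \to \cPC{}{}{(l_1)} \to \cPC{}{}{(l_2)} \to t$ is a mild tightening, since the paper's proof concludes only that a path traverses \emph{at most} two cost-bearing edges and leaves the ``exactly'' counts implicit.
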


\begin{proof}
By construction of $\cFN$, utilities are assigned to only the edges going from an item, say $i \in \cG_j$, to a platform copy $\cPC{}{j}{}$ in the group layer (see \Cref{fig:flow_network}). 
Again by construction of $\cFN$, costs are assigned only to edges that model assignment decisions influencing group-specific or overall platform load, specifically, edges from platform copies in the group layer to their corresponding platform replicas in the first level of the platform layer ($L_1$ edges defined in \ref{edge:L1}) and from these copies to their corresponding copies in the second level of the platform layer ($L_2$ edges defined in \ref{edge:L2}). All other edges, that is, from $s$ to items, from the item layer to the group layer, and from the platform layer to $t$, serve structural purposes and carry zero cost. Hence, any $s$-$t$ path traverses at most two cost-bearing edges.
\end{proof}

\begin{lemma}\label{lem:sequential}
    Consider an arbitrary platform $p$ and its copy $\cPC{}{j}{}$ under an arbitrary group $j \in g(p)$. In any optimal solution computed on $\cFN$ by \Cref{LP_int}, within the edge set $L_1$, if there is no flow through the edge $\cEdge{\cPC{}{j}{}}{\cPC{}{}{(l_1)}}^k$, then there will also be no flow through the edge $\cEdge{\cPC{}{j}{}}{\cPC{}{}{(l_1)}}^{k'}$, where $k < k' \leq \Delta_j(p)$, assuming that $\cGCF{j}{k} - \cGCF{j}{k-1} \neq \cGCF{j}{k'} - \cGCF{j}{k'-1}$. This relationship holds for both edge sets $L_1$ and $L_2$.
\end{lemma}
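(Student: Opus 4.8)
The plan is to give a short flow-rerouting (exchange) argument that exploits the convexity of the cost functions together with the parallel-edge structure of $L_1$. The first step is to record how convexity manifests in the edge weights. By construction, the $k$-th copy $\cEdge{\cPC{}{j}{}}{\cPC{}{}{(l_1)}}^k$ carries weight $\nabla\cGCF{j}{k} = \cGCF{j}{k} - \cGCF{j}{k-1}$. Since $\cGCF{j}{.}$ is convex, its successive differences are non-decreasing, so $\nabla\cGCF{j}{k} \leq \nabla\cGCF{j}{k'}$ whenever $k < k'$; the non-degeneracy hypothesis $\nabla\cGCF{j}{k} \neq \nabla\cGCF{j}{k'}$ then upgrades this to the strict inequality $\nabla\cGCF{j}{k} < \nabla\cGCF{j}{k'}$. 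Thus, among the parallel arcs from $\cPC{}{j}{}$ to $\cPC{}{}{(l_1)}$, lower-indexed copies are strictly cheaper, and the whole lemma should reduce to the principle that an optimal flow fills cheaper parallel arcs before costlier ones.

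The main step is the exchange argument. I would suppose, toward a contradiction, that an optimal solution $x^*$ of \Cref{LP_int} routes no flow on $\cEdge{\cPC{}{j}{}}{\cPC{}{}{(l_1)}}^k$ but carries positive flow $x^*_{k'} > 0$ on $\cEdge{\cPC{}{j}{}}{\cPC{}{}{(l_1)}}^{k'}$ for some $k < k' \leq \Delta_j(p)$. All of these arcs are parallel (same tail $\cPC{}{j}{}$, same head $\cPC{}{}{(l_1)}$) and have capacity $1$, so I would define a new flow identical to $x^*$ except that it moves the flow off copy $k'$ onto copy $k$, i.e. sets the flow on copy $k$ to $x^*_{k'}$ and on copy $k'$ to $0$. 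Because the two arcs share both endpoints, the net inflow and outflow at $\cPC{}{j}{}$ and $\cPC{}{}{(l_1)}$ are unchanged, so the flow-conservation constraints \ref{LP_int-4} still hold; the capacity bound is respected since $x^*_{k'} \leq 1$; and the utility is untouched because, by \Cref{obs:path_cost}, $L_1$ arcs carry no utility, so constraint \ref{LP_int-3} is preserved. Hence the new flow is feasible, and its cost differs from that of $x^*$ by $(\nabla\cGCF{j}{k} - \nabla\cGCF{j}{k'})\,x^*_{k'} < 0$, strictly decreasing the objective and contradicting optimality of $x^*$. Therefore no such $k'$ can carry flow, which is exactly the claim.

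Finally, the argument for $L_2$ is verbatim the same after replacing $\cGCF{j}{.}$ by $\cCF{.}$, the node $\cPC{}{j}{}$ by $\cPC{}{}{(l_1)}$, and $\Delta_j(p)$ by $\Delta(p)$, since those arcs are again capacity-$1$ parallel copies between a single pair of nodes whose weights are the convex differences $\nabla\cCF{k}$. I do not anticipate a serious obstacle; the only points requiring care are (i) confirming that convexity yields non-decreasing marginal weights \emph{even when the cost function is decreasing} (the differences may be negative but are still non-decreasing), so that rerouting toward lower indices is always profitable, and (ii) checking that the reroute leaves the utility constraint intact, which is immediate from \Cref{obs:path_cost} since the rerouted arcs are cost-bearing but utility-free. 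The strictness supplied by the non-degeneracy hypothesis is precisely what makes the cost decrease strict and hence the contradiction valid.
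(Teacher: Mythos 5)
Your proposal is correct and takes essentially the same route as the paper's proof: both argue by contradiction, using convexity of $\cGCF{j}{\cdot}$ to get non-decreasing marginal edge weights, the non-degeneracy hypothesis to make the inequality strict, and an exchange that reroutes flow from the $k'$-th to the $k$-th parallel copy, which preserves feasibility and utility (by \Cref{obs:path_cost}) while strictly lowering cost. If anything, your write-up is slightly more careful than the paper's, since you verify conservation, capacity, and the utility constraint explicitly and treat $L_1$ and $L_2$ independently rather than swapping both copies jointly.
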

\begin{proof}
    Let us assume that there is no flow through $\cEdge{\cPC{}{j}{}}{\cPC{}{}{(l_1)}}^k$ but there is a flow through $\cEdge{\cPC{}{j}{}}{\cPC{}{}{(l_1)}}^{k'}$ where $k < k' \leq \Delta_j(p)$ in an optimal solution of \Cref{LP_int} on $\cFN$. This implies that there is an unused edge, $\cEdge{\cPC{}{j}{}}{\cPC{}{}{(l_1)}}^k$ with cost $\cGCF{j}{k} - \cGCF{j}{k-1}$. Since $\cGCF{j}{.}$ is a convex function, its slope is monotonically non-decreasing. Thus,
    $$\frac{\cGCF{j}{k} - \cGCF{j}{k-1}}{k - k + 1} \leq \frac{\cGCF{j}{k'} - \cGCF{j}{k'-1}}{k' - k' + 1}$$ 
    $$\implies \cGCF{j}{k} - \cGCF{j}{k-1} \leq \cGCF{j}{k'} - \cGCF{j}{k'-1}$$
   However, by the assumption in the lemma statement, $\cGCF{j}{k} - \cGCF{j}{k-1} < \cGCF{j}{k'} - \cGCF{j}{k'-1}$. Similarly, the edge $\cEdge{\cPC{}{}{(l_1)}}{\cPC{}{}{(l_2)}}^k$ is unused and we can show that $\cGCF{}{k} - \cGCF{}{k-1} < \cGCF{}{k'} - \cGCF{}{k'-1}$.
   Therefore, replacing the edges $\cEdge{\cPC{}{j}{}}{\cPC{}{}{(l_1)}}^{k'}$ and $\cEdge{\cPC{}{}{(l_1)}}{\cPC{}{}{(l_2)}}^{k'}$ with $\cEdge{\cPC{}{j}{}}{\cPC{}{}{(l_1)}}^k$ and  $\cEdge{\cPC{}{}{(l_1)}}{\cPC{}{}{(l_2)}}^k$ respectively, yields a strictly lower total cost without altering the overall utility (see \Cref{fig:flow_network}. This is because these edges do not contribute to the utility (\Cref{obs:path_cost})). This contradicts the assumption that we started with an optimal solution of \Cref{LP_int} in the first place, as a feasible solution with the same utility but strictly lower cost exists.
\end{proof}

    
We first restate \Cref{lem:mapping} then provide its formal proof.
\mapping*
\begin{proof}
    When the groups are disjoint, there is a one-to-one mapping between the edges from the item layer to the group layer in $\cFN$, and the original edges in the input graph $\cGraph$. Consider an arbitrary edge $\cEdge{i}{p} \in E$ where $i$ belongs to the group $\cG_j$ for some $j \in [\tau]$. In all of $\cFN$, there is exactly one edge, $\cEdge{i}{\cPC{}{j}{}} \in \cFNE$, from the item $i$ to exactly one copy of the platform $p$, $\cPC{}{j}{}$, in the group layer, with utility $\cUtil_{ip}$ by construction. Therefore, an edge $\cEdge{i}{\cPC{}{j}{}}$ contributes the exact same utility to a flow on $\cFN$ as the edge $\cEdge{i}{p}$ to a matching on $\cGraph$, hence iff there is a flow with utility at least $\hat{\ell}$, then there must be a matching with utility at least $\hat{\ell}$. Next, we will show that the minimum cost to achieve such a matching is the same as that to achieve such a flow.
    
    Let us assume that we have a flow with minimum cost, say $\hat{\mathcal{C}}_{\cI}$, such that the total utility is at least $\hat{\ell}$. Let $\Pi_{\hat{\ell}}$ denote the set of paths in this flow. Consider an arbitrary utility edge, $\cEdge{i}{\cPC{}{j}{}}$ in an arbitrary path $\pi \in \Pi_{\hat{\ell}}$. Since the cost of all edges from $s$ to the item layer, from items to the group layer, and from the platform layer to $t$ is $0$, let us look at the edges on the path, $\pi$, from $i$ to $t$ in $L_1 \cup L_2$. Let $\cEdge{i}{\cPC{}{j}{}}$ be the $\gamma^{th}$ incoming edge from items in $\cG_j$ to $\cPC{}{j}{}$ in this flow. Since all the outgoing edges from $\cPC{}{j}{}$ go only to $\cPC{}{}{(l_1)}$ in the first level of the platform layer, which is then connected to $\cPC{}{}{(l_2)}$, using the one-to-one mapping between $\cEdge{i}{\cPC{}{j}{}}$ and $\cEdge{i}{p}$, $\gamma$ represents the total number of items from group $\cG_j$ that are matched to the platform $p$ in the matching on bipartite graph $\cGraph$. Therefore, $\nu_p^j = \gamma$. Now, let us assume that there are non-consecutive edges in $L_1$ from $\cPC{}{j}{}$ to $\cPC{}{}{(l_1)}$ in the flow computed on $\cFN$. Without loss of generality, let these edges be the first $\gamma-1$ copies and the $z^{th}$ copy such that $\gamma < z \leq \Delta_j(p)$. 
    From \Cref{lem:sequential}, we know that this is possible only when the cost of the edge $\cEdge{\cPC{}{j}{}}{\cPC{}{}{(l_1)}}^{\gamma} == \cEdge{\cPC{}{j}{}}{\cPC{}{}{(l_1)}}^z$. Therefore, we can replace $\cEdge{\cPC{}{j}{}}{\cPC{}{}{(l_1)}}^z$ with $\cEdge{\cPC{}{j}{}}{\cPC{}{}{(l_1)}}^{\gamma}$ without any changes to the cost or utility. Hence, if there are $\gamma$ edges in the computed flow to $\cPC{}{}{(l_1)}$ from $\cPC{}{j}{}$, they must be to the first $\gamma$ copies. Assuming $\cGCF{j}{0} = 0$, the total cost from the edges between the group layer to the level $1$ of the platform layer for the platform, $p$ under group $I_j$ is 
    \begin{equation}\label{eq:group_cost}
        \cGCF{j}{1} + \cGCF{j}{2} - \cGCF{j}{1} + \cdots \cGCF{j}{\nu_p^j} - \cGCF{j}{\nu_p^j-1} = \cGCF{j}{\nu_p^j}
    \end{equation}  

    Using similar reasoning for the edges in $L_2$, if there are $\beta$ edges to the copies of $\cPC{}{}{(l_2)}$, these must be to the first $\beta$ copies and $\sigma_p = \beta$. Assuming $\cCF{0} = 0$, the total cost from the edges between the levels $1$ and $2$ of the platform layer for the platform, $p$ is
    \begin{equation}\label{eq:platform_cost}
        \cCF{1} + \cCF{2} - \cCF{1}+ \cdots \cCF{\sigma_p} - \cCF{\sigma_p - 1} = \cCF{\sigma_p}
    \end{equation}
    Thus, consolidating across all groups and platforms, from Equations \ref{eq:group_cost} and \ref{eq:platform_cost}, the total cost of the flow is
    $$\sum_{p \in P}\left(\sum_{j=1}^{\tau}\cGCF{j}{\nu_p^j} + \cCF{\sigma_p}\right)$$
    which matches the objective function $\cCost$ in \Cref{eq:Objective} which is the total cost of the matching on $\cGraph$. Therefore, $\hat{\mathcal{C}}_{\cI} = \min \sum_{p \in P}\left(\sum_{j=1}^{\tau}\cGCF{j}{\nu_p^j} + \cCF{\sigma_p}\right)$ such that utility is at least $\hat{\ell}$, hence the minimum cost of a matching on $\cGraph$ with utility at least $\hat{\ell}$ must also be $\hat{\mathcal{C}}_{\cI}$.
    
    Note that we do not need the assumptions $\cGCF{j}{0} = 0$ and $\cCF{0} = 0$ $\forall p \in P, j \in [\tau]$. If they are not $0$ as can be the case with a convex cost function with the minima not at the origin, the total cost of the flow would be 
    $$\sum_{p \in P}\left(\sum_{j=1}^{\tau}\left(\cGCF{j}{\nu_p^j} - \cGCF{j}{0}\right) + \cCF{\sigma_p} - \cCF{0}\right) = \cCost - \sum_{p \in P}\left(\sum_{j=1}^{\tau}\cGCF{j}{0} + \cCF{0}\right)$$
    Since, $\sum_{p \in P}\left(\sum_{j=1}^{\tau}\cGCF{j}{0} + \cCF{0}\right)$ is a constant, minimizing $\cCost - \sum_{p \in P}\left(\sum_{j=1}^{\tau}\cGCF{j}{0} + \cCF{0}\right)$ is the same as minimizing $\cCost$.
\end{proof}
We restate the \Cref{claim:mapping} before giving its proof.
\corollaryMapping*
\begin{proof}
    This follows directly from \Cref{lem:mapping}, since the matching computed through steps \ref{step:begin_reduction}–\ref{step:end_reduction} in \Cref{alg:main} is obtained by exploiting the one-to-one correspondence between the utility bearing edges from the item layer to the group layer in $\cFN$ and the original edges in the input graph $\cGraph$.
\end{proof}

Next we need the following observation for the proof of \Cref{lem:frac_paths} and \Cref{lem:cost} both of which are restated before their respective proofs.

\begin{observation}\label{obs:exact}
    An optimal solution to \Cref{LP_int} has utility exactly $\ell$.
\end{observation}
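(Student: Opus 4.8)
The plan is to argue by contradiction using a flow-reduction (exchange) argument on the network $\cFN$. Suppose some optimal solution $x^*$ of \Cref{LP_int} has utility $\ell' := \sum_{e \in \cFNE} \cUtil_e x^*_e > \ell$, so that the utility constraint \ref{eq:Constraint} is slack by $\ell' - \ell > 0$. I then want to exhibit a feasible solution of no greater cost with strictly smaller utility, driving utility down to $\ell$ and contradicting that $x^*$ was optimal (unless its utility already equals $\ell$).

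First I would invoke a standard flow decomposition of $x^*$ into $s$-$t$ paths together with possible cycles. At an optimum there can be no flow around a strictly cost-decreasing cycle, so I may assume the support of $x^*$ is carried by $s$-$t$ paths. By \Cref{obs:path_cost}, every such path carries its utility through a single item-to-group edge and passes through exactly two cost-bearing edges, one in $L_1$ and one in $L_2$. Since $\ell' > \ell$, I select a path $\pi$ carrying positive flow and push $\epsilon := \min\{\text{flow on } \pi,\ (\ell'-\ell)/\cUtil_\pi\}$ units of flow off $\pi$, which keeps the solution feasible for the flow-conservation constraints \ref{LP_int-4} and the capacity constraints while lowering the total utility toward $\ell$. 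By \Cref{lem:sequential}, the two cost edges traversed by $\pi$ are the highest-indexed copies currently in use at their platform and group, so by convexity their incremental weights $\nabla\cGCF{j}{\nu_p^j}$ and $\nabla\cCF{\sigma_p}$ are the largest marginals presently paid there; the resulting change in cost is $-\epsilon\big(\nabla\cGCF{j}{\nu_p^j} + \nabla\cCF{\sigma_p}\big)$.

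The hard part will be controlling the sign of this change. The reduction strictly lowers the cost, and hence forces utility down to exactly $\ell$, precisely when these marginals are non-negative, which is the regime of convex increasing (over-representation-penalizing) cost functions. I would therefore carry out the exchange under the assumption that the cost edges in play carry non-negative incremental weight, so that removing the most expensive marginal unit can only help; iterating the step until no slack remains yields utility exactly $\ell$.

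Finally, for the fractional optima that the downstream analysis actually relies on (see \Cref{lem:frac_paths}), I would note a cleaner alternative that is independent of the cost sign: the flow polytope $Q$ defined by constraints \ref{LP_int-2}, \ref{LP_int-4}, \ref{LP_int-5} is integral, so every fractional extreme point of $Q \cap \{x : \sum_{e \in \cFNE} \cUtil_e x_e \ge \ell\}$ must make the single added halfspace tight, i.e.\ have utility exactly $\ell$; otherwise it would lie in the relative interior of that constraint and thus be an extreme point of $Q$ itself, hence integral. This extreme-point argument disposes of the fractional case without any assumption on the direction of the convex costs, and I expect reconciling it with the general statement to be the delicate bookkeeping step.
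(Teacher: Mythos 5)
Your proposal is correct under the same implicit hypothesis the paper relies on, but it takes a genuinely different and heavier route. The paper's proof is a one-line global scaling argument: all constraints of \Cref{LP_int} other than \ref{LP_int-3} are preserved when a feasible flow is shrunk, so replacing an optimum $x^*$ of utility $\ell' > \ell$ by $(\ell/\ell')\,x^*$ gives a feasible point with utility exactly $\ell$ and, because the edge weights are non-negative, no larger cost. You instead decompose $x^*$ into $s$-$t$ paths and peel $\epsilon$ units off a positive-utility path. This is sound --- the network $\cFN$ is layered and acyclic, so the decomposition has no cycles, and reducing flow along a path preserves conservation, capacities, and constraint \ref{LP_int-2} --- but it buys nothing over scaling for this statement, and your appeal to \Cref{lem:sequential} is slightly misplaced: a given decomposition path need not traverse the highest-indexed copies currently in use (you would have to choose the decomposition to arrange that), though this is moot since non-negativity of the incremental weights is all your cost bound actually needs; in that respect your assumption ``the cost edges in play carry non-negative incremental weight'' is exactly the paper's assumption that ``the costs are non-negative,'' since the network weights \emph{are} the marginals $\nabla\cGCF{j}{\cdot}$ and $\nabla\cCF{\cdot}$. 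Where your proposal genuinely adds value is the closing extreme-point argument: a fractional vertex of $Q \cap \{x : \sum_{e \in \cFNE} \cUtil_e x_e \geq \ell\}$ must make constraint \ref{LP_int-3} tight, since otherwise all of its tight constraints come from the integral polytope $Q$ and it would be an integral vertex. That argument requires no sign assumption on the costs, and it is precisely the fact consumed downstream in \Cref{obs:int_points} and \Cref{lem:frac_paths}. This robustness is not academic: the paper's own framework admits decreasing convex costs (soft minority protection, and the $-\log$ costs of \Cref{prop:NSW} used in the experiments), and for those \Cref{obs:exact} as literally stated can fail --- pushing more flow strictly lowers cost, so an optimum can have utility strictly above $\ell$ --- breaking both the paper's scaling proof and your peeling argument, while your vertex-tightness observation is the version of the statement that survives and still suffices for the rounding analysis.
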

This follows because the costs are non-negative, and hence any solution with higher utility can be scaled down to another solution with utility exactly $\ell$ and smaller cost. The correctness of the rounding procedure (Algorithm~\ref{alg:round}) relies on Lemmas \ref{lem:frac_paths} and \ref{lem:cost} both of which are derived from the following result:

\begin{restatable}{lemma}{convexCombination}\label{obs:int_points}
    An optimal (fractional) vertex solution\footnote{An extreme point is a point in the polytope that cannot be expressed as a convex combination of two distinct points in the polytope. Equivalently, it is also a vertex of the polytope where as many linearly independent constraints as the polytope's dimension attain equality. See any reference on linear programming for details.} $x^*$ of \Cref{LP_int} is a convex combination of two integer solutions of the flow polytope on $\cFN$, denoted by $Q$. Moreover, one of the two solutions has both cost and utility larger than those of $x^*$ and the other one has both cost and utility smaller than those of $x^*$.
\end{restatable}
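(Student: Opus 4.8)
The plan is to exploit the fact that the utility constraint (\ref{LP_int-3}) is the \emph{only} inequality in \Cref{LP_int} that is not part of the description of the flow polytope $Q$. First I would recall that $Q$ — the polytope of $\cFN$ cut out by the item-degree bounds (\ref{LP_int-2}), the conservation constraints (\ref{LP_int-4}), the edge capacities, and non-negativity (\ref{LP_int-5}), but \emph{not} the utility lower bound — is an integral polytope: its defining matrix is the incidence matrix of the network $\cFN$, which is totally unimodular, and all capacities are integers, so every vertex of $Q$ is an integral flow (see \cite{AhujaMO93}); moreover $Q$ is bounded since every capacity is finite. Writing $U(x) = \sum_{e \in \cFNE}\cUtil_e x_e$ for the utility and $W(x) = \sum_{e \in \cFNE} w_e x_e$ for the cost, the feasible region of \Cref{LP_int} is exactly $Q \cap H_{\ge}$, where $H_{\ge} = \{x : U(x) \ge \ell\}$ is a single halfspace, and $x^*$ is a vertex of this intersection.

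The key step is to show that a \emph{fractional} optimal vertex $x^*$ lies in the relative interior of an edge (a one-dimensional face) of $Q$. By \Cref{obs:exact} the utility constraint is tight at $x^*$, so $x^*$ lies on the hyperplane $H = \{x : U(x) = \ell\}$. Being a vertex of $Q \cap H_{\ge}$ in $\mathbb{R}^{|\cFNE|}$, exactly $|\cFNE|$ linearly independent constraints are tight at $x^*$, one of which is the utility constraint; hence the constraints of $Q$ active at $x^*$ have rank exactly $|\cFNE|-1$, since rank $|\cFNE|$ would make $x^*$ a vertex of $Q$ and therefore integral, contradicting fractionality. The face of $Q$ on which these active constraints hold with equality is thus one-dimensional, i.e.\ an edge of $Q$, and because $Q$ is integral and bounded this edge is a segment $[x_0, x_1]$ joining two integral vertices $x_0, x_1$ of $Q$. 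As $x^*$ is fractional it is not a vertex of $Q$, so it lies strictly inside the segment: $x^* = \lambda x_0 + (1-\lambda) x_1$ for some $\lambda \in (0,1)$ with $x_0, x_1$ integral flows in $Q$, which is the convex-combination claim. I expect this counting step to be the main obstacle, as it is where the ``one extra constraint over an integral polytope'' structure must be converted into two genuine integral endpoints; care is needed to verify that the $|\cFNE|-1$ active $Q$-constraints are independent of the utility constraint and really cut out an edge rather than a higher-dimensional face.

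For the ``moreover'' part I would use linearity of $U$ and $W$ along the segment. Since $U(x^*) = \ell$ is a strict convex combination $\lambda U(x_0) + (1-\lambda) U(x_1)$, the two endpoint utilities cannot lie on the same side of $\ell$; if they were equal, both endpoints would be integral feasible points and, as the linear cost is minimized over $[x_0, x_1]$ at an endpoint, an integral optimum would already exist and no rounding would be needed, so we may assume $U(x_0) \ne U(x_1)$. Relabel so that $U(x_1) > \ell > U(x_0)$; then $x_1$ is LP-feasible while $x_0$ is not. Writing the optimal cost as $W(x^*) = \lambda W(x_0) + (1-\lambda) W(x_1)$ and combining optimality of $x^*$ with feasibility of $x_1$ gives $W(x_1) \ge W(x^*)$, which through the convex-combination identity forces $W(x_0) \le W(x^*)$; strictness follows because $x^*$ is a genuine fractional optimum, so $W$ is non-constant along the edge (otherwise $x_1$ would itself be an integral optimum). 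Hence $x_1$ has both utility and cost strictly larger than $x^*$, and $x_0$ has both strictly smaller, completing the argument.
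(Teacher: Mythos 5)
Your proof is correct and takes essentially the same route as the paper's own: by \Cref{obs:exact} the utility constraint is tight at $x^*$, the remaining $|\cFNE|-1$ tight constraints cut out an edge of the integral flow polytope $Q$, so $x^*$ lies strictly inside a segment between two integral vertices $x_0,x_1$, which must straddle the utility hyperplane, and optimality of $x^*$ forces the cost ordering. Your explicit rank count showing the active $Q$-constraints define a genuine one-dimensional face is a slightly more careful rendering of the paper's ``the remaining $|\cFNE|-1$ tight constraints define an edge of $Q$,'' and your closing strictness claim for the costs is the only loose point (constancy of $W$ along the edge does not contradict $x^*$ being an optimal vertex) --- but the weak inequality, which is all the paper itself establishes, is all that the downstream rounding analysis in \Cref{lem:cost} requires.
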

\begin{proof}
    It is known that the flow polytope is integral when all capacities are integers. Thus any extreme point of the polytope $Q$ is integral i.e. each vertex or extreme point of $Q$ corresponds to an integral flow (see e.g. \cite{AhujaMO93}). 
    By Observation~\ref{obs:exact}, an optimal solution of \Cref{LP_int} must lie on the hyperplane of \Cref{LP_int-3}. However, a vertex solution, like $x^*$, is formed by $|\cFNE|$ tight constraints. As one tight constraint is \Cref{LP_int-3}, the remaining $|\cFNE|-1$ tight constraints must be a part of the flow polytope $Q$, and they must define an edge of $Q$.
    Let the edge containing $x^*$ be between two vertices of the flow polytope denoted by $x^*_1$ and $x^*_2$. Hence $x^*$ is a convex combination of $x^*_1$, $x^*_2$, i.e. $\exists \lambda\in(0,1)$ such that $x^*=\lambda x^*_1+(1-\lambda)x^*_2$. Consequently, the cost and utility of $x^*$ must be a convex combination of the costs and utilities of $x^*_1$ and $x^*_2$ respectively.

    Since $x^*$ lies on the hyperplane defined by \Cref{LP_int-3}, and is a convex combination of $x^*_1$ and $x^*_2$, $x^*_1$ and $x^*_2$ must lie on the opposite side of the hyperplane of \Cref{LP_int-3}. Hence one, say $x^*_1$, must have utility larger than $\ell$ and $x^*_2$ must have utility smaller than $\ell$. If $x^*_2$ has a larger cost than $x^*_1$, then by increasing $\lambda$ in the convex combination, one gets a larger utility at a smaller cost, contradicting the optimality of $x^*$. Hence the second part of the lemma statement follows.
\end{proof}

\fracPaths*
\begin{proof}
    We start with a few definitions that are derived from \cite{Gallo_Flow}. We call a sequence $(v_1,v_2, \cdots, v_r)$ of nodes of $\cFN$, a cycle if for $k = 1,2, \cdots, r-1$, either $(v_{k}, v_{k+1}) \in \cFNE$ or $(v_{k+1}, v_{k}) \in \cFNE$ and $v_1 = v_r$.
    Given a cycle, say $\gamma$, we denote the arcs in $\gamma$ by $A(\gamma)$. Similar to \cite{Gallo_Flow}, we partition the arcs of a cycle into the following:
    \begin{align*}
        & \textbf{Forward Arcs: }&A^+(\gamma) = \{a \in \cFNE: a = (v_k, v_{k+1}) \text{ for some } 1 \leq k \leq r-1\} \\
        & \textbf{Reverse Arcs: }&A^-(\gamma) = \{a \in \cFNE: a = (v_{k+1}, v_{k}) \text{ for some } 1 \leq k \leq r-1\}
    \end{align*}
    The authors in \cite{Gallo_Flow} also define something known as floating arcs for any flow solution, say $x$, formally defined below:
    \begin{equation}\label{eq:floating_arcs}
        A_1(x) = \{e \in A: 0 < x_e < \cUtil_e\}
    \end{equation}
    
    Let $x$ be an extreme point of the flow polytope. From Theorem 3.2 of \cite{Gallo_Flow}, we know that if there is a cycle, say $\gamma$ such that $A(\gamma)\cup A_1(x)$ has $\gamma$ as its only cycle, then $y=x + \mu(\gamma)\cdot\lambda$ is an extreme flow adjacent to $x$ where $\lambda > 0$, and $\mu(\gamma)$ is a $|\cFNE|$ vector with  $\mu(e) = 1$ if an edge $e$ in the cycle, $\gamma$ is a forward arc in $\cFN$ and $\mu(e) = -1$ if $e \in \gamma$ is a reverse arc in $\cFN$. Therefore, $y$ differs from $x$ by, at most, a cycle. Note that the capacity is $1$ for all the edges in $\cFN$ (an edges of capacity $\Delta(p)$ from $p$ to $t$ can be considered as $\Delta(p)$ parallel edges with capacity $1$) and any extreme point in $Q$ is integral; therefore, any extreme flow has edges with a value  $0$ or $1$, which means there are no floating arcs (\Cref{eq:floating_arcs}) in any extreme points of the polytope $Q$. Therefore, for any cycle $\gamma$, $A(\gamma)\cup A_1(x)$ has $\gamma$ as its only cycle, trivially.

    Let the optimal solution of \Cref{LP_int} be $x^*$, and let the two integer solutions of $Q$ that sandwich $x^*$ be $x_0$ and $x_1$, respectively. Since, $x^*$ can be written as a convex combination of $x_0$ and $x_1$ by \Cref{obs:int_points}, and $x_0$ and $x_1$ differ by only a cycle, $x^*$ can have fractional values only in one cycle which is two paths in $\cFN$.
\end{proof}

\cost*

\begin{proof}
    Let the two fractional paths in an optimal solution of \Cref{LP_int} (\Cref{lem:frac_paths}), say $x^*$, be denoted by $\pi_1$ and $\pi_2$, and let $\pi_2$ be the path with higher utility compared to $\pi_1$ w.l.o.g. Let the costs on the two edges (\Cref{obs:path_cost}) on the path $\pi_1$ be denoted by $w_{\pi_1^1}$ and $w_{\pi_1^2}$, and the costs on $\pi_2$ be denoted by $w_{\pi_2^1}$ and $w_{\pi_2^2}$. Let the fractional values assigned to $\pi_1$ and $\pi_2$ be $x^*_{\pi_1}$ and $x^*_{\pi_2}$ respectively. By \Cref{obs:int_points}, $x^*$ can be written as a convex combination of two adjacent integral flows, and by \Cref{lem:frac_paths}, these integral points only differ by the cycle formed by combining $\pi_1$ and $\pi_2$. Therefore, $x^*_{\pi_2} = 1- x^*_{\pi_1}$. Since the rounding step (\Cref{alg:round}) used in \Cref{alg:main} rounds up $x^*_{\pi_2}$ and rounds down $x^*_{\pi_1}$, the change in cost is going to be 
    $$[(w_{\pi_2^1}+w_{\pi_2^2}) - (w_{\pi_1^1}+w_{\pi_1^2})]\cdot x^*_{\pi_1}$$
    
    Without loss of generality, let $x_0$ be the integer solution with both utility and cost lower than $x^*$. Since $x_1$ is on the other side of the intersection point of the constraint \ref{LP_int-3} and the flow polytope $Q$, its utility and cost must be higher than both $x_0$ and $x^*$. Therefore, $x_1$ satisfies constraint \ref{eq:Constraint}. The rounding step in \Cref{alg:main} computes $x_1$ from $x^*$. Therefore, the flow computed in step \ref{step:round} of \Cref{alg:main} returns an integer flow that satisfies the utility constraint. Hence, by \Cref{claim:mapping},
    \Cref{alg:main} returns a matching with total utility at least $\ell$ and cost at most $\sum_{e \in \cFNE}w_ex^*_e + [(w_{\pi_2^1}+w_{\pi_2^2}) - (w_{\pi_1^1}+w_{\pi_1^2})]\cdot x^*_{\pi_1}$. Let the set of all possible paths in $\cFN$ be denoted by $\Pi$. By constraint \ref{LP_int-2}, $x^*_{\pi_1} \leq 1$, then,
    $$[(w_{\pi_2^1}+w_{\pi_2^2}) - (w_{\pi_1^1}+w_{\pi_1^2})]\cdot x^*_{\pi_1} \leq \max_{\pi \in \Pi}(w_{\pi^1}+w_{\pi^2}) - \min_{\pi \in \Pi}(w_{\pi^1}+w_{\pi^2})$$
    Since \Cref{LP_int} is an LP relaxation of the integer programming version of our problem, $\sum_{e \in \cFNE}w_ex^*_e \leq OPT$. By construction of the network, $\cFN$, $\forall \pi \in \Pi$
    $$\min_j \nabla\cGCF{j}{\nu_p^j} \leq w_{\pi^1} \leq \max_j \nabla\cGCF{j}{\nu_p^j} $$
    and
    $$ \min_j \nabla\cCF{\sigma_p}{} \leq w_{\pi^2} \leq \max_j \nabla\cCF{\sigma_p}{}$$ Therefore, \Cref{alg:main} returns a matching with total utility at least $\ell$ and cost at most 
    $$OPT + \max_p[\nabla\cCF{\sigma_p} + \max_j \nabla\cGCF{j}{\nu_p^j}] - \min_p[\nabla\cCF{\sigma_p} + \min_j \nabla\cGCF{j}{\nu_p^j}]$$
\end{proof} 

Finally, we restate and prove \Cref{lem:runtime} completing all the essential components required for the proof of \Cref{thm:disjoint}.

\runtime*

\begin{proof}
    The item layer has $n$ nodes, and the platform layer has $2$ copies of each platform, which is $2m$ nodes. Accounting for the dummy nodes, $s$ and $t$, we have $n+2m+2$ nodes that do not include the group layer. In the group layer, each platform is copied at most $\Delta(p)$ times since the number of groups from which items are in $N(p)$ cannot exceed $|N(p)|$, which is the same as $\Delta(p)$. Therefore, in the group layer, we have at most $\sum_{p \in P}\Delta(p) = |E|$ nodes. Therefore, the flow network, $\Gamma$ has at most $n+2m+|E|+2$ nodes.
    
    There are $n$ edges, one for each item from $s$ and $m$ edges, one from each platform in the second level of the platform layer to $t$. Each item has an edge to exactly one copy of every platform in its neighborhood; therefore, the total number of edges between the item layer and the group layer is $\sum_{i \in \cItems}\Delta(i) = |E|$. The edge set $L_1$ has $\Delta_j(p)$ edges from $\cPC{}{j}{}$ to $\cPC{}{}{(l_1)}$, $\forall p \in P, j \in g(p)$. Since the groups are disjoint, $\sum_{j \in g(p)}\Delta_j(p) = \Delta(p)$. Thus, the total number of edges in $L_1$ is $\sum_{p \in P}\Delta(p) = |E|$. The edge set $L_2$ has $\Delta_p$ edges from $p(l_1)$ to $p(l_2)$, $\forall p \in P$, therefore the total number of edges in $L_2$ is $\sum_{p \in P}\Delta(p) = |E|$. So, the total number of edges in $\cFN$ is $n + m + 3|E|$.
\end{proof}

      

\begin{proof}[Proof of \Cref{thm:disjoint}]
    We have already established in \Cref{lem:cost} that \Cref{alg:main} returns a matching with total utility at least $\ell$ and cost at most $OPT + \max_p[\nabla\cCF{\sigma_p} + \max_j \nabla\cGCF{j}{\nu_p^j}] - \min_p[\nabla\cCF{\sigma_p} + \min_j \nabla\cGCF{j}{\nu_p^j}]$. The total number of variables in \Cref{LP_int} is $|\cFNE|$ and the total number of constraints is $n + |\cFNV| + |\cFNE| - 1$. Therefore, by \Cref{lem:runtime}, the total number of variables in \Cref{LP_int} is $n + m + 3|E|$ and the total number of constraints is at most $3n+3m+4|E|+1$. Therefore, \Cref{LP_int} can be solved in time polynomial in the number of vertices and edges in the input graph, $\cGraph$. The runtime for both the rounding step (\Cref{alg:round}) and the loop from step \ref{step:begin_reduction} to \ref{step:end_reduction} in \Cref{alg:main} is $\mathcal{O}(|\cFNE|)$. Therefore, \Cref{alg:main} is polynomial in the number of items, platforms, and edges in the input graph, $\cGraph$.
\end{proof}

\section{Cost-Approximate Matching for Laminar Groups}\label{sec:laminar}
We prove \Cref{thm:laminar} in this section. A family of sets, say $S$, is laminar if, for every pair of sets $X, Y \in S$, one of the following holds: $X \subseteq Y$ or $Y \subseteq X$ or $X \cap Y = \phi$. Please refer to \Cref{fig:laminar_groups} for an example of a set of groups that follows a laminar structure. This example will serve as a running illustration to explain the changes required in the flow network construction when the groups are laminar. Note that, under a laminar structure, all the groups of $\cGraph$ would form a forest. We modify the flow network, $\cFN$, when the groups are laminar and not disjoint, and denote the modified flow network by $\Gamma_L$. The following modifications to $\Gamma$ are required when the groups are laminar.

\begin{figure}[t]
\centering
    \begin{tikzpicture}[
      every node/.style = {rectangle, draw},
      level distance = 1.5cm,
      level 1/.style = {sibling distance=3cm},
      level 2/.style = {sibling distance=1.5cm},
      level 3/.style = {sibling distance=1.5cm},
      edge from parent/.style = {draw, -latex}
    ]
    
    \node (group1) {$\cG_1$}
      child {
        node {$\cG_2$} 
          child { node {$\cItem_1$} }
          child { node {$\cItem_2$} }
      }
      child {
        node {$\cG_3$} 
          child { node {$\cItem_3$} }
          child {
            node {$\cG_4$} 
              child { node {$\cItem_5$} }
              child { node {$\cItem_6$} }
          }
          child { node {$\cItem_4$} }
      };
    
    \end{tikzpicture}
    
    \caption{Laminar structure of the set of groups $\{\cG_1, \cG_2,\cG_3,\cG_4\}$, where, $\cG_1 = \{\cItem_1, \cItem_2, \cItem_3,\cItem_4,\cItem_5,\cItem_6 \}, \cG_2 = \{\cItem_1, \cItem_2 \}, \cG_3 = \{\cItem_3,\cItem_4,\cItem_5,\cItem_6 \}$, and $\cG_4 = \{\cItem_5, \cItem_6 \}$.}
    \label{fig:laminar_groups}
\end{figure}

\begin{enumerate}[label=\arabic*., left=0pt, labelsep=1em, itemsep=0pt, wide=0pt, align=parleft]
    \item Let the maximum depth of any tree in the forest formed by the laminar structure of all the groups in $\cGraph$ be denoted by $d$. The flow network $\cFN_L$, which is a multi-layered, directed bipartite multi-graph, would now have $d$ levels in the group layer instead of just $1$. The Item and Platform layers remain unchanged. Copies of platforms under groups that do not have any other groups as their subsets are positioned in the first level of the group layer, directly adjacent to the Item layer. If a group, say $j \in [\tau]$, contains $t$ other groups, copies of platforms under $j$ will be placed in the $(t+1)^{th}$ level.
    For instance, in \Cref{fig:laminar_groups}, the groups $\cG_4$ and $\cG_2$ would be placed in the first level, $\cG_3$ in the second level, and $\cG_1$ in the third level (refer \Cref{fig:laminar_flow}). This structure ensures that groups in each level of the group layer are disjoint. \label{lam_depth}
    \item \label{group_layer_construction} Any item, say $\cItem$, in the Item layer will have edges connecting it to the copies of platforms in its neighborhood only within the smallest group to which it belongs. For instance, in \Cref{fig:laminar_groups}, suppose $(\cItem_5,p) \in E$ for some platform $p \in P$. The item $\cItem_5$ belongs to groups $\cG_4, \cG_3,$ and $\cG_1$, however, it will have edges to the copy of $p$ only under $\cG_4$ in the group layer, that is, to $\cPC{}{4}{}$ since $\cG_4$ does not contain any smaller groups within itself as shown in \Cref{fig:laminar_flow}.
    \item Let the forest formed by the groups in the input graph, $\cGraph$, be denoted by $LF$. Consider the copy of an arbitrary platform, $p \in P$, under an arbitrary group, say $j \in g(p)$. The node $\cPC{}{j}{}$ receives incoming edges from all its child nodes in $LF$ and has an equal number of outgoing edges directed towards its parent node in $LF$. For instance, see Figures \ref{fig:laminar_groups} and \ref{fig:laminar_flow}. Consequently, each node in the group layer has the following two types of edges:
    \begin{enumerate}[label=\arabic*., left=0pt, labelsep=1em, itemsep=0pt, wide=0pt, align=parleft]
        \item {\bf Item Edges:} These edges originate from the item layer. Each edge has a cost of $0$ and a capacity of $1$.
        \item {\bf Subset Edges:} These edges represent connections from a platform copy in one group to the corresponding platform copy in its parent group within $LF$. Specifically, consider an arbitrary group $j \in [\tau]$, and an arbitrary child group of $j$ in $LF$, say
        $j' \in [\tau]$. For instance, in \Cref{fig:laminar_groups}, group $3$ is a parent of group $4$. For any platform, $p \in P$, we introduce the edges $\cEdge{\cPC{}{j'}{}}{\cPC{}{j}{}}$ and make $\Delta_{j'}(p)$-1 additional copies of it. Each edge has a capacity of $1$, and the cost of the $k^{th}$ copy denoted by $\cEdge{\cPC{}{j'}{}}{\cPC{}{j}{}}^k$ is given by $\cGCF{j'}{k} - \cGCF{j'}{k-1}$.
    \end{enumerate}
\end{enumerate}

\begin{figure}[t]
\centering
    \includegraphics[scale=0.25]{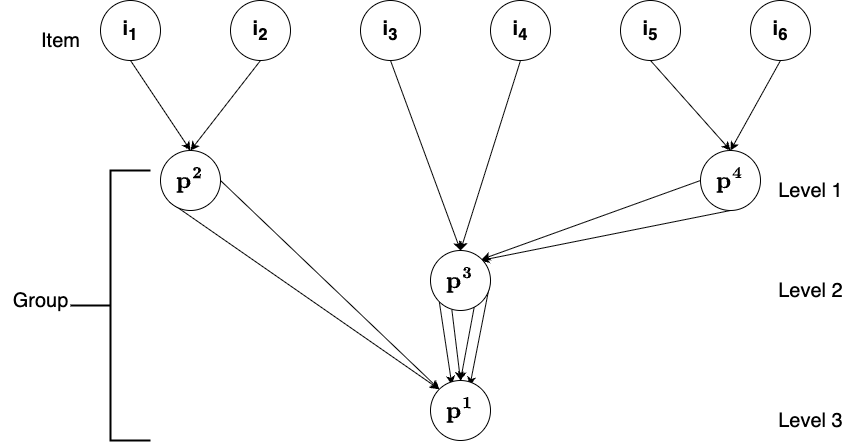}
    \caption{Item and group layers for the platform $p$ in the flow network, $\cFN_L$, constructed from the input graph, $\cGraph$, with $N(p) = \{i_1,i_2,i_3,i_4,i_5,i_6\}$, and the groups following the laminar structure shown in \Cref{fig:laminar_groups}.}\label{fig:laminar_flow}
\end{figure}

\subsection{Detailed Proof of Correctness}
Here, we prove \Cref{thm:laminar}. First we list certain properties of the modified flow network. In the modified flow network structure, despite the addition of new edges between different levels in the group layer, the following properties remain valid:
\begin{enumerate}[label=\arabic*., left=0pt, labelsep=1em, itemsep=0pt, wide=0pt, align=parleft]
    \item {\bf Platform Copy Edge Balancing:} Any platform copy in, say, the $j^{th}$ group still receives a total of $\Delta_j(p)$ incoming edges. These edges may come either from item edges originating at the item layer or from subset edges coming from its child nodes in the forest $LF$. Similarly, $\cPC{}{j}{}$ maintains $\Delta_j(p)$ outgoing edges, which are directed either to its parent node in $LF$ or to the first level of the platform layer.
    \item {\bf Item Connectivity to Final Level:} Every item in the item layer is connected to the last $(d^{th})$ level of the group layer either directly through item edges or indirectly through a combination of item edges and subset edges that traverse the laminar structure (see \Cref{fig:laminar_flow} for reference).
    \item {\bf Leveled Disjoint Group Structure:} The disjoint property of groups is preserved within each level of the Group layer (see \Cref{fig:laminar_flow}).
    
\end{enumerate}
The transition to laminar group structures involves rearranging the Group layer into multiple levels while maintaining disjointness at each level. Additionally, introducing subset edges acts as connectors between the levels without altering the essential properties of the flow network outside the group layer. 

As a result, {\bf \Cref{lem:sequential}} continues to hold: The sequential flow property holds, as the cost and capacity of edges between the Group and Platform layers are unchanged, and the additional subset edges only serve as connectors without disrupting the sequential order of flow through the levels.

The correctness of the algorithm for laminar group structures now hinges on verifying \Cref{lem:mapping}, and on establishing an upper bound on the cost violation and the size of the modified network $\cFN_L$, analogous to \Cref{lem:cost}) and \Cref{lem:runtime} in the disjoint case. We begin by proving \Cref{lem:mapping} under the modified flow network structure.

\begin{proof}[Proof of \Cref{lem:mapping} for Laminar Groups]
    By construction, the one-to-one correspondence between any edge, say $\cEdge{i}{\cPC{}{j}{}}$ from the item to the group layer and the edge $\cEdge{i}{p}$ is still preserved. This holds because any item in the item layer will have edges connecting it to the copies of platforms in its neighborhood only within the smallest group to which it belongs (refer \Cref{fig:laminar_flow}). Therefore, there is a matching with utility at least $\hat{\ell}$ iff there is a flow with utility at least $\hat{\ell}$.
    
    Consider an edge $\cEdge{i}{p}$ such that $i$ belongs to $\beta$ groups. When groups overlap, the critical distinction is how costs are computed compared to the disjoint case. The inclusion of the edge $\cEdge{i}{p}$ increments the count of items matched to $p$ from all its $\beta$ groups. Therefore, the total cost associated with the edge $\cEdge{i}{p}$ must account for the convex cost functions associated with all these $\beta$ groups. Let $\gamma_p^j$ be the number of items matched to the platform $p$, from group, $j$ where $j$ is one of the $\beta$ groups, $i$ belongs to, before the edge, $\cEdge{i}{p}$ was included in the matching. Let $\gamma_p$ be the number of items matched to $p$ before $\cEdge{i}{p}$ was included in the matching. Then $\cEdge{i}{p}$ adds the following total cost:
    \begin{equation}\label{eq:laminar_cost}
        \cCF{\gamma_p+1} - \cCF{\gamma} + \sum_{j: i \in \cG_j} \cGCF{j}{\gamma_p^j+1} - \cGCF{j}{\gamma_p^j}
    \end{equation}
    The cost due to platform associated convex cost functions are not impacted by overlapping groups.
    The edge $\cEdge{i}{p}$ in a matching on $\cGraph$ corresponds to the edge $\cEdge{i}{\cPC{}{j}{}}$ on the corresponding flow, say $F$ on $\cFN_L$ where $\cG_j$ is the smallest group $i$ belongs to. A $s-t$ path, $\pi$, passing through the item $i$ in the item layer, traverses the group layer via the copies of the platform $p$ associated with all the groups to which $i$ belongs, and continues through both $\cPC{}{}{(l_1)}$ and $\cPC{}{}{(l_2)}$ of the platform layer. By the construction of $\cFN_L$, this path goes through the copy of the platform in all the groups to which $i$ belongs. Since the groups form a laminar structure, the edge in $\pi$ that goes from a child node in $LF$, say group $j$ to its parent node in $LF$ say, group $j'$ would have a cost of $\cGCF{j}{k} - \cGCF{j}{k-1}$ if it was the $k^{th}$ copy of the edge. Similarly, for all the other groups to which $i$ belongs and the edges in $L_2$ between $\cPC{}{}{(l_1)}$ and $\cPC{}{}{(l_2)}$. Therefore, by \Cref{lem:sequential}, the total cost added by including the path, $\pi$, in the flow is 
    $$\cCF{\gamma_p+1} - \cCF{\gamma} + \sum_{j: i \in \cG_j} \cGCF{j}{\gamma_p^j+1} - \cGCF{j}{\gamma_p^j}$$
    which is the same as \Cref{eq:laminar_cost}.
\end{proof}

Having established \Cref{lem:mapping} for laminar groups, it follows directly that \Cref{claim:mapping} also holds in this setting.

We now state and prove a lemma analogous to \Cref{lem:cost} from the disjoint group case.

\begin{lemma}\label{lem:lam_cost}
    \Cref{alg:main} returns a matching with total utility at least $\ell$ and cost at most $OPT + \max_p[\nabla\cCF{\sigma_p} + \sum_{r=1}^{d}\max_{j_r} \nabla\cGCF{j_r}{\nu_p^{j_r}}] - \min_p[\nabla\cCF{\sigma_p} + \min_j \nabla\cGCF{j}{\nu_p^j}]$ when the groups are laminar.
\end{lemma}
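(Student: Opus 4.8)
The plan is to mirror the proof of \Cref{lem:cost}, replacing the two-cost-edge structure of the disjoint network by the multi-level structure of $\cFN_L$. First I would observe that the structural results underlying the rounding step carry over verbatim: $\cFN_L$ is still a flow network in which every edge effectively has unit capacity (the edges $\cEdge{p}{t}$ of capacity $\Delta(p)$ split into $\Delta(p)$ parallel unit-capacity edges, and every subset edge has capacity $1$), so its flow polytope $Q$ is integral. Hence \Cref{obs:int_points} and \Cref{lem:frac_paths} hold unchanged: the optimal fractional solution $x^*$ of \Cref{LP_int} is a convex combination $x^* = \lambda x_0 + (1-\lambda)x_1$ of two adjacent integral flows differing by a single cycle, so $x^*$ is fractional on at most two $s$--$t$ paths $\pi_1,\pi_2$. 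Exactly as before, \Cref{alg:round} rounds the higher-utility path up and the lower-utility path down, and \Cref{obs:int_points} guarantees that the rounded flow $x_1$ has utility at least $\ell$, so by \Cref{claim:mapping} (which holds for laminar groups, as established together with \Cref{lem:mapping}) the returned matching is feasible.

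The new ingredient is a laminar analogue of \Cref{obs:path_cost}. I would prove that in $\cFN_L$ every $s$--$t$ path carries exactly one utility-bearing edge (the item-to-group edge into the smallest group containing the item) and at most $d+1$ cost-bearing edges: as the path climbs the laminar forest from the smallest containing group toward the root it traverses one subset edge per intermediate level and one $L_1$ edge at the top, each contributing an increment of the form $\nabla\cGCF{j_r}{\nu_p^{j_r}}$, and finally one $L_2$ edge contributing $\nabla\cCF{\sigma_p}$. Because the \emph{Leveled Disjoint Group Structure} property guarantees that the groups occupying any single level of the group layer are pairwise disjoint, the chain meets at most one group per level; hence there are at most $d$ group-cost edges and exactly one platform-cost edge per path.

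With this in hand the cost accounting is identical in form to \Cref{lem:cost}. Writing $C(\pi)$ for the total cost of a path and letting $x^*_{\pi_1}$ be the fractional value on the path that is rounded down, the rounding changes the objective by $[C(\pi_{\mathrm{high}}) - C(\pi_{\mathrm{low}})]\, x^*_{\pi_1} \le C(\pi_{\mathrm{high}}) - C(\pi_{\mathrm{low}})$, using $x^*_{\pi_1}\le 1$ from constraint \ref{LP_int-2}. I would then bound the higher-utility path by letting it visit, at each level $r$, the costliest available group, giving $C(\pi_{\mathrm{high}}) \le \max_p[\nabla\cCF{\sigma_p} + \sum_{r=1}^{d}\max_{j_r}\nabla\cGCF{j_r}{\nu_p^{j_r}}]$, and bound the lower-utility path from below by a single cheapest group increment, giving $C(\pi_{\mathrm{low}}) \ge \min_p[\nabla\cCF{\sigma_p} + \min_j\nabla\cGCF{j}{\nu_p^j}]$. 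Since \Cref{LP_int} relaxes the integer program, $\sum_{e\in\cFNE}w_e x^*_e \le OPT$, and combining the three estimates yields the claimed additive bound.

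The step I expect to be the crux is passing from ``the chain meets at most $d$ levels'' to the sum over all $d$ levels in the upper bound, and from ``the chain contains at least one group edge'' to the single $\min_j$ term in the lower bound. Both steps rely on controlling the sign and ordering of the increments $\nabla\cGCF{j}{\cdot}$: for an increasing group-cost function these increments are non-negative, so padding the upper bound with the unused (non-negative) levels is valid, and a chain sum is bounded below by any one of its terms, hence by $\min_j\nabla\cGCF{j}{\nu_p^j}$. I would handle the general (possibly decreasing) convex case via \Cref{lem:sequential}: because flow fills the parallel copies of each group edge in nondecreasing order of cost, the increments actually charged along the two fractional paths are the marginal ones at the current load levels $\nu_p^{j_r}$, which is exactly what the $\nabla$-notation in the bound records. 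Making this precise, so that the per-level maxima and the single minimum dominate the realized chain costs, is the main technical obstacle; once settled, the remainder is a routine repetition of \Cref{lem:cost}.
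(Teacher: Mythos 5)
Your proposal follows essentially the same route as the paper's proof: carry \Cref{obs:int_points} and \Cref{lem:frac_paths} over to $\cFN_L$ unchanged, round the two fractional paths via \Cref{alg:round}, charge the higher-utility path at most one group increment per level of the laminar forest (at most $d$ of them, giving $\sum_{r=1}^{d}\max_{j_r}\nabla\cGCF{j_r}{\nu_p^{j_r}}$) plus one platform increment, lower-bound the rounded-down path by a single group increment plus a platform increment, and combine with $\sum_{e}w_e x^*_e \le OPT$ and $x^*_{\pi_1}\le 1$. The only differences are presentational: you state the laminar analogue of \Cref{obs:path_cost} explicitly and flag the sign/padding subtlety for possibly decreasing increments, a point the paper's proof passes over silently (it directly asserts the per-level bound and takes $\pi_1$ to traverse a single group level in the worst case), so your caveat is a fair refinement rather than a divergence in method.
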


\begin{proof}
    Let the two fractional paths in an optimal solution of \Cref{LP_int} (\Cref{lem:frac_paths}), say $x^*$, be denoted by $\pi_1$ and $\pi_2$, and let $\pi_2$ be the path with higher utility compared to $\pi_1$ w.l.o.g. Note that now the costs are not just on two edges of any $s-t$ path since there are weighted edges between the different levels within the group layer itself. In the worst case $\pi_2$ would have $d$ levels in the group layer of $\cFN_L$ (see \Cref{lam_depth}) and $\pi_1$ would have $1$ level.
    Let the costs on the edges going from level $i$ to $i+1$ on the path $\pi_2$ be denoted by $w_{\pi_2^i}$ and the costs on the two edges \Cref{obs:path_cost} on the path $\pi$ be denoted by $w_{\pi_1^1}$ and $w_{\pi_1^1}$. For $i = d, d+1$, $w_{\pi_2^i}$ denotes the cost associated with edges from the group layer to the first level of the platform layer (when $i = d$) and from the first level to the second level of the platform layer (when $i = d+1$), respectively.
    Let the fractional values assigned to $\pi_1$ and $\pi_2$ be $x^*_{\pi_1}$ and $x^*_{\pi_2}$ respectively. By \Cref{obs:int_points}, $x^*$ can be written as a convex combination of two adjacent integral flows, and by \Cref{lem:frac_paths}, these integral points only differ by the cycle formed by combining $\pi_1$ and $\pi_2$. Therefore, $x^*_{\pi_2} = 1- x^*_{\pi_1}$. Since the rounding step (\Cref{alg:round}) used in \Cref{alg:main} rounds up $x^*_{\pi_2}$ and rounds down $x^*_{\pi_1}$, the change in cost is going to be 
    $$\left[\sum_{i=1}^{d+1}w_{\pi_2^i} - (w_{\pi_1^1}+w_{\pi_1^2})\right]\cdot x^*_{\pi_1}$$
    
    Without loss of generality, let $x_0$ be the integer solution with both utility and cost lower than $x^*$. Since $x_1$ is on the other side of the intersection point of the constraint \ref{LP_int-3} and the flow polytope $Q$, its utility and cost must be higher than both $x_0$ and $x^*$. Therefore, $x_1$ satisfies constraint \ref{eq:Constraint}. The rounding step in \Cref{alg:main} computes $x_1$ from $x^*$. Therefore, the flow computed in step \ref{step:round} of \Cref{alg:main} returns an integer flow that satisfies the utility constraint. Hence, by \Cref{claim:mapping},
    \Cref{alg:main} returns a matching with total utility at least $\ell$ and cost at most $\sum_{e \in E_{\cFN_L}}w_ex^*_e + \left[\sum_{i=1}^{d+1}w_{\pi_2^i} - (w_{\pi_1^1}+w_{\pi_1^2})\right]\cdot x^*_{\pi_1}$. Let the set of all possible paths in $\cFN_L$ be denoted by $\Pi$. By constraint \ref{LP_int-2}, $x^*_{\pi_1} \leq 1$, then,
    $$\left[\sum_{i=1}^{d+1}w_{\pi_2^i} - (w_{\pi_1^1}+w_{\pi_1^2})\right]\cdot x^*_{\pi_1} \leq \max_{\pi \in \Pi}\left(\sum_{i=1}^{d+1}w_{\pi_2^i}\right) - \min_{\pi \in \Pi}(w_{\pi^1}+w_{\pi^2})$$
    Since \Cref{LP_int} is an LP relaxation of the integer programming version of our problem, $\sum_{e \in E_{\cFN_L}}w_ex^*_e \leq OPT$. By construction of the network,
    $$\min_j \nabla\cGCF{j}{\nu_p^j} \leq w_{\pi_1^1} \text{ and } \min_j \nabla\cCF{\sigma_p}{} \leq w_{\pi_1^2}$$
     
    Similarly, let $j_r$ denote a group at the $r^{th}$ level, then
    $$ \sum_{i=1}^{d} w_{\pi_2^i} \leq \sum_{r=1}^{d}\max_{j_r} \nabla\cGCF{j_r}{\nu_p^{j_r}} \text{ and }  w_{\pi_2^{d+1}} \leq \max_j \nabla\cCF{\sigma_p}{}$$ 
    Therefore, \Cref{alg:main} returns a matching with total utility at least $\ell$ and cost at most 
    $$OPT + \max_p\big[\nabla\cCF{\sigma_p} + \sum_{r=1}^{d}\max_{j_r} \nabla\cGCF{j_r}{\nu_p^{j_r}}\big] - \min_p[\nabla\cCF{\sigma_p} + \min_j \nabla\cGCF{j}{\nu_p^j}]$$
\end{proof} 

The final step in proving \Cref{thm:laminar} is to compute the total number of nodes and edges in the modified network $\cFN_L$, which allows us to establish that \Cref{alg:main} runs in polynomial time even when the groups follow a laminar structure.

\begin{lemma}\label{lem:lam_runtime}
    The flow network $\cFN_L$ has at most $n+2m+|E|+2$ nodes and at most
    $n + m + (d+2)|E|$ edges, where $d$ is the maximum number of groups an item can belong to.
\end{lemma}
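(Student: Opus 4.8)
The plan is to count nodes and edges layer by layer, exactly as in the proof of \Cref{lem:runtime}, isolating the only structural change: the group layer now spans $d$ levels and is stitched together by subset edges, while the Item, Platform, and dummy parts are untouched. For the nodes, the Item layer contributes $n$, the two Platform levels contribute $2m$, and $s,t$ contribute $2$, independently of the group structure. The group layer contributes one copy $\cPC{}{j}{}$ for each pair $(p,j)$ with $j \in g(p)$, so its size is $\sum_{p\in P}|g(p)|$, and bounding this sum is the only place where laminarity enters the node count. I would bound it by charging each copy $\cPC{}{j}{}$ to a witness item of $N(p)\cap \cG_j$ and the group $j$, giving an injection of such copies into pairs (edge $\cEdge{i}{p}\in E$, group $j\ni i$); since by construction each item lies in at most $d$ groups, this yields $\sum_{p}|g(p)|\le d|E|$.

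For the edges I would partition $E_{\cFN_L}$ into six classes and count each. The source edges $\cEdge{s}{i}$ number $n$ and the sink edges $\cEdge{p}{t}$ number $m$. Because every item connects only to the copy of each neighbouring platform under its smallest group, each original edge $\cEdge{i}{p}\in E$ induces exactly one item edge, giving $|E|$ item edges; the $L_2$ edges are unchanged and again total $\sum_p \Delta(p)=|E|$. The two remaining classes — subset edges within the group layer and $L_1$ edges leaving the roots of $LF$ — carry the multiplicity. The key identity is that the total number of edges \emph{leaving} the group layer is $\sum_{p}\sum_{j\in g(p)}\Delta_j(p)=\sum_p\sum_{i\in N(p)}|\{j: i\in \cG_j\}|\le d|E|$, again because each item lies in at most $d$ groups. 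Among these, the $L_1$ edges are exactly those leaving root copies; since the root groups of $LF$ partition $\cItems$, for a fixed platform the root copies meeting $p$ partition $N(p)$, so the $L_1$ edges total exactly $\sum_p \Delta(p)=|E|$. Hence the subset edges number at most $d|E|-|E|=(d-1)|E|$, and summing the six classes gives $n+m+|E|+(d-1)|E|+|E|+|E|=n+m+(d+2)|E|$, matching the claimed edge bound.

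The main obstacle, and the step I would treat most carefully, is the group-layer accounting through the quantity $\sum_{p}\sum_{j\in g(p)}\Delta_j(p)$. Making it rigorous relies on two supporting facts that I would verify explicitly: first, the level-wise disjointness of the construction, which guarantees that each copy $\cPC{}{j}{}$ receives exactly $\Delta_j(p)$ incoming edges (item or subset) and emits exactly $\Delta_j(p)$ outgoing edges, so that the balance used above is exact; and second, that the root groups of $LF$ partition $\cItems$ (which holds since $\cItems=\cG_1\cup\cdots\cup\cG_{\tau}$), which is what cleanly separates the $L_1$ edges from the subset edges. I also expect the group-layer \emph{node} figure to be the delicate point: the same charging argument naturally yields $\sum_p|g(p)|\le d|E|$ rather than $|E|$, since chains in $LF$ restricted to $N(p)$ can make $|g(p)|$ exceed $\Delta(p)$, so I would scrutinize whether the stated node bound should in fact carry the factor $d$, exactly as the edge bound does; the polynomial-size conclusion needed for \Cref{thm:laminar} holds under either bound.
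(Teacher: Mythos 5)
Your edge count is correct and, at its core, is the paper's own argument organized differently. The paper charges each input edge $(i,p)\in E$ with $d$ edges of $\cFN_L$ (one item edge plus at most $d-1$ subset edges along the chain of groups containing $i$), obtaining $d|E|$, and then adds $2|E|$ for $L_1\cup L_2$; you instead count the item edges ($|E|$) separately and bound the subset and $L_1$ edges together by the total out-degree $\sum_{p}\sum_{j\in g(p)}\Delta_j(p)\le d|E|$ of the group-layer copies, splitting off $|L_1|=\sum_p\Delta(p)=|E|$ exactly, via the fact that the root groups of $LF$ partition $\cItems$ and hence partition each $N(p)$. Both decompositions yield $n+m+(d+2)|E|$. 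Your version actually supplies a detail the paper glosses over: the paper justifies the $2|E|$ term by saying ``$L_1$ and $L_2$ are untouched,'' but in the laminar network only root copies feed $L_1$, and the disjoint-case identity $\sum_{j\in g(p)}\Delta_j(p)=\Delta(p)$ no longer holds, so your root-partition argument is precisely what makes $|L_1|=|E|$ rigorous.

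On the node count you diverge from the paper, and your suspicion is justified. The paper's proof disposes of the nodes with the assertion that the count ``remains the same as in the disjoint group setting, since no additional copies of nodes \ldots are created,'' but the disjoint bound on the group layer in \Cref{lem:runtime}, namely $\sum_p|g(p)|\le\sum_p\Delta(p)=|E|$, relies on each item lying in exactly one group. Under laminarity a copy $\cPC{}{j}{}$ exists for every $j\in g(p)$, and a single item $i\in N(p)$ lying in a chain of $d$ nested groups already forces $|g(p)|=d>\Delta(p)=1$: with one item, one platform, and $\cG_1\supset\cG_2\supset\cdots\supset\cG_d\ni i$, the network has $d$ group-layer nodes while the lemma's bound permits only $|E|=1$. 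So the stated node bound $n+2m+|E|+2$ fails for $d\ge 2$; the correct bound, which your charging argument proves, is $n+2m+d|E|+2$ (one may also use $\min\{d|E|,\tau m\}$ for the group layer). As you observe, this correction is harmless downstream: the network is still of polynomial size, so the runtime claim in \Cref{thm:laminar} goes through verbatim with the repaired bound.
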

\begin{proof}
    The number of nodes remains the same as in the disjoint group setting, since no additional copies of nodes in the input graph, $\cGraph$, are created when constructing $\cFN_L$. The only difference in edges is within the group layer, where items still have only one edge to the group layer but there are edges within the group layer between the different levels. Let the maximum number of groups an item can belong to be $d$ that is the maximum depth of any tree in the laminar forest (\Cref{fig:laminar_groups}) is at most $d$.
    So, an item introduces one edge to the group layer and $d - 1$ more edges within the group layer
    that pass through children groups in the laminar tree. Therefore, the maximum number of edges introduced by the items is $\sum_{i \in \cItems}d\cdot\Delta(i) = d\cdot|E|$.
    The edge sets $L_1$ and $L_2$ are untouched in $\cFN_L$, therefore the total number of edges in $L_1 \cup L_2$ stays $2|E|$ as shown in the proof of \Cref{lem:runtime}. So, the total number of edges in $\cFN$ is at most $n + m + (d+2)|E|$.
\end{proof}

\begin{proof}[Proof of \Cref{thm:laminar}]
From \Cref{lem:mapping} and \Cref{lem:lam_cost}, \Cref{alg:main} returns a matching with utility at least $\ell$ and cost at most $OPT + \max_p\big[\nabla\cCF{\sigma_p} + \sum_{r=1}^{d}\max_{j_r} \nabla\cGCF{j_r}{\nu_p^{j_r}}\big] - \min_p[\nabla\cCF{\sigma_p} + \min_j \nabla\cGCF{j}{\nu_p^j}]$. The total number of variables in \Cref{LP_int} is $|\cFNE|$ and the total number of constraints is $n + |\cFNV| + |\cFNE| - 1$. Therefore, by \Cref{lem:lam_runtime}, the total number of variables in \Cref{LP_int} is $n + m + 3|E|$ and the total number of constraints is at most $3n+3m+4|E|+1$. Therefore, \Cref{LP_int} can be solved in time polynomial in the number of vertices and edges in the input graph, $\cGraph$. The runtime for both the rounding step (\Cref{alg:round}) and the loop from step \ref{step:begin_reduction} to \ref{step:end_reduction} in \Cref{alg:main} is $\mathcal{O}(|\cFNE|)$. Therefore, \Cref{alg:main} is polynomial in the number of items, platforms, and edges in the input graph, $\cGraph$.
\end{proof}

\section{Hardness for non-laminar groups}\label{sec:hardness}

Our hardness reduction is inspired by a similar reduction from \cite{GroupFairMatching}, where they  consider a problem that is similar to $\cProblem$, but instead of convex cost functions, there is a strict quota constraint, and their problem is called Classified Maximum Matching (CMM). 

\begin{proof}[Proof of Theorem~\ref{thm:hardness}]
Our reduction from maximum independent set problem is described below.
\begin{itemize}[left=0pt, labelsep=1em, itemsep=0pt, wide=0pt, align=parleft]
    \item {\em The maximum independent set problem:} Given a graph $G=(V,E)$ and a number $\ell$, the problem is to determine if $G$ contains an {\em independent set} of size at least $\ell$ i.e. whether there exists  $S\subseteq V, |S|\geq \ell$ such that for any $u,v\in S$, $(u,v)\notin E$. 
    \item We create an instance of $\cProblemU$ by creating one item for each $v\in V$, and two platforms $a,b$ that have edges from all the items. All the edges from the items to the platform $a$ have utility $1$, and the edges to platform $b$ have utility $0$.
    We create a set of groups such that there is a group corresponding to each edge $(u,v)\in E$, consisting of the items that correspond to its end-points. The cost function for the groups is $0$ for platform $b$, and we have the following convex cost function for each group $g$ for platform $a$:
    $$f_a^g(x) = \left\{
\begin{array}{ll}
      0 &\text{ if $x \leq 1$} \\
      x &\text{if $x > 1$} \\
\end{array} 
\right\}$$
\item The platform costs are set as follows:
$$f_a(x) = \left\{
\begin{array}{ll}
      0 &\text{ if $x \leq \ell$} \\
      x &\text{if $x > \ell$} \\
\end{array} 
\right\}$$
$$f_b(x) = \left\{
\begin{array}{ll}
      0 &\text{ if $x \leq |V|-\ell$} \\
      x &\text{if $x > |V| - \ell$} \\
\end{array} 
\right\}$$

\end{itemize} 
Clearly, if $G$ has an independent set $S$ of size at least $\ell$, then we can match the $\ell$ items corresponding to vertices in $S$ to platform $a$, and the remaining items to platform $b$, yielding a matching of utility $\ell$ and total cost $0$. This is because group-related costs are non-zero only for platform $a$, and no two vertices in $S$ share an edge in $G$, so no group incurs a cost. Conversely, suppose there exists a matching with utility $\ell$ and cost $0$. Since each item has utility $1$ on the edges to only platform $a$, there must be exactly $\ell$ items matched to $a$. Group costs are non-trivial only for platform $a$, therefore, $0$ cost means no group contributes to the cost, which implies no two items from the same group are matched to platform $a$. Thus, the $\ell$ items matched to platform $a$ must correspond to an independent set in $G$ of size at least $\ell$, with the remaining items matched to platform $b$.

Since any non-zero cost in the matching corresponds to a violation of the independent set condition, a matching of cost greater than zero implies that $G$ does not contain an independent set of size at least $\ell$. This gives us the inapproximability result.

\end{proof}

\section{Limitations and Future Work}
Our model currently assumes disjoint or laminar group structures, which capture many practical scenarios. However, extending to arbitrary overlapping groups may require significant restructuring of the network construction or may not fit into this framework at all. We also assume that the utility of assigning an item to a platform is known, which holds in some applications. In settings like federated learning, where utility is revealed post-training, predictive models can be used to estimate utility, with updates incorporated after each round.


\section{Experiments}\label{sec:exp}
This section evaluates \Cref{alg:main} on a real-world dataset, the movielens $100k$ dataset \footnote{ https://grouplens.org/datasets/movielens/100k/} comprising $100,000$ user ratings. 
We assess the practical performance and efficiency of our algorithm and compare it against a natural greedy baseline. Our evaluation focuses on the cost-approximation quality of the integral solution returned by our algorithm, relative to both the greedy output and the optimal value of \Cref{LP_int}. Empirically, \Cref{alg:main} consistently outperforms the greedy algorithm and achieves a cost that is only marginally higher than the Linear Programming (LP) optimum. Since the LP value serves as a lower bound on the true optimum ($OPT$), this suggests that our algorithm is near-optimal in practice.

We also examine the effect of running \Cref{alg:main} with all cost functions set to $-\log(x)$, which encourages uniform allocation (similar to the Nash Social Welfare objective). In this setting, we observe that the distribution of items across platforms is nearly uniform. A plot of the variance in the number of items assigned to each platform confirms this behavior, showing a nearly constant line as seen in figures \ref{fig:top_10}, \ref{fig:top_20}, \ref{fig:top_50}, and \ref{fig:top_75}. We also present a group-wise distribution of matched items in each platform in figures \ref{fig:top_10_m}, \ref{fig:top_20_m}, \ref{fig:top_50_m}, and \ref{fig:top_75_m}. These distributions closely mirror the group proportions in the neighborhoods of the corresponding platforms in the input bipartite graph. The group-wise distribution of items in each platform in the input bipartite graph is presented in figures \ref{fig:top_10_m}, \ref{fig:top_20_m}, \ref{fig:top_50_m}, and \ref{fig:top_75_m}.

\subsection{Dataset}\label{dataset}
The movielens $100k$ dataset \cite{dataset} (Harper et al.) consists of $100,000$ ratings (on a scale from $1$ to $5$) from $943$ users for $1682$ movies. In addition to rating data, the dataset includes demographic information about the users, such as age, gender, occupation, and zip code.

In the context of organizing a movie test screening, the goal would be to select an audience that shares an interest in a particular genre while also ensuring demographic diversity. This would enable collecting feedback that reflects the perspectives of various demographic groups, leading to more representative insights. We use user ratings on older movies of similar genres available in the dataset as a proxy for estimating which users are more likely to enjoy the movie being screened.

In our model, users correspond to items and movies to platforms. Each user rating is treated as the utility of the edge between the user (item) and the movie (platform) they rated. We partition users into groups based on age brackets: $15-29$ (group $1$), $30-44$ (group $2$), $45-59$ (group $3$), $60-74$ (group $4$) and $<15$ (group $5$). To promote both diversity of age groups within each movie screening and equity in the number of users matched to each movie, we employ convex cost functions, specifically, $x^2$.

We evaluate \Cref{alg:main} on the top $10$, $20$, $50$, $75$, and $100$ most watched movies from the dataset. We use the number of ratings as a proxy for the number of views for each movie.

\begin{algorithm}[t]
  \caption{Greedy Algorithm}
  \label{alg:gsc}
  \KwIn{$\cI$}
  \KwOut{A matching, $\cM$, with utility at least $\ell$}
  \BlankLine
  $\cM = \emptyset$, $\text{util} = 0$ \\
  \For{$p\in P$}{
    $\sigma_p = 0$ \\
    \For {$j \in g(p)$}{
      $\nu_p^j = 0$
    }
  }
  \While{$\text{util} < \ell$}{
    $\rho^* = 0$, $(i^*, p^*) = \emptyset$, $j^* = 0$ \\
    \For{$(i, p) \in E \setminus \cM$ \text{ such that }$i \in \cItems_j$}{
        $\rho  = \frac{u_{ip}}{\cCF{\sigma_p + 1} - \cCF{\sigma_p} +  \cGCF{j}{\nu_p^j+1} - \cGCF{j}{\nu_p^j}}$ \\
        \If{$\rho > \rho^*$}{
            $\rho^* = \rho$, $(i^*, p^*) = (i,p)$, $j^* = j$
        }}
    $\cM = \cM \cup (i^*, p^*)$\\
    $\text{util} = \text{util} + u_{\cEdge{i^*}{p^*}}$ \\
    $\sigma_{p^*} = \sigma_{p^*}+1$, $\nu_{p^*}^{j^*} = \nu_{p^*}^{j^*}+1$
    }
  \Return $\cM$
\end{algorithm}

\subsection{Experimental Setup and Results}\label{exp_setup_results}
We implement our algorithm in Python $3.10$, and all the experiments are run using Google Colab notebook on a virtual machine with Intel(R) Xeon(R) CPU $@$ $2.20$GHz and $13$GB RAM. In tables \ref{tab:top10}, \ref{tab:top20}, \ref{tab:top50_glop}, \ref{tab:top75_glop}, and \ref{tab:top100_glop}, we report a comparison of the total cost (rounded to the nearest integer) incurred by our algorithm with that of the optimal solution to \Cref{LP_int}, the greedy algorithm (\Cref{alg:gsc}), and a naive greedy baseline. The convex cost function used for every platform and every group-platform pair is $x^2$. These comparisons are conducted under different utility threshold values: $500$, $1000$, and $1500$. Additionally, we report the execution times of \Cref{alg:gsc} and \Cref{alg:main}. The value labeled `Size' in each table caption refers to the total number of edges in the input bipartite graph.

The naive greedy, which selects the highest utility edge iteratively until the utility threshold is met, performs the worst in terms of cost, as expected. A more meaningful comparison is between \Cref{alg:gsc} and \Cref{alg:main}. 
The greedy algorithm \Cref{alg:gsc} selects edges based on the highest utility-to-cost ratio (see \Cref{alg:gsc} for pseudocode). As anticipated, the run time of \Cref{alg:gsc} is lower than that of \Cref{alg:main}, since our algorithm requires constructing a flow network and solving an LP. 
However, our algorithm consistently achieves lower total cost across all runs, except for the first two utility thresholds in \Cref{tab:top20}. Notably, the maximum deviation from the optimal cost returned by \Cref{LP_int} is $26$ units for \Cref{alg:main}, compared to $124$ units for \Cref{alg:gsc}. 

In figures \ref{fig:top_10}, \ref{fig:top_20}, \ref{fig:top_50}, and \ref{fig:top_75}, we observe an almost uniform distribution of items across platforms when all the cost functions are set to $-\log(x)$ for the top $10$, $20$, $50$, and $75$ movies datasets respectively. This is expected, as the resulting objective resembles a Nash Social Welfare kind of objective discussed in \Cref{prop:NSW}. 
We also plot the group-wise distribution of matched items for each platform and observe that they closely reflect the group proportions in the neighborhoods of the corresponding platforms in the input bipartite graph. The group-wise distributions in the input bipartite graph are shown in \Cref{fig:top_10_g} (top 10 movies), \Cref{fig:top_20_g} (top 20 movies), \Cref{fig:top_50_g} (top 50 movies), and \Cref{fig:top_75_g} (top 75 movies), while the group-wise distributions of matched items are shown in \Cref{fig:top_10_m}, \Cref{fig:top_20_m}, \Cref{fig:top_50_m}, and \Cref{fig:top_75_m}.

\begin{table}[ht!]
  \caption{Top 10 most-watched movies \\ Size : 4863} 
  \label{tab:top10}
  \centering
  \small
    \begin{tabular}{l l l l l l l}
      \toprule
      \multirow{2}{*}{\shortstack[l]{Utility \\  Threshold}}
        & \multicolumn{4}{c}{Cost} 
        & \multicolumn{2}{c}{Run-time (s)}\\  
        \cmidrule(r){2-5} \cmidrule(l){6-7} 
        & Naive Greedy 
        & \Cref{alg:gsc}
        & \Cref{LP_int}
        & \Cref{alg:main}
        &\Cref{alg:gsc}
        & \Cref{alg:main} \\
      \midrule
      500  & 3384    
           & 1816 
           & 1788 
           & 1802  
           & 0.757
           & 1.545 \\
      1000 & 11126
           & 6810
           & 6768
           & 6794 
           & 0.77
           & 1.306 \\
      1500 & 23992 
           & 15224 
           & 15100
           & 15100
           & 1.111
           & 1.331 \\
      \bottomrule
    \end{tabular}%
\end{table}

\begin{table}[ht!]
  \caption{Top 20 most-watched movies \\ Size: 8716} 
  \label{tab:top20}
  \small
  \centering
    \begin{tabular}{l l l l l l l}
      \toprule
      \multirow{2}{*}{\shortstack[l]{Utility \\  Threshold}}
        & \multicolumn{4}{c}{Cost} 
        & \multicolumn{2}{c}{Run-time (s)}\\  
       \cmidrule(r){2-5} \cmidrule(l){6-7} 
        & Naive Greedy 
        & \Cref{alg:gsc}
        & \Cref{LP_int}
        & \Cref{alg:main}
        &\Cref{alg:gsc}
        & \Cref{alg:main} \\
      \midrule
      500   & 1698  
          & 1076 
           & 1072 
           & 1088  
           & 0.736
           & 2.169 \\
      1000  & 5780  
            & 3698 
           & 3684
           & 3704 
           & 1.698 
           & 3.163 \\
      1500  & 11834 
            & 8030 
           & 7965
           & 7986 
           & 1.902
           & 2.335 \\
      \bottomrule
    \end{tabular}%
\end{table}

\begin{table}[ht!]
  \caption{Top 50 most-watched movies \\ Size: 17841} 
  \label{tab:top50_glop}
  \small
  \centering
    \begin{tabular}{l l l l l l l}
      \toprule
        \multirow{2}{*}{\shortstack[l]{Utility \\  Threshold}}
        & \multicolumn{4}{c}{Cost} 
        & \multicolumn{2}{c}{Run-time (s)}\\  
        \cmidrule(r){2-5} \cmidrule(l){6-7} 
        & Naive Greedy 
        & \Cref{alg:gsc}
        & \Cref{LP_int}
        & \Cref{alg:main}
        &\Cref{alg:gsc}
        & \Cref{alg:main} \\

      \midrule
      500    & 1070  
             & 700    
           & 700  
           & 700  
           & 1.499
           & 6.443 \\
      1000  & 3072  
            & 1890 
           & 1870
           & 1876 
           & 3.873
           & 5.185 \\
      1500  & 6292  
            & 3796 
           & 3772
           & 3772
           & 4.813
           & 5.854 \\
      \bottomrule
    \end{tabular}%
\end{table}

\begin{table}[ht!]
  \caption{Top 75 most-watched movies \\ Size: 24251} 
  \label{tab:top75_glop}
  \small
  \centering
    \begin{tabular}{l l l l l l l}
      \toprule
      \multirow{2}{*}{\shortstack[l]{Utility \\  Threshold}}
        & \multicolumn{4}{c}{Cost} 
        & \multicolumn{2}{c}{Run-time (s)}\\  
        \cmidrule(r){2-5} \cmidrule(l){6-7} 
        & Naive Greedy 
        & \Cref{alg:gsc}
        & \Cref{LP_int}
        & \Cref{alg:main}
        &\Cref{alg:gsc}
        & \Cref{alg:main} \\
      \midrule
      500    & 920  
             & 650   
           & 650
           & 650 
           & 2.331
           & 9.263 \\
      1000  & 2498  
            & 1550
           & 1550
           & 1550
           & 4.345
           & 9.648 \\
      1500  & 4840  
            & 2882
           & 2857
           & 2868
           & 6.291
           & 9.866 \\
      \bottomrule
    \end{tabular}%
\end{table}

\begin{table}[ht!]
  \caption{Top 100 most-watched movies \\ Size: 29931} 
  \label{tab:top100_glop}
  \small
  \centering
    \begin{tabular}{l l l l l l l}
      \toprule
      \multirow{2}{*}{\shortstack[l]{Utility \\  Threshold}}
        & \multicolumn{4}{c}{Cost} 
        & \multicolumn{2}{c}{Run-time (s)}\\  
        \cmidrule(r){2-5} \cmidrule(l){6-7} 
        & Naive Greedy 
        & \Cref{alg:gsc}
        & \Cref{LP_int}
        & \Cref{alg:main}
        &\Cref{alg:gsc}
        & \Cref{alg:main}\\
      \midrule
      500    & 840  
             & 600    
           & 600  
           & 600  
           & 2.877
           & 12.601 \\
      1000  & 2230  
            & 1400
           & 1400
           & 1400
           & 6.644
           & 12.742 \\
      1500  & 4208 
            & 2412
           & 2402
           & 2402
           & 7.571
           & 12.910 \\
      \bottomrule
    \end{tabular}%
\end{table}

\begin{figure}[ht!]
\centering
\includegraphics[scale=0.7]{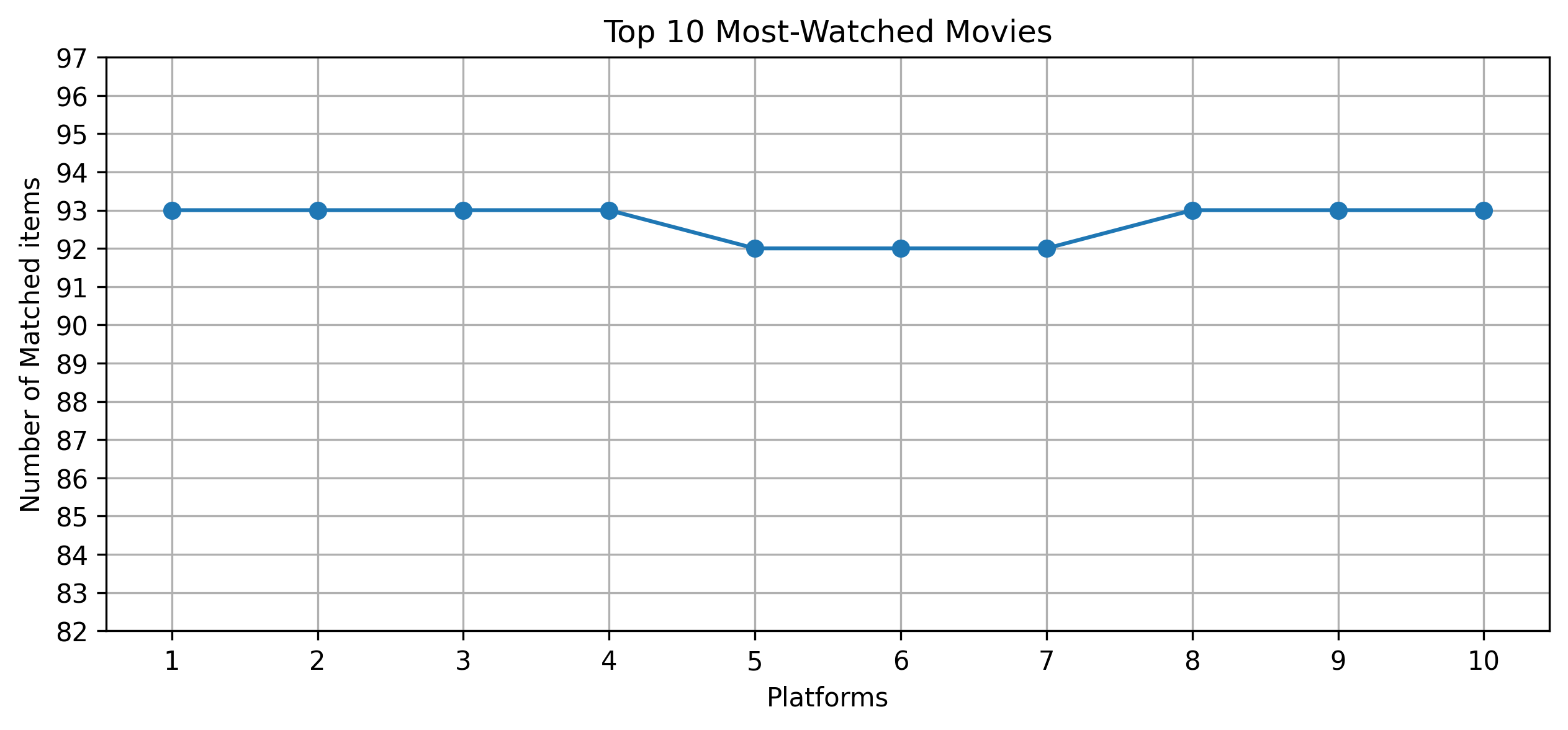}
\caption{Variance of the total number of items matched to platforms in the Top 10 most watched movies dataset}
\label{fig:top_10}
\end{figure}

\begin{figure}[ht!]
\centering
\includegraphics[scale=0.7]{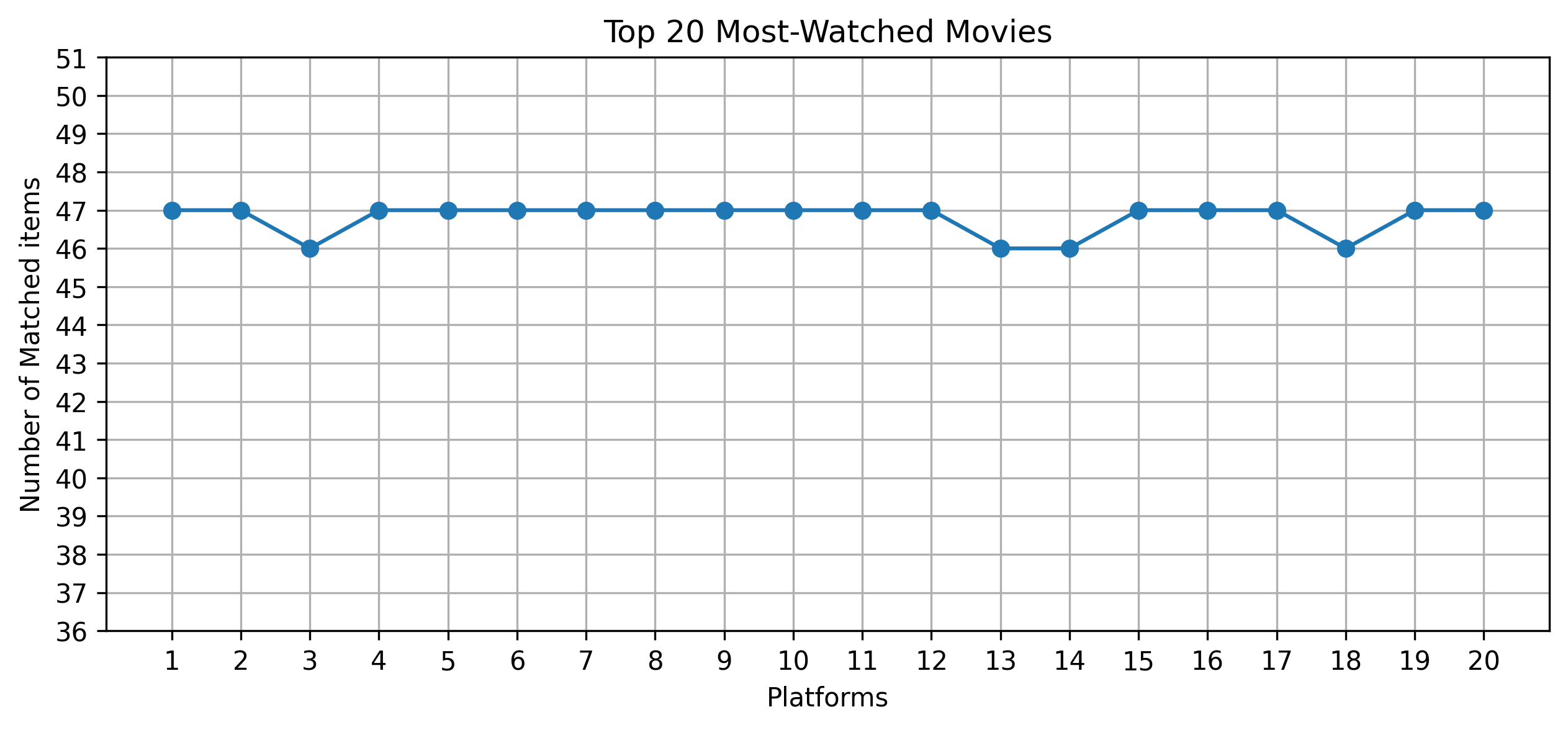}
\caption{Variance of the total number of items matched to platforms in the Top 20 most watched movies dataset}
\label{fig:top_20}
\end{figure}

\begin{figure}[ht!]
\centering
\includegraphics[scale=0.47]{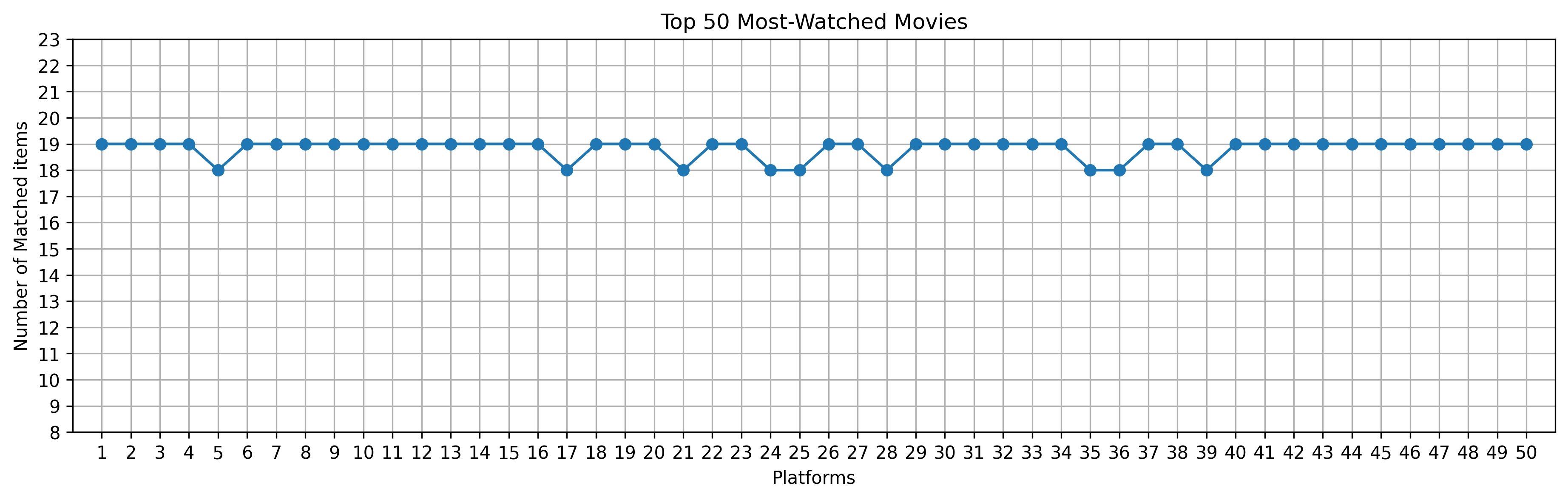}
\caption{Variance of the total number of items matched to platforms in the Top 50 most watched movies dataset}
\label{fig:top_50}
\end{figure}

\begin{figure}[ht!]
\centering
\includegraphics[scale=0.35]{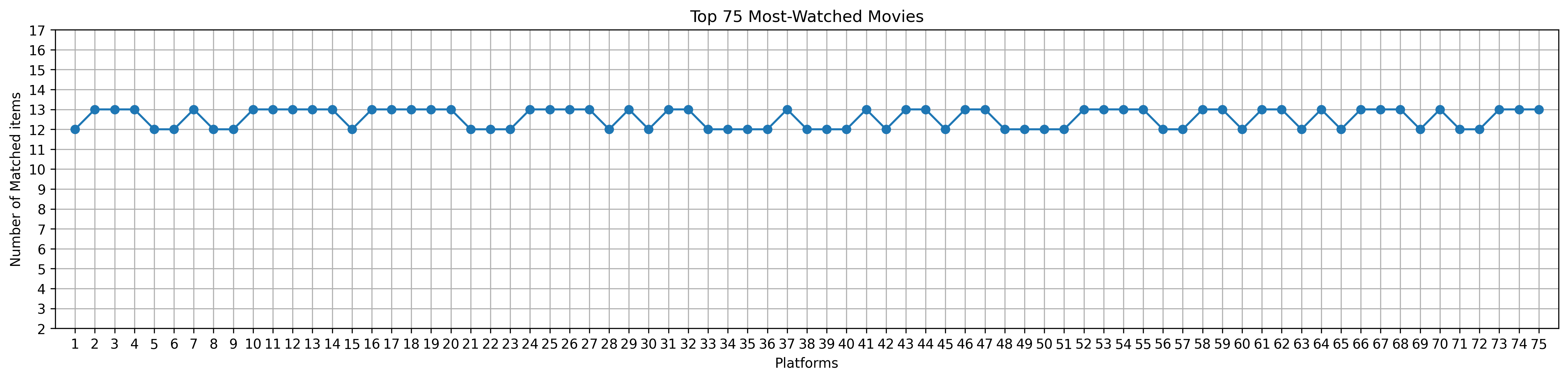}
\caption{Variance of the total number of items matched to platforms in the Top 75 most watched movies dataset}
\label{fig:top_75}
\end{figure}

\begin{figure}[ht!]
\centering
\includegraphics[scale=0.65]{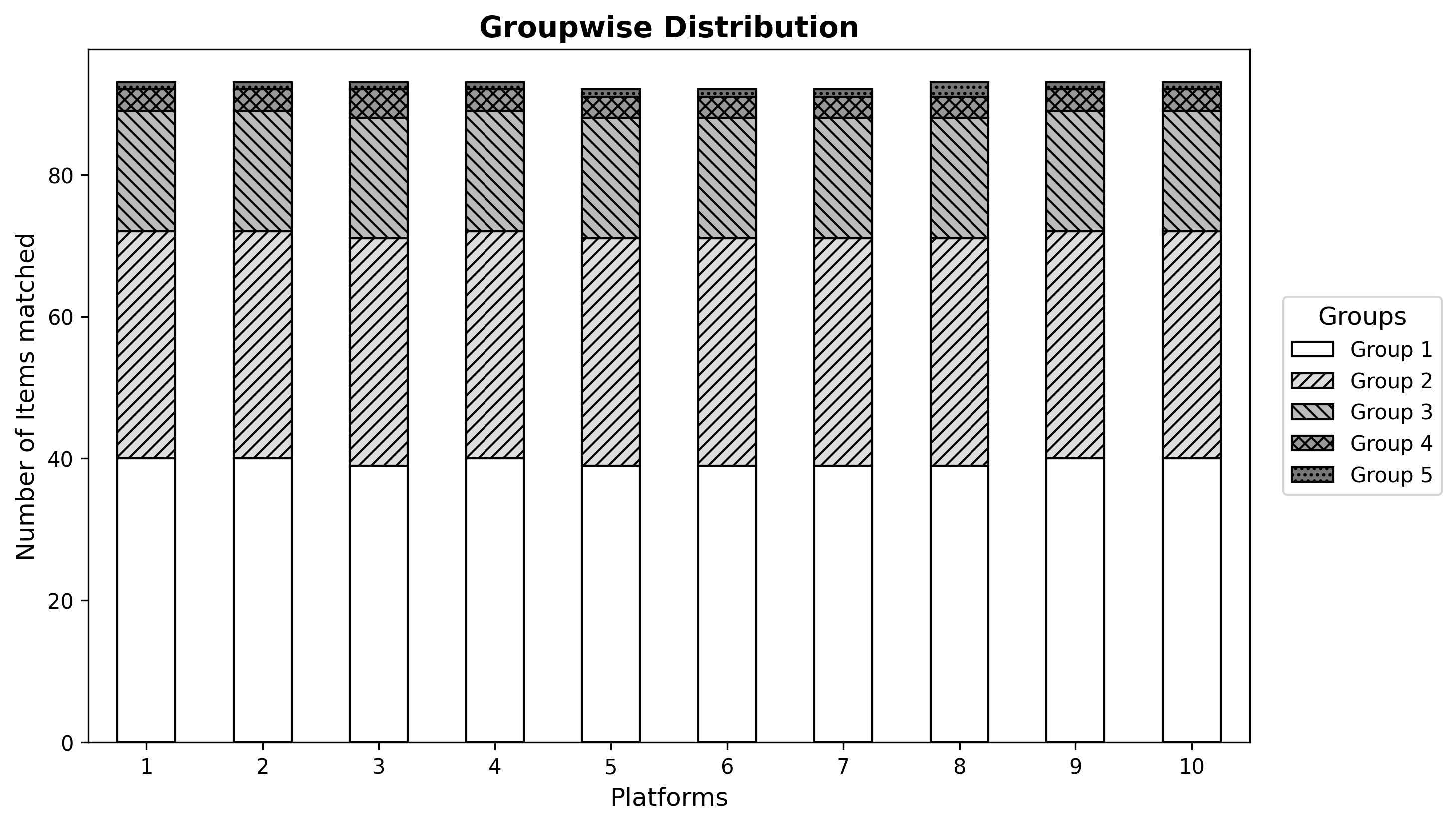}
\caption{Top $10$ most-watched movies}
\label{fig:top_10_m}
\end{figure}
\begin{figure}[ht!]
\centering
\includegraphics[scale=0.55]{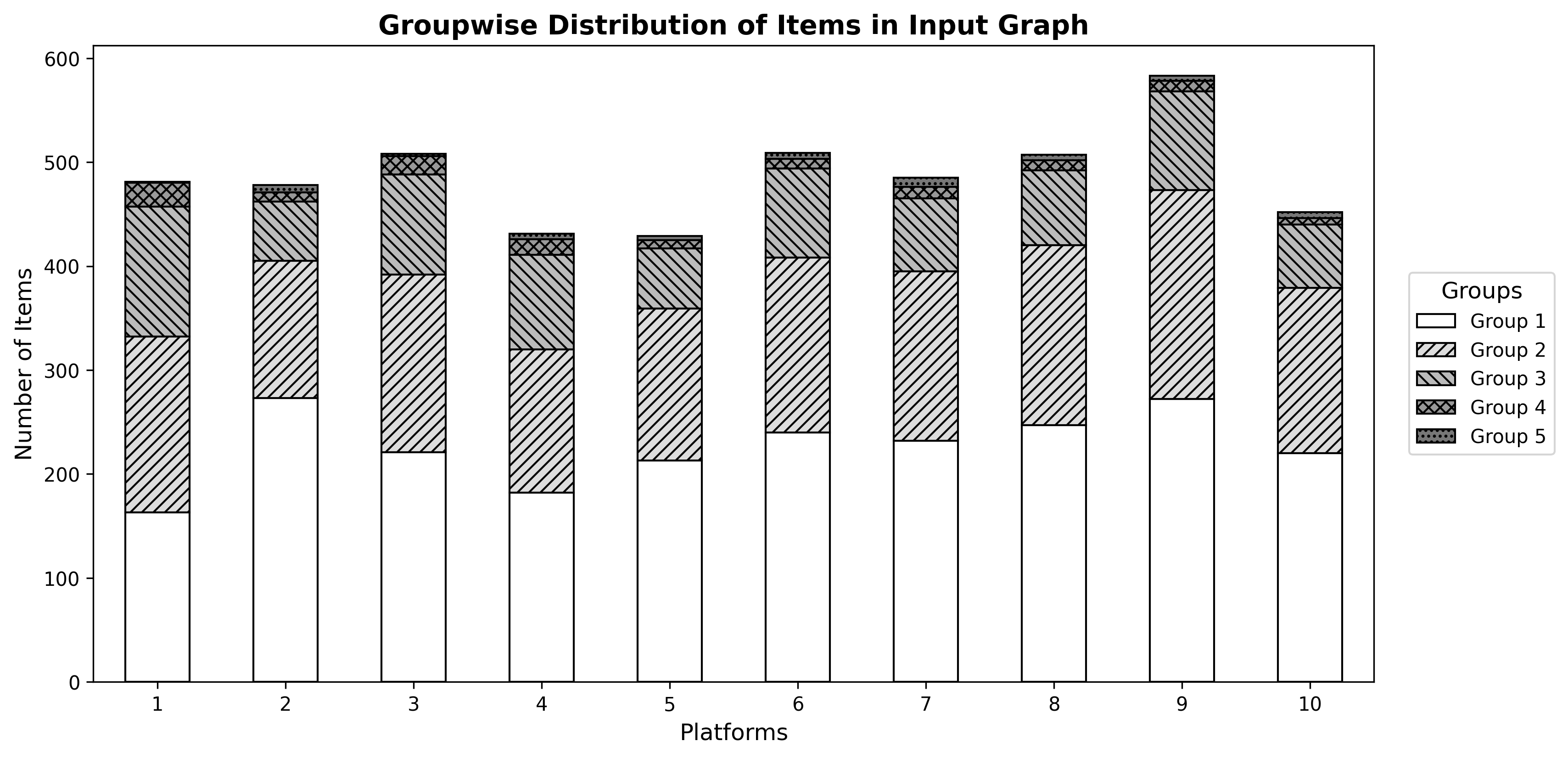}
\caption{Top $10$ most-watched movies}
\label{fig:top_10_g}
\end{figure}

\begin{figure}[ht!]
\centering
\includegraphics[scale=0.65]{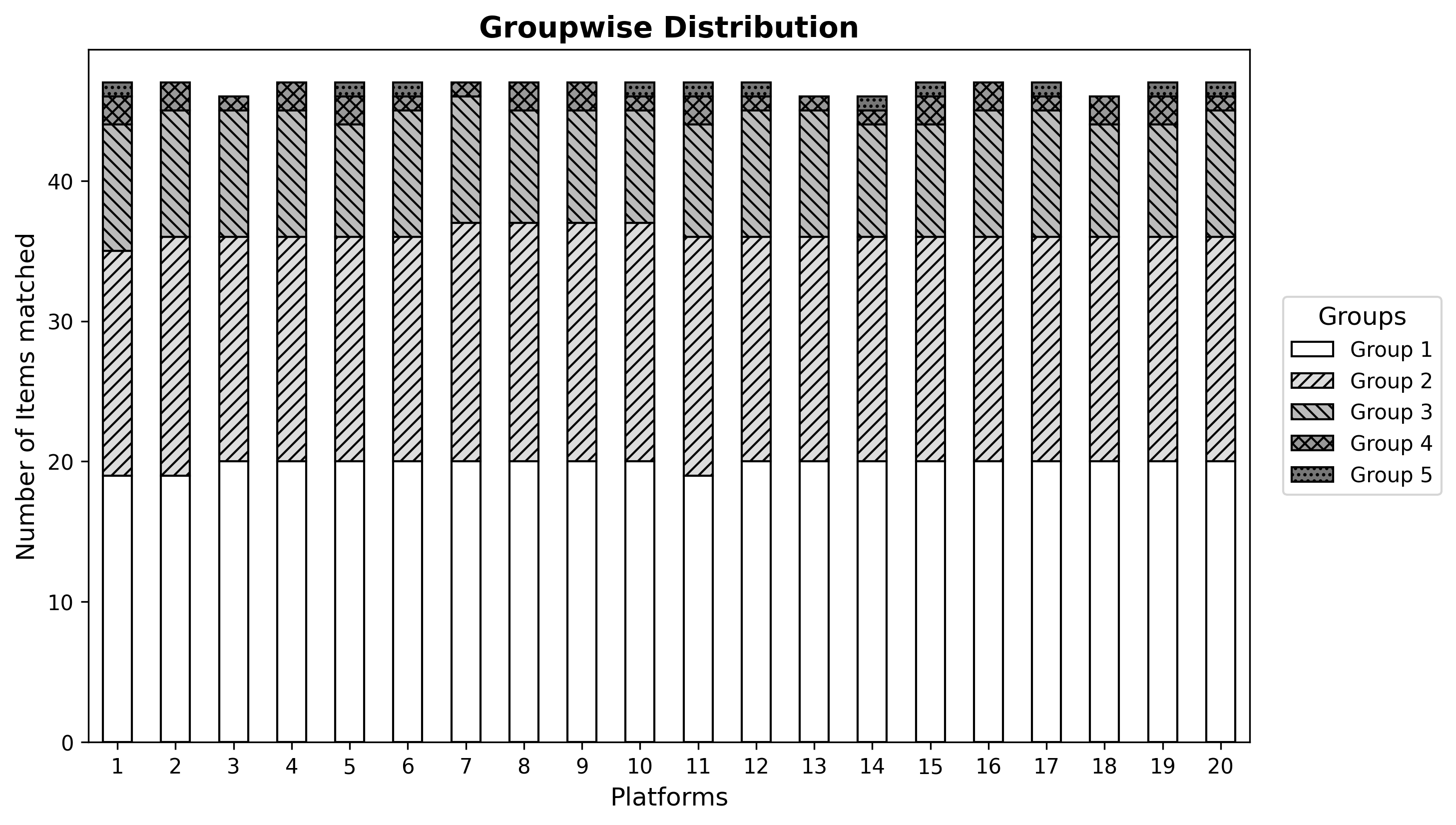}
\caption{Top $20$ most-watched movies}
\label{fig:top_20_m}
\end{figure}
\begin{figure}[ht!]
\centering
\includegraphics[scale=0.55]{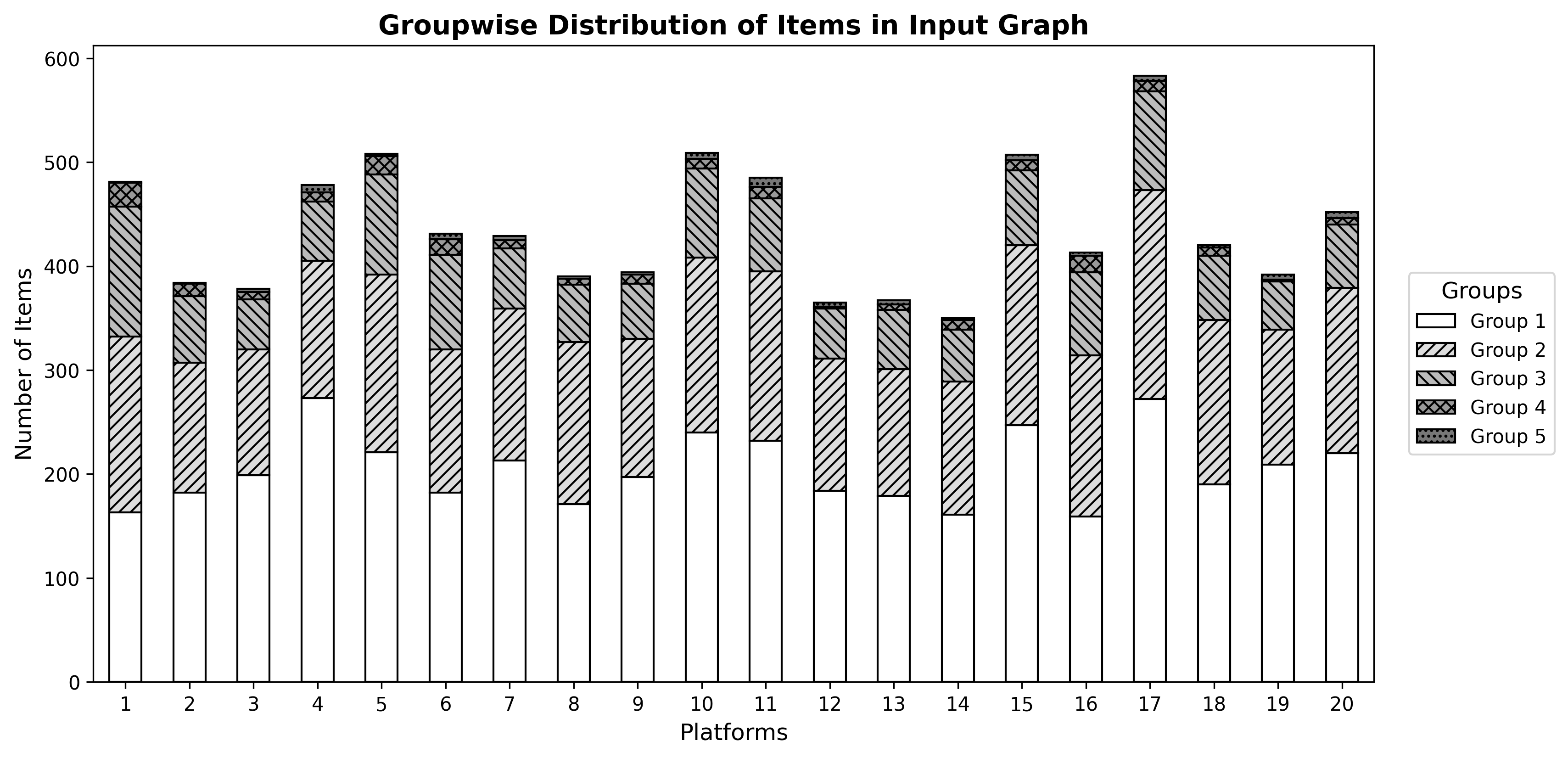}
\caption{Top $20$ most-watched movies}
\label{fig:top_20_g}
\end{figure}

\begin{figure}[ht!]
\centering
\includegraphics[scale=0.38]{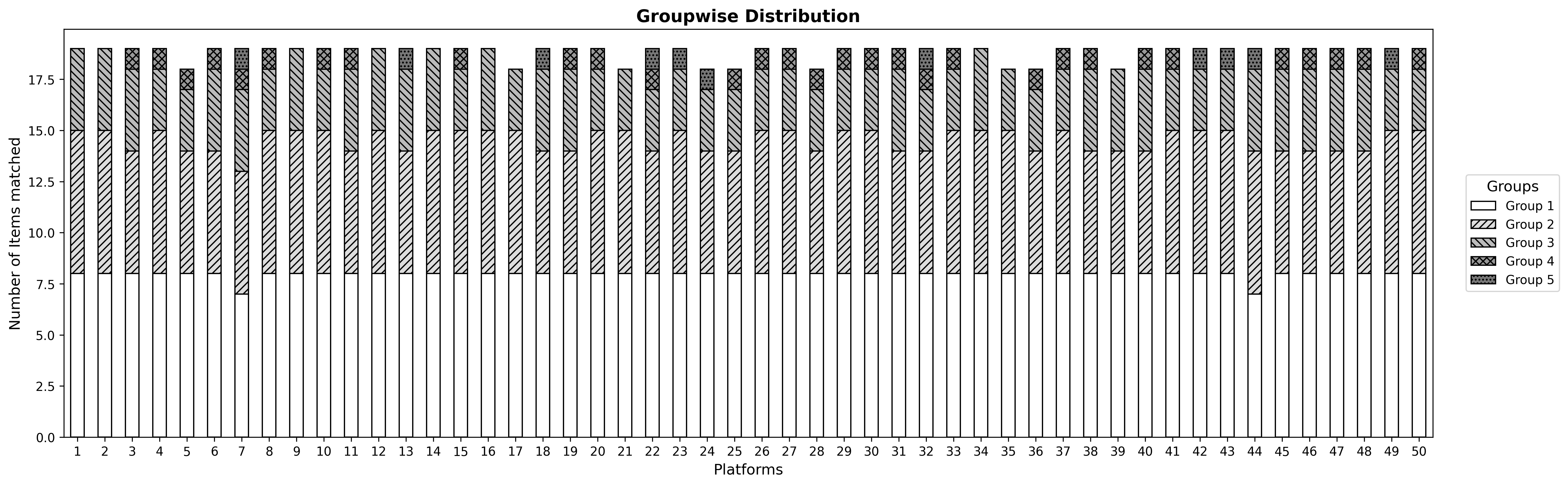}
\caption{Top $50$ most-watched movies}
\label{fig:top_50_m}
\end{figure}
\begin{figure}[ht!]
\centering
\includegraphics[scale=0.38]{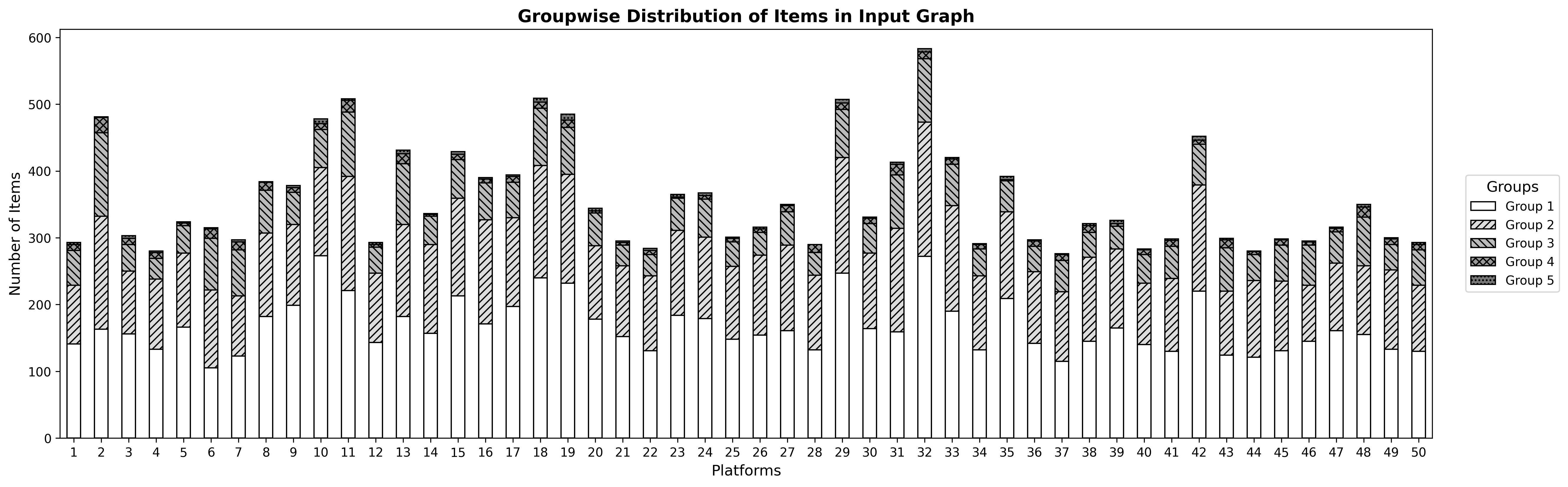}
\caption{Top $50$ most-watched movies}
\label{fig:top_50_g}
\end{figure}

\begin{figure}[ht!]
\centering
\includegraphics[scale=0.37]{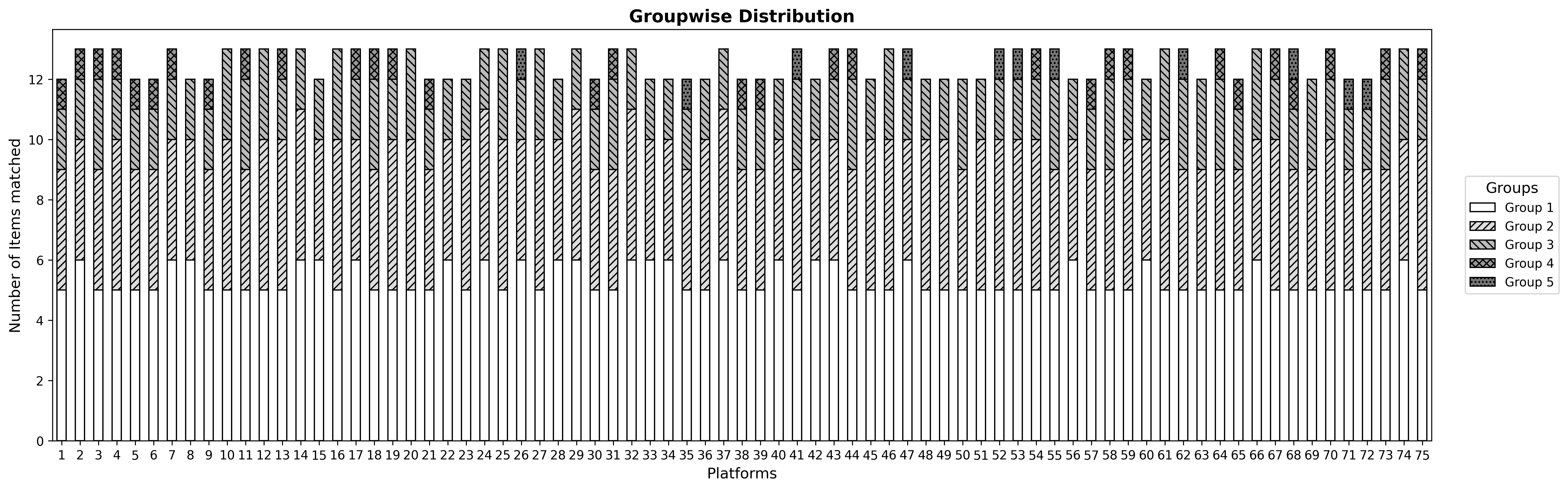}
\caption{Top $75$ most-watched movies}
\label{fig:top_75_m}
\end{figure}
\begin{figure}[ht!]
\centering
\includegraphics[scale=0.37]{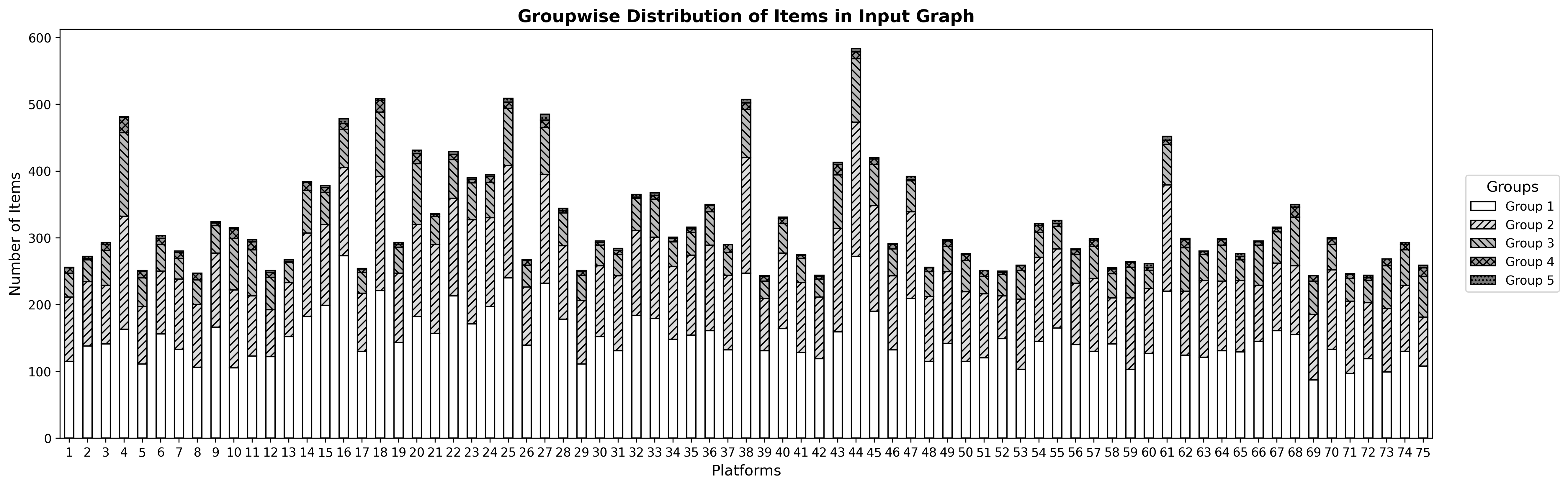}
\caption{Top $75$ most-watched movies}
\label{fig:top_75_g}
\end{figure}

\section{Uniform utilities}\label{sec:CCM}
This is a special case of the $\cProblemU$ problem defined in \Cref{sec:prelims} where all the edges have the same utility. Given an instance of this problem, $\cI = (\cGraph, \cG_1, \cG_2, \cdots, \cG_{\tau}, \ell, q, F)$, where $q$ is the utility in every edge, the problem reduces to finding a Min-Cost $\frac{\ell}{q}$-Flow on the network constructed using the steps in \Cref{subsec:Flow} or \Cref{sec:laminar} based on if the groups are disjoint or laminar. We use \Cref{alg:matchingSize} to compute such a matching.
We first restate \Cref{thm:CCM} then use \Cref{alg:main} to provide its proof.

\begin{algorithm}[t]
\caption{Min-Cost $\frac{\ell}{q}$-Flow($\cI$)}
\label{alg:matchingSize}
\nonl \textbf{Input} :  $\cI = (\cGraph, \cG_1, \cG_2, \cdots, \cG_{\tau}, \ell, q, F)$ \\
\nonl \textbf{Output} : An matching, $\cM$ such that $|\cM| \geq \frac{\ell}{q}$\\
$\cM = \phi$\\
Follow the steps in \Cref{subsec:Flow} or \Cref{sec:laminar} to construct the flow network $\cFN_U$ \\
Run the Shortest Successive Paths algorithm on $\cFN_U$ for a flow value of $\frac{\ell}{q}$ and store the flow in $FL$\\ \label{step:computeFlow}
$FE = \{\cEdge{\cItem}{w}: \cItem \in \cItems \text{ and } \cEdge{\cItem}{w} \text{ is part of the flow } FL\}$ \\
\For{$\cEdge{\cItem}{w} \in FE$}{ \label{step:Forloop}
    Let $p$ be the platform such that $w$ is some 
    copy of $p$ in the group layer.\\
    $\cM = \cM \cup \cEdge{\cItem}{p}$ \label{step:endLoop}
}
Return $\cM$
\end{algorithm}

\CCM*
\begin{proof}
    The Successive Shortest Path (SSP) algorithm is a classical polynomial-time method for solving the minimum-cost flow problem in network flow theory. It incrementally augments flow along the shortest (minimum-cost) paths until the desired flow value is reached.  For further details on the algorithm, readers may consult any standard reference, such as \cite{AhujaMO93}. From the Min Cost Max Flow integrality theorem \cite{AhujaMO93}, we know that when the capacities and costs in the flow network are integral, there is always an integral solution to the problem, therefore, when $\frac{\ell}{q}$ is an integer, \Cref{alg:matchingSize} returns a valid Matching with $0$ cost violation.
    The run-time complexity of the SSP algorithm is $O(|V_{\cFN_U}|^2|E_{\cFN_U}|B)$ where $V_{\cFN_U}$ is the set of nodes, $E_{\cFN_U}$ is the set of edges and $B$ is the upper bound on the largest capacity of any node. By \Cref{lem:runtime}, in our flow network, either $|V_{\cFN_U}| \leq n+2m+|E|+2$, $|E_{\cFN_U}| = n + m + 3|E|$ or $|E_{\cFN_U}| \leq n + m + (d+2)|E|$ (\Cref{lem:lam_runtime}) and $B = \max_{p \in P}\Delta(p)$, therefore, using the SSP algorithm, \Cref{alg:main} can run in $O((n+2m+|E|+2)^2(n + m + (d+2)|E|)(\max_{p \in P}\Delta(p)))$.
\end{proof}

\bibliographystyle{unsrt}
\bibliography{reference}

\begin{thebibliography}{10}

\bibitem{vishnoi_fairness}
L.~Elisa Celis, Damian Straszak, and Nisheeth~K. Vishnoi.
\newblock Ranking with fairness constraints.
\newblock In {\em ICALP}, 2017.

\bibitem{luss_leximin_fair}
Hanan Luss.
\newblock On equitable resource allocation problems: A lexicographic minimax approach.
\newblock {\em Operations Research}, 1999.

\bibitem{devanur_ranking}
Nikhil~R. Devanur, Kamal Jain, and Robert~D. Kleinberg.
\newblock Randomized primal-dual analysis of ranking for online bipartite matching.
\newblock In {\em SODA '13}, 2013.

\bibitem{CHRG16}
Matthew Costello, James Hawdon, Thomas Ratliff, and Tyler Grantham.
\newblock Who views online extremism? individual attributes leading to exposure.
\newblock {\em Comput. Hum. Behav.}, 2016.

\bibitem{halevi_fair_allocation}
Erel Segal-Halevi and Warut Suksompong.
\newblock Democratic fair allocation of indivisible goods.
\newblock {\em Artificial Intelligence}, 2019.

\bibitem{KMM15}
Matthew Kay, Cynthia Matuszek, and Sean~A. Munson.
\newblock Unequal representation and gender stereotypes in image search results for occupations.
\newblock In {\em Proceedings of the 33rd Annual ACM Conference on Human Factors in Computing Systems}, CHI ’15, 2015.

\bibitem{BCZSK16}
Tolga Bolukbasi, Kai{-}Wei Chang, James~Y. Zou, Venkatesh Saligrama, and Adam~Tauman Kalai.
\newblock Man is to computer programmer as woman is to homemaker? debiasing word embeddings.
\newblock In Daniel~D. Lee, Masashi Sugiyama, Ulrike von Luxburg, Isabelle Guyon, and Roman Garnett, editors, {\em Advances in Neural Information Processing Systems 29: Annual Conference on Neural Information Processing Systems 2016, December 5-10, 2016, Barcelona, Spain}, pages 4349--4357, 2016.

\bibitem{GroupIndividual}
Atasi Panda, Anand Louis, and Prajakta Nimbhorkar.
\newblock Individual fairness under group fairness constraints in bipartite matching - one framework to approximate them all.
\newblock In {\em {IJCAI-24}3}, 2024.

\bibitem{case-study}
{Cowen Institute}.
\newblock Case studies of school choice and open enrollment in four cities.
\newblock Technical report, Cowen Institute, 2011.

\bibitem{fair_clustering}
Suman~Kalyan Bera, Deeparnab Chakrabarty, Nicolas Flores, and Maryam Negahbani.
\newblock Fair algorithms for clustering.
\newblock In {\em NeurIPS}, 2019.

\bibitem{fair_ranking}
L.~Elisa Celis, Damian Straszak, and Nisheeth~K. Vishnoi.
\newblock Ranking with fairness constraints.
\newblock In {\em {ICALP}}, 2018.

\bibitem{fair_multivote}
L.~Elisa Celis, Lingxiao Huang, and Nisheeth~K. Vishnoi.
\newblock Multiwinner voting with fairness constraints.
\newblock In J{\'{e}}r{\^{o}}me Lang, editor, {\em {IJCAI}}, 2018.

\bibitem{one-sidedPreference_softquota}
Santhini~K. A., Raghu~Raman Ravi, and Meghana Nasre.
\newblock Matchings under one-sided preferences with soft quotas.
\newblock In {\em {IJCAI}, Macao, SAR, China}, 2023.

\bibitem{NSWFair}
Ioannis Caragiannis, David Kurokawa, Herv{\'{e}} Moulin, Ariel~D. Procaccia, Nisarg Shah, and Junxing Wang.
\newblock The unreasonable fairness of maximum nash welfare.
\newblock {\em {ACM} Trans. Economics and Comput.}, 7(3):12:1--12:32, 2019.

\bibitem{NSWFairSurvey}
Georgios Amanatidis, Georgios Birmpas, Aris Filos{-}Ratsikas, and Alexandros~A. Voudouris.
\newblock Fair division of indivisible goods: {A} survey.
\newblock In {\em {IJCAI}}, 2022.

\bibitem{GroupFairMatching}
Govind~S. Sankar, Anand Louis, Meghana Nasre, and Prajakta Nimbhorkar.
\newblock Matchings with group fairness constraints: Online and offline algorithms.
\newblock In {\em {IJCAI}}, 2021.

\bibitem{ApproxNSW}
Richard Cole and Vasilis Gkatzelis.
\newblock Approximating the nash social welfare with indivisible items.
\newblock {\em {SIAM} J. Comput.}, 47(3):1211--1236, 2018.

\bibitem{halabian_resourceallocation}
Hassan {Halabian}, Ioannis {Lambadaris}, and Chung-Horng {Lung}.
\newblock Optimal server assignment in multi-server parallel queueing systems with random connectivities and random service failures.
\newblock In {\em IEEE International Conference on Communications (ICC)}, 2012.

\bibitem{KidneyExchange}
Golnoosh Farnadi, William St-Arnaud, Behrouz Babaki, and Margarida Carvalho.
\newblock Individual fairness in kidney exchange programs.
\newblock {\em AAAI}, 35(13):11496--11505, May 2021.

\bibitem{abdulkadiroglu2003}
Atila Abdulkadiroglu and T.~S{\"o}nmez.
\newblock School choice: A mechanism design approach.
\newblock {\em The American Economic Review}, 2003.

\bibitem{CandidateSelection}
Xiaohui Bei, Shengxin Liu, Chung~Keung Poon, and Hongao Wang.
\newblock Candidate selections with proportional fairness constraints.
\newblock In {\em {AAMAS}}, 2020.

\bibitem{SummerInternship}
Haris Aziz, Anton Baychkov, and P{\'{e}}ter Bir{\'{o}}.
\newblock Summer internship matching with funding constraints.
\newblock In {\em {AAMAS}}, 2020.

\bibitem{Hospital-Resident}
Hiromichi Goko, Kazuhisa Makino, Shuichi Miyazaki, and Yu~Yokoi.
\newblock Maximally satisfying lower quotas in the hospitals/residents problem with ties.
\newblock In {\em 39th International Symposium on Theoretical Aspects of Computer Science, {STACS} 2022, March 15-18, 2022, Marseille, France (Virtual Conference)}, volume 219 of {\em LIPIcs}, pages 31:1--31:20. Schloss Dagstuhl - Leibniz-Zentrum f{\"{u}}r Informatik, 2022.

\bibitem{AzizBiroYokoo2022}
Haris Aziz, P{\'{e}}ter Bir{\'{o}}, and Makoto Yokoo.
\newblock Matching market design with constraints.
\newblock In {\em {AAAI} 2022, {IAAI} 2022,{EAAI} 2022}, 2022.

\bibitem{ClassifiedStableMatching}
Chien{-}Chung Huang.
\newblock Classified stable matching.
\newblock In Moses Charikar, editor, {\em Proceedings of the Twenty-First Annual {ACM-SIAM} Symposium on Discrete Algorithms, {SODA} 2010, Austin, Texas, USA, January 17-19, 2010}, pages 1235--1253. {SIAM}, 2010.

\bibitem{IsraeliGapYear}
Yannai~A. Gonczarowski, Noam Nisan, Lior Kovalio, and Assaf Romm.
\newblock Matching for the israeli: Handling rich diversity requirements.
\newblock In Anna Karlin, Nicole Immorlica, and Ramesh Johari, editors, {\em Proceedings of the 2019 {ACM} Conference on Economics and Computation, {EC} 2019, Phoenix, AZ, USA, June 24-28, 2019}, page 321. {ACM}, 2019.

\bibitem{IndiaReservation}
Tayfun Sönmez and M.~Bumin Yenmez.
\newblock Affirmative action in india via vertical, horizontal, and overlapping reservations.
\newblock {\em Econometrica}, 90(3):1143--1176, 2022.

\bibitem{Rank-MaximalAndPOpular}
Meghana Nasre, Prajakta Nimbhorkar, and Nada Pulath.
\newblock Classified rank-maximal matchings and popular matchings -- algorithms and hardness.
\newblock In Ignasi Sau and Dimitrios~M. Thilikos, editors, {\em Graph-Theoretic Concepts in Computer Science}, pages 244--257, Cham, 2019. Springer International Publishing.

\bibitem{DiversityMatching}
Govind~S. Sankar, Anand Louis, Meghana Nasre, and Prajakta Nimbhorkar.
\newblock Matchings under diversity constraints, 2022.

\bibitem{school_choice_expansion}
Federico Bobbio, Margarida Carvalho, Andrea Lodi, Ignacio Rios, and Alfredo Torrico.
\newblock Capacity planning in stable matching: An application to school choice.
\newblock In {\em {EC}}, page 295. {ACM}, 2023.

\bibitem{hopital-resident_expansion}
Kenshi Abe, Junpei Komiyama, and Atsushi Iwasaki.
\newblock Anytime capacity expansion in medical residency match by monte carlo tree search.
\newblock In {\em {IJCAI}}, 2022.

\bibitem{college_admissions_expansion}
Federico Bobbio, Margarida Carvalho, Andrea Lodi, and Alfredo Torrico.
\newblock Capacity expansion in the college admission problem.
\newblock {\em CoRR}, abs/2110.00734, 2021.

\bibitem{Hospital-resident_stable}
Keshav Ranjan, Meghana Nasre, and Prajakta Nimbhorkar.
\newblock Optimal capacity modification for strongly stable matchings.
\newblock {\em CoRR}, 2024.

\bibitem{EHLERS2014648}
Lars Ehlers, Isa~E. Hafalir, M.~Bumin Yenmez, and Muhammed~A. Yildirim.
\newblock School choice with controlled choice constraints: Hard bounds versus soft bounds.
\newblock {\em Journal of Economic Theory}, 153:648--683, 2014.

\bibitem{Yenmez}
Federico Echenique and M.~Bumin Yenmez.
\newblock How to control controlled school choice.
\newblock {\em American Economic Review}, 2015.

\bibitem{ijcai2020softconstraints}
Haris Aziz, Serge Gaspers, and Zhaohong Sun.
\newblock Mechanism design for school choice with soft diversity constraints.
\newblock In {\em {IJCAI}}, 2020.

\bibitem{overlappingSchoolChoice}
Ryoji Kurata, Naoto Hamada, Atsushi Iwasaki, and Makoto Yokoo.
\newblock Controlled school choice with soft bounds and overlapping types.
\newblock {\em J. Artif. Intell. Res.}, 2017.

\bibitem{Aziz_Biró_Yokoo_2022}
Haris Aziz, Péter Biró, and Makoto Yokoo.
\newblock Matching market design with constraints.
\newblock {\em Proceedings of the AAAI Conference on Artificial Intelligence}, 2022.

\bibitem{one-sided_cost-based}
Santhini~K. A., Govind~S. Sankar, and Meghana Nasre.
\newblock Optimal matchings with one-sided preferences: Fixed and cost-based quotas.
\newblock In {\em {AAMAS}, Auckland, New Zealand}, 2022.

\bibitem{two-sided_cost-based}
Girija Limaye and Meghana Nasre.
\newblock Optimal cost-based allocations under two-sided preferences.
\newblock In {\em {IWOCA}}, 2023.

\bibitem{AhujaMO93}
Ravindra~K. Ahuja, Thomas~L. Magnanti, and James~B. Orlin.
\newblock {\em Network flows: theory, algorithms, and applications}.
\newblock Prentice-Hall, Inc., 1993.

\bibitem{Gallo_Flow}
Giorgio Gallo and C.~Sodini.
\newblock Extreme points and adjacency relationship in the flow polytope.
\newblock {\em Calcolo}, 15:277--288, 09 1978.

\bibitem{dataset}
F.~Maxwell Harper and Joseph~A. Konstan.
\newblock The movielens datasets: History and context.
\newblock {\em ACM Trans. Interact. Intell. Syst.}, 5(4), December 2015.

\end{thebibliography}
\newpage
\end{document}